\definecolor{myBlue}{RGB}{25,132,232}
\definecolor{myGreen}{RGB}{164,180,43}
\definecolor{myPink}{RGB}{208,12,133}
\definecolor{myYellow}{RGB}{255,160,0}
\definecolor{myBrown}{RGB}{173,100,82}
\definecolor{myOrange}{RGB}{245,124,0}
\definecolor{myPurple}{RGB}{178,121,219}
\definecolor{myPlum}{RGB}{123,31,162}
\definecolor{myTeal}{RGB}{0,151,167}
\newtheorem{thm}{Theorem}[section] 
\newtheorem{cor}[thm]{Corollary}
\newtheorem{lemma}[thm]{Lemma}
\newtheorem{prop}[thm]{Proposition}
\theoremstyle{definition}
\newtheorem{defn}[thm]{Definition}
\newtheorem{remark}[thm]{Remark}
\newtheorem{exam}[thm]{Example}
\newcommand{\bb}[1]{\mathbb{#1}}
\newcommand{\cl}[1]{\mathcal{#1}}
\newcommand{\ket}[1]{|#1\rangle}
\newcommand{\bra}[1]{\langle#1|}
\newcommand{\op}[1]{#1^\mathrm{op}}
\newcommand{\braket}[2]{\left\langle #1 |#2 \right\rangle}
\numberwithin{table}{subsection}
\title[Quantum Games and Synchronicity]{Quantum games and synchronicity}
\author{Adina Goldberg}
\affiliation{University of Waterloo}
\affiliation{Institute for Quantum Computing}
\thanks{was supported by the Natural Sciences and Engineering Research
Council of Canada (NSERC) through the Postgraduate Scholarship - Doctoral (PGS D) program.}
\date{January 8, 2026}
\begin{document}

\begin{abstract}
\noindent In the flavour of categorical quantum mechanics, we extend nonlocal games to allow quantum questions and answers, using quantum sets (special symmetric dagger Frobenius algebras) and the quantum functions of \cite{MRV}. Equations are presented using a diagrammatic calculus for tensor categories. To this quantum question and answer setting, we extend the standard definitions, including strategies, correlations, and synchronicity, and we use these definitions to extend results about synchronicity. We extend the graph homomorphism (isomorphism) game to quantum graphs, and show it is synchronous (bisynchronous) and connect its perfect (bi)strategies to quantum graph homomorphisms (isomorphisms). Our extended definitions agree with the existing quantum games literature, except in the case of synchronicity.
\end{abstract}

\maketitle

\tableofcontents
\addtocontents{toc}{\setcounter{tocdepth}{1}}

\newpage
\section*{Introduction}


Synchronicity for nonlocal games, introduced in \cite{PSSTW16}, is a well studied property in the typical setting, where questions and answers are classical. In that setting, synchronicity forces the players to have matching strategies. In particular, strategies corresponding to synchronous correlations are forced to be shared functions or operators (depending on the strategy class) that both players apply to the received question to produce an answer. Furthermore, if a game is synchronous, the associated perfect correlations are forced to be synchronous themselves.

We extend the above ideas to \emph{quantum games} -- nonlocal games with quantum questions and answers -- using string diagrams interpreted in the category of finite dimensional Hilbert spaces. This fits into the program of categorical quantum mechanics (CQM), initiated in \cite{AC04}, which approaches quantum theory as a process theory, modelling physical systems and transformations as the objects and morphisms in a category equipped with some additional structure. Our quantum games can be thought of as bipartite instances of the quantum relations of \cite{Wea10, Kor20}, but it is thinking of them as games that invites extensions of the classical literature.

We upgrade many notions from nonlocal games to the quantum game setting. Most of these notions have been extended to the quantum setting in the literature, but often in less generality, and never using a categorical or graphical approach. In particular, our extensions of synchronicity and the graph isomorphism game appear to be new.

For a detailed but elementary introduction to nonlocal games, strategies, synchronicity, and some examples, see the author's PhD thesis \cite[Chapter 2]{myThesis}. For a one page operational introduction, \cite[Section 1.1]{myThesis} will suffice. For a more in depth survey of nonlocal games as they relate to operator theory, see \cite{PV16}.

\subsection*{Other approaches to quantum games}

 The author is aware of three other approaches to general purpose quantum-to-quantum games that appear mutually distinct. Making precise connections between these approaches would be a significant undertaking and is outside the scope of this paper.

The quantum games defined here, viewed from an operational perspective, are similar to the quantum games of \cite{LTW08}, who give an operational definition. There are further definitions along the same lines, referred to as \emph{rank-one} \cite{CJPP11} and \emph{finite-rank} \cite{CLTT23} projection games.

A more recent approach to defining quantum games  uses \emph{hypergraphs}. See \cite{HT22, CLTT23, HT25}. The author only became aware of these definitions in the late stages of this work. There appears to be a strong connection between the categorical approach here and the hypergraph approach.

Another recent approach is the \emph{projection lattice} quantum games defined in \cite{TT24}, whose strategies were studied earlier in \cite{BHTT21, BHTT23}, along with an extension of typical synchronicity to the quantum setting called \emph{concurrency}. Concurrency is discussed towards the end of this paper, and shown to be distinct from but related to synchronicity.

Other authors have defined partial extensions of nonlocal games to the quantum setting \cite{Bus11, fritz12, TFKW13, JMRW16}.

Finally, there have been quantisations of specific classes of nonlocal games, including quantum XOR games \cite{RV15} and a quantum-to-classical \cite{BGH22} and quantum-to-quantum graph homomorphism game \cite{TT24}.

For a more detailed discussion of quantum games in the literature, and how they may relate to the quantum games presented here, please refer to the author's PhD thesis \cite[Section 4.7]{myThesis}. It would be useful to analyse the connections between these recent, varied approaches.

\subsection*{Overview}

To start Section \ref{sec:quantum_games}, we use string diagrams to talk about quantum sets and quantum functions, building on the work of \cite{MRV}. We define three types of linear maps between quantum sets that allow us to generalize the same notions from nonlocal games: \emph{Games} (self-conjugate maps satisfying a diagrammatic idempotency condition), \emph{strategies} (quantum functions), and \emph{correlations} (completely positive counital maps). These correlations are the quantum correlations of \cite{BKS21}. We go on to define \emph{nonsignalling, quantum commuting, quantum tensor}, and \emph{deterministic} correlations, generalizing these notions as defined for typical nonlocal games. We briefly mention the tensor product and the \emph{commuting} product of games.

We are then equipped to define \emph{synchronicity} in Section \ref{sec:sync}. This definition is distinct from that of \cite{BKS21}, and builds off something that we call \emph{sharing}. We check that our notion of synchronicity extends typical synchronicity. We also use sharing to graphically represent the \emph{classical dimension} of a quantum set.

In Section \ref{sec:sync_cons}, we extend properties of typical synchronicity. One of our main results is that perfect correlations/strategies for synchronous games are forced to be synchronous (Theorem \ref{thm:perf_sync}). We examine the structure of synchronous strategies and correlations. We show that Alice's operators determine Bob's operators when implementing a synchronous quantum commuting correlation (Theorem \ref{thm:sync_qc_corr}) or strategy (Theorem \ref{thm:sync_qc_strat}). As a key tool, we present a diagrammatic version of the Cauchy-Schwarz inequality for quantum functions (Theorem \ref{thm:cs_ineq_qfunc}).

In Section \ref{sec:bisync}, we discuss \emph{bistrategies}, \emph{bicorrelations}, and \emph{bisynchronicity}. We show (Theorem \ref{thm:bisync_dims_equal}) that bisynchronicity of increasingly structured maps preserves a growing list of quantum set properties: dimension, classical dimension, quantum isomorphism class, isomorphism class. We also describe the structure of bistrategies in different strategy classes. 

We then introduce examples, in Section \ref{sec:graph_games}, of games 
played on quantum graphs. First we study a truly quantum, synchronous game (the quantum graph homomorphism game), generalizing the well-known graph homomorphism game of \cite{MR16}.  We connect its perfect correlations to quantum graph homomorphisms (Theorems \ref{thm:hom_to_perf_qt_corr}, \ref{thm:perf_det_to_hom}). Second, we discuss the bisynchronous quantum graph isomorphism game, and connect its perfect bicorrelations to quantum graph isomorphisms (Theorems \ref{thm:iso_to_perf_qt_bicor}, \ref{thm:perf_det_to_iso}).

Finally, in Section \ref{sec:other_approaches}, we compare our definitions to two other notions of synchronicity in the literature that also aim to generalize typical synchronicity. One of these generalizations is the \emph{concurrency} property of \cite{BHTT21, BHTT23}, and the other is the synchronicity property of \cite{BKS21}, defined without the context of games, in the study of quantum correlations between quantum spaces.

\subsection*{Notes to the reader}
Sometimes equals signs are underset with numbers, followed by a numbered list of justifications. Other times, equals signs are underset with a hyperlinked number in parentheses, referring to a numbered equation earlier in the paper.

The reader should also know that an earlier version of this paper, but with additional material, constitutes Chapters 4-6 of the author's PhD thesis \cite{myThesis}. The material omitted here introduces values of quantum games, comments on other approaches to quantum games,
speculates about generalizing strategies to channels, and takes steps towards operational interpretations of this quantum game formalism.

\section{Quantum games via diagrams}
\label{sec:quantum_games}

To define games, strategies and correlations in the quantum question and answer setting, we follow the program of categorical quantum mechanics, using the language of monoidal dagger categories, and the accompanying graphical calculus. We apply this approach to the category of finite dimensional complex Hilbert spaces, where the dagger is the Hilbert space adjoint. There is an excellent exposition of this topic in \cite{catsQT}. We follow the conventions introduced there.

\subsection*{Quantum sets and functions}
In this section, we generalize the setting of typical nonlocal games to quantum question and answer games. Instead of question and answer spaces being finite sets, they will now be some kind of ``quantum sets". Quantum sets are introduced in \cite{MRV} (and are similar in spirit to the quantum sets of \cite{Kor20}). We include their definition here for readability.

In the following, let $\bf{FHilb}$ be the category of finite dimensional complex Hilbert spaces. In this category, the dagger ($\dagger$) is the Hilbert space adjoint.

\begin{defn}[Dagger Frobenius algebra]
\label{defn:frob}
A \emph{dagger Frobenius algebra} in $\bf{FHilb}$ is an object $X$ of $\bf{FHilb}$ that has an associative unital multiplication $m:X\otimes X \to X$ with unit $u:\bb{C}\to X$, and a coassociative counital comultiplication $\Delta:X \to X\otimes X$ with counit $\varepsilon:X\to\bb{C}$. These maps must satisfy $\Delta = m^\dagger, \varepsilon = u^\dagger$, and
\begin{equation}
\label{eq:frob}(1 \otimes m)(\Delta \otimes 1) = \Delta m = (m\otimes 1)(1\otimes \Delta)
\end{equation}
\end{defn}
Equation \eqref{eq:frob} is called the Frobenius equation. These dagger Frobenius algebras are an alternate presentation of finite-dimensional unital C$^*$-algebras \cite{Vic10}. Their underlying algebras are direct sums of finite-dimensional complex matrix algebras, with additional structure of a choice of inner product. The inner product determines $\varepsilon$ and $\Delta$.

\begin{defn}[Quantum set, \cite{MRV}]
\label{defn:qset}
A dagger Frobenius algebra $X$ in {\bf FHilb} is called a \emph{quantum set} if it is \emph{special}, that is
\begin{equation}
\label{eq:special}
m\Delta = \textrm{Id}_X,
\end{equation}
and \emph{symmetric}, meaning
\begin{equation}
\label{eq:symmetric}
\varepsilon m \sigma = \varepsilon m,
\end{equation}
where $\sigma$ is the canonical isomorphism that swaps tensor factors.  
\end{defn}

From an operator algebraic perspective, we can think of \eqref{eq:special} as requiring that $\Delta$ is an isometry, and of \eqref{eq:symmetric} as requiring that $\varepsilon$ is tracial.

Quantum sets are referred to most everywhere else as $\dagger$-SSFAs (for \textbf special \textbf symmetric dagger \textbf Frobenius \textbf algebras) in {\bf FHilb}, and when they are classical, as $\dagger$-SCFAs (where the \textbf c stands for \textbf commutative). We use the terminology quantum set here because it is more evocative, given that we are generalizing nonlocal games with classical question and answer sets.

If a dagger Frobenius algebra is only symmetric, but not special, we will refer to it as a $\dagger$-SFA (and if commutative, as a $\dagger$-CFA).
 
\begin{remark}[Structure of quantum sets]\label{rem:qset_structure}
As laid out in \cite[Corollary 5.34]{catsQT}, all quantum sets are specially weighted direct sums of complex matrix algebras. More precisely, we mean that the underlying set of a quantum set $A$ is a finite direct sum of full matrix algebras $\bigoplus_k M_{n_k}(\mathbb C)$. The multiplication $m$ is given by the usual matrix multiplication, but divided by $\sqrt{n_k}$ on each block, so that given $a,b \in A$ with $a = \bigoplus_k a_k, b = \bigoplus_k b_k$, $$m(a\otimes b) = \bigoplus_k \frac{1}{\sqrt{n_k}}a_kb_k.$$
 The unit $u$ is given by the usual identity matrix, but scaled in each block by $\sqrt{n_k}$, so that
 $$u(1) = \bigoplus_k \sqrt{n_k} I_{n_k}.$$ Taking the dagger to be the Hermitian adjoint (conjugate-transpose) operator, the compatible Hilbert space inner product that arises is $\langle a, b\rangle = \mathrm{Tr}(a b^\dagger)$. Under this inner product, the standard matrix-unit basis $\{E^k_{ij}: 1\leq i,j \leq n_k\}$ is orthonormal, where $E^k_{ij}\in A$ has a $1$ in the $(i,j)$ entry of block $k$ and $0$s elsewhere. The comultiplication is then given on this basis by 
 $$\Delta(E^k_{ij}) = \frac{1}{\sqrt{n_k}}\sum_t E^k_{it} \otimes E^k_{tj}$$
 and the counit is given by
 $$\varepsilon(a) = \sum_k\sqrt{n_k}\mathrm{Tr}(a_k).$$
 \end{remark}\

The structure result above illustrates a sliding-scale of ``quantumness''. When all matrix algebras are size $n_k=1$, a quantum set is a finite direct sum of copies of $\mathbb{C}$, and the above multiplication becomes commutative. We call the resulting quantum set \emph{classical}. On the other end of the spectrum, when there is only one direct summand, so the quantum set is a full matrix algebra, we call it \emph{maximally quantum}.

\begin{exam}[Classical set]\label{exam:classical}
Because we will refer to classical sets often, let's specify  that we will often implicitly identify the set element $i\in [n]$ with the standard basis vector $\ket i \in \mathbb{C}^n$. The operations in Remark \ref{rem:qset_structure} become
\begin{align*}
   m(\ket i\otimes \ket j) &= \delta_{ij} \ket i, \\
   u(1) &= \sum_i \ket i,\\
   \Delta(\ket i) &= \ket i\otimes \ket i, \\
   \varepsilon(\ket i) &= 1. 
\end{align*}

Here multiplication can be thought of as \emph{merging} of classical elements, the unit is (an unnormalized copy of) the maximally mixed state, comultiplication can be thought of as \emph{copying} classical elements, and the counit has a \emph{deletion} effect on classical elements.

Classical sets are also commonly identified with diagonal matrices under $\ket i \mapsto E_{ii}$. In that case the above operations give us back matrix multiplication (or the Schur product, as they coincide here). 
\end{exam}

\begin{exam}[Maximally quantum set]\label{exam:max_quant}
In the case that the quantum set is a full matrix algebra $M_n$, the operations of Remark \ref{rem:qset_structure} become
\begin{align*}
   m(a \otimes b) &= \frac{1}{\sqrt n} ab, \\
   u(1) &= \sqrt{n} I_n,\\
   \Delta(E_{ij}) &= \frac{1}{\sqrt n} \sum_t E_{it} \otimes E_{tj}, \\
   \varepsilon(a) &= \sqrt n \mathrm{Tr}(a). 
\end{align*}
    
\end{exam}

It is convenient and elegant to use a graphical calculus to work with Frobenius algebras. Symbolic equations only possess one dimension of latitude in which to perform two types of composition: tensor products, which in physics represent composing across space, and composition of maps, physically representing composing in time. This results in the eyes jumping back and forth matching $n^\textrm{th}$ tensor factors when performing compositions.
Using a two dimensional graphical calculus allows us to depict tensor products horizontally and composition of maps vertically, freeing the brain from this extra matching step.

From this point onwards, equations will be presented diagrammatically. Using the graphical calculus given in \cite{catsQT}, we encode Frobenius algebras as follows. The diagrams are read from bottom to top, as processes evolving in time, and left to right for tensor products. Each string or wire represents the Frobenius structure $X$, and the structural maps are represented by nodes with the appropriate number of $X$ inputs and outputs. Empty space represents $\bb{C}$ (the tensor unit of $\bf{FHilb}$).

\ctikzfig{frob\_ops}

We will often omit the small circles for $m$ and $\Delta$, as the multiplication or comultiplication can be inferred just from the joining or splitting of strings. For convenience, we will also draw caps and cups to represent the common compositions $\varepsilon m$ and $\Delta u$.

\ctikzfig{cups\_caps}

These cups and caps define a duality of $X$ to itself, resulting in the ``snake equations''. This is the chosen duality for quantum sets in our work.

\ctikzfig{snake\_eqns}

Given quantum sets $X,A$ if $f:X\to A$ is a linear map, then we represent $f$ as a node transforming an $X$-wire to an $A$-wire.

Taking the dagger of a diagram is done by flipping the whole diagram upside down, and applying daggers to any maps, which corresponds here to the Hilbert space adjoint operation.

The dagger of the map $f$ is denoted $f^\dagger$. The categorical dual (transpose) of the map $f$ is denoted $f^*$. The categorical conjugate of $f$, denoted $f_*$, is defined as the composition of dagger and transpose (which can be shown to commute).

When working with an entire diagram and taking conjugates and transposes, a horizontal flip of each element of a diagram corresponds to its categorical conjugate, and a half-turn (rotation by $\pi$) of the diagram while keeping inputs and outputs fixed corresponds to its transpose.

In accordance with the above graphical rule, it is sometimes convenient, when working with daggers, transposes, and conjugates, to use the following trapezium notation for maps.
\ctikzfig{trap_notation}

We have been working in the (self-dual) quantum set setting. In general, if there is no quantum set structure present, i.e., if $f:\cl H\to \cl K$, then letting $\cl H^*$ be the Hilbert space dual of $\cl H$,
\begin{align*}
    f^\dagger&: \cl K \to \cl H,\\
    f_*&: \cl H^* \to \cl K^*,\\
    f^*&: \cl K^* \to \cl H^*
\end{align*}

We can now graphically represent the equations defining quantum sets. 

\begin{center}
\begin{tabular}{cc}
\multicolumn{2}{c}{\tikzfig{frob}} \\
\multicolumn{2}{c}{Equation \eqref{eq:frob}: Frobenius}  \\
\tikzfig{special}        & \tikzfig{symmetric}      \\
Equation \eqref{eq:special}: special,         & Equation \eqref{eq:symmetric}: symmetric      
\end{tabular}
\end{center}

\begin{exam}[Example \ref{exam:classical} continued]
    When $X$ is classical, the Frobenius equation shows us three equivalent ways of projecting onto pairs of equal elements. Given a pair of elements $(x,y)$, either copy $y$ and then check one output against $x$, or check $x$ against $y$ and then copy the output, or copy $x$ and check one output against $y$. The specialness equation says that copying an element $x$ and then checking it against itself does nothing. The symmetry equation says that checking $(x,y)$ against each other and deleting the result is the same as if the input was $(y,x)$. In fact, in the classical case, we have the stronger property -- commutativity. Checking $(x,y)$ against each other is the same as checking $(y,x)$ against each other.
\end{exam}

\begin{exam}[Example \ref{exam:max_quant} continued]
When $X$ is maximally quantum, $X$ can be thought of as $\mathbb{C}^n \otimes (\mathbb{C}^n)^*$, and we can prove the above properties by splitting the $X$ wire into two oriented Hilbert space wires (where the upward arrow depicts a Hilbert space and the downward arrow its canonical dual space), and depicting multiplication and unit as follows:
\ctikzfig{mult\_unit\_pop}
This is a weighted instance of the infamous graphical \emph{pair of pants algebra}, named for the graphical similarity of the multiplication to the article of clothing. See, for instance, \cite{catsQT} for details.
\end{exam}

In the general quantum set case, $X$ can be graphically broken down as a direct sum of normalized pair of pants algebras. We will use this in several proofs.

Note that the properties of being an algebra can also be represented graphically (associativity, compatibility of unit with multiplication).

\begin{remark}[Graphical proofs]
    The result of all of these diagrammatic definitions is that the algebraic rules for quantum sets allow for somewhat freely manipulating their diagrams in the plane. Wires can be bent and passed over nodes. Nodes can slide along wires. The quantum set operations can be rotated inside the plane, keeping their input and output wires in the same relative positions. These are all correct manipulations up to \emph{planar oriented isotopy}, as stated to be valid for $\dagger$-SSFAs in a pivotal category (which $\textbf{FHilb}$ is) in \cite[Theorem 3.28]{catsQT}. Referencing each algebraic property would make some of our proofs quite arduous, so we will proceed graphically at times.
\end{remark}

Now that we have introduced our quantum objects, let us speak of the morphisms between them.

\begin{defn}[Quantum function, \cite{MRV}]
\label{defn:qfunc}
Given quantum sets $X, A$ and a Hilbert space $\cl H$, a quantum function

\ctikzfig{q_func}
from $X$ to $A$ over $\cl H$, denoted $\varphi: X \to_\cl H A$, is a linear map $\cl H \otimes X \to A \otimes \cl H$ satisfying

\begin{equation}\label{eq:q_comult}
\tikzfig{q_comult}
\end{equation}
\begin{equation}
\tikzfig{q_counital}\label{eq:q_counital}
\end{equation}
\begin{equation}\label{eq:self_conj}
\tikzfig{q_self_conj}
\end{equation}

\end{defn}

Letting $\cl{H} = \bb{C}$ in the above definition recovers the definition of a $*$-cohomomorphism from $X$ to $A$, thought of as tracial C$^*$-algebras. In that case,  \eqref{eq:q_comult} becomes comultiplicativity, \eqref{eq:q_counital} becomes counitality or trace preservation, and \eqref{eq:self_conj} becomes $*$-preservation. 

Quantum sets as objects with quantum functions as morphisms form a category, as shown in \cite{MRV}. The Hilbert space wire in the definition of a quantum function can be thought of as a resource space. When $X,A$ are classical, a quantum function is a family of PVMs indexed by $X$ with outcomes in $A$.

The wire $\cl H$ could be infinite-dimensional in general, but because several results in this work rely on traces, we will stick to finite-dimensional spaces here. We expect our results generalize to $\cl H$ infinite dimensional at least whenever the dual space $\cl H^*$ is not involved in the proof. In fact, infinite dimensional Hilbert spaces do not have duals in the dagger categorical sense \cite{catsQT}. The Hilbert space dual is only a dagger categorical dual for finite dimensional Hilbert spaces.

\subsection*{Games, strategies, and correlations}
In the following, let $X,Y,A,B$ be quantum sets. We will sometimes use a single wire instead of two to graphically represent the quantum set $X\otimes Y$ and similarly $A\otimes B$. In order to make sense of this, note that $X\otimes Y$ is itself a quantum set when equipped with operations
\begin{align}
\label{eq:tensor_mult}
m &:= (m_X \otimes m_Y)(\mathrm{Id}_X \otimes \sigma \otimes \mathrm{Id}_Y),\\
\label{eq:tensor_unit}
u &:= u_X \otimes u_Y
\end{align}
and $\Delta := m^\dagger, \varepsilon := u^\dagger$.

Graphically, this looks like using a thick wire to replace the pair of wires $X \otimes Y$
\ctikzfig{product_wire}
and then defining a new multiplication and unit
\begin{center}
\begin{tabular}{c c}
\tikzfig{product_mult}, & \tikzfig{product_unit} \\
Equation \eqref{eq:tensor_mult} &

Equation \eqref{eq:tensor_unit}  
\end{tabular}
\end{center}

We now generalize nonlocal games to allow quantum questions and answers.

\begin{defn}[Game] \label{defn:game} A \emph{(quantum) game} is a linear map $\lambda:X\otimes Y \to A \otimes B$ satisfying
\begin{align*}
    &\tikzfig{game\_idem}& \textrm{and}\qquad & \tikzfig{game\_self\_conj}
\end{align*}
\end{defn}

In case $X,Y,A,B$ are commutative, they can be thought of as classical sets\footnote{This is via Gelfand duality, and is discussed in greater detail in \cite{MRV}.}, and then $\lambda$ is a linear map $C(X\times Y) \to C(A \times B)$ with idempotent coefficients. This can be thought of as the usual rule function $X\times Y \times A \times B \to \{0,1\}$, for nonlocal games with classical questions and answers. In the classical case, the second diagram (self-conjugacy) is redundant given the first.

\begin{remark}[Games are bipartite quantum relations]
    Consider linear maps $f:X\otimes Y \to A\otimes B$. Given a game $\lambda$, the space of all linear maps $f$ such that $f \star \lambda = f$ is a quantum relation as defined in \cite{Kor20}, which are known to be equivalent to the quantum relations of \cite{Wea10}. In this sense, a quantum game is a quantum relation from the question space to the answer space. There are connections to be explored in future work between the following presentation of quantum games, and the point of view of quantum relations.
\end{remark}

In the remainder of this section, when we refer to the game $\lambda$, we mean $\lambda:X\otimes Y \to A\otimes B$. We now introduce a ``container'' object that can hold any type of strategy.

\begin{defn}[Combined strategy]\label{defn:combined_strat} A \emph{combined strategy} for the game $\lambda$ over a Hilbert space $\cl{H}$ is a quantum function $\varphi: X\otimes Y \to_\cl{H} A\otimes B$, depicted by
\ctikzfig{combined\_strat}
\end{defn}
The word \emph{combined} is used because it rolls both players' strategies together into a single map. There is no guarantee a given combined strategy can be implemented under non-communication requirements on the players. That is because there is no separation of Alice and Bob baked into the definition. Later we will define various types of \emph{valid strategies}, meaning strategies that can be understood to physically model spacetime-separated players with a non-communication requirement. The definition of combined strategy is useful merely because it allows us to talk about all types of valid strategies at once. 

To be concise, we define strategies with reference to a game $\lambda$ instead of listing the relevant question and answer sets, even though only the question and answer sets of $\lambda$ are part of the definition, and not the rule function. Essentially, if $\lambda_1, \lambda_2: X\otimes Y \to A \otimes B$, then a strategy for $\lambda_1$ will also a strategy for $\lambda_2$ and vice-versa.

When we aren't interested in varying the state of the resource system $\cl H$ used by the players, we use a correlation instead of a strategy. To define correlations, we first diagrammatically define channels between quantum sets, and to do this we make use of the diagrammatic definition of complete positivity.

\begin{defn}[Completely positive]\label{defn:CP}
A linear map $f:X\to A$ for quantum sets $X,A$ is \emph{completely positive} (CP) if there exists a linear map $g:X\otimes A \to Q$ for some other quantum set $Q$ satisfying
\ctikzfig{CP_defn}
\end{defn}

In \textbf{FHilb}, this diagrammatic notion agrees with the usual definition of complete positivity. See \cite[Theorem 7.18]{catsQT} for details.

\begin{prop}\label{prop:CP_self_conj}
CP maps between quantum sets are self-conjugate.
\end{prop}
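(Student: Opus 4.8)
The plan is to unwind the definition of completely positive from Definition~\ref{defn:CP} and combine it with the defining equations of a quantum set to produce a diagram for $f_*$ that visibly coincides with the diagram for $f$. Recall that $f_* = (f^*)^\dagger = (f^\dagger)^*$: graphically, $f_*$ is obtained from $f$ by a horizontal flip (reflecting each node left--right while keeping wires attached), using the self-dual cups and caps of the quantum sets $X$ and $A$. So the statement to prove is that flipping the CP presentation of $f$ horizontally gives back $f$.

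First I would write $f = (\varepsilon_Q \otimes \mathrm{Id}_A) \circ (\text{stuff built from } g \text{ and } g_*) \circ (\mathrm{Id}_X)$ — more precisely, expand the CP defining diagram: $f$ is $g$ composed with its conjugate $g_*$, joined through the quantum set $Q$ by a cap (the counit-and-multiplication of $Q$, or equivalently $\varepsilon_Q m_Q$ bending the two $Q$-legs together), and with an $X$-input that is copied via $\Delta_X$ to feed both $g$ and $g_*$. Then I would take the horizontal flip of this entire diagram. The horizontal flip sends $g \mapsto g_*$ and $g_* \mapsto (g_*)_* = g$ (conjugation is an involution), it sends the cap on $Q$ to the cup on $Q$ — but since $Q$ is a $\dagger$-SSFA these are related by the snake equations, and the cap $\varepsilon_Q m_Q$ is symmetric under horizontal flip precisely because $Q$ is symmetric (Equation~\eqref{eq:symmetric}) — and it sends $\Delta_X$ to $\Delta_X$ flipped, which is again $\Delta_X$ up to the symmetry of $X$. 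Reassembling, the flipped diagram is the same diagram with $g$ and $g_*$ swapped; but swapping the two branches is itself just a planar rearrangement (slide one branch around the $Q$-cap using symmetry/commutativity-type moves), so the flipped diagram equals the original. Hence $f_* = f$.

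The key steps, in order: (1) fix conventions — state explicitly that $f_*$ is the horizontal reflection of $f$'s diagram, as established in the graphical-calculus discussion above; (2) draw the CP witness diagram for $f$ from Definition~\ref{defn:CP}; (3) apply the horizontal flip node-by-node, tracking that $g \leftrightarrow g_*$, and that the $X$-copy map and the $Q$-cap are each invariant under the flip thanks to symmetry of the quantum sets $X$ and $Q$ (and the snake equations relating the flipped cap/cup); (4) observe that the resulting diagram differs from the original only by interchanging the two symmetric branches feeding into $Q$, which is a valid planar-isotopy/symmetry move, so it equals $f$.

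The main obstacle I expect is bookkeeping at step (3): making sure the flip of the $Q$-cap really lands back on an admissible cap and not some twisted variant, which is exactly where the \emph{symmetric} hypothesis on quantum sets is essential — without it, $\varepsilon_Q m_Q \sigma \neq \varepsilon_Q m_Q$ and the two branches could not be freely interchanged. A secondary subtlety is confirming that $(g_*)_* = g$ (conjugation is involutive because dagger and transpose are each involutive and commute, as noted in the excerpt), and that the branch-swap is genuinely an isotopy rather than requiring an extra hypothesis; here one uses that the cap is symmetric and that the diagram is planar, so sliding one leg over the top of the $Q$-cap is legitimate. Everything else is routine diagram chasing of the kind the paper has already licensed via planar oriented isotopy.
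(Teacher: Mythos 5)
Your proposal is correct and follows essentially the same route as the paper's own proof: expand $f$ via the CP witness $g$ from Definition~\ref{defn:CP}, take the conjugate (horizontal flip) of that diagram, and observe that involutivity of conjugation together with the symmetry of the Frobenius structures returns the original diagram. The one point worth tightening is that interchanging the two branches feeding the $Q$-cap is not a bare planar isotopy but an application of Equation~\eqref{eq:symmetric}, which you do correctly identify as the essential hypothesis.
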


The proof is in \ref{pf:CP_self_conj}.

\begin{defn}[Counital]\label{defn:counital}
A linear map $f:X\to A$ for quantum sets $X,A$ is \emph{counital} if
\ctikzfig{counital}
\end{defn}

Counitality implements the physical property of causality when the counit is thought of as a discarding map. See \cite{PQP} for details. This condition is called trace-preserving when thinking of quantum sets as $C^*$-algebras or matrix algebras, although note that the counit will not generally be given by the canonical trace, but rather by a reweighted version. See Remark \ref{rem:qset_structure}.

\begin{defn}[Channel] Given quantum sets $X,A$, a \emph{channel from $X$ to $A$} is a completely positive, counital linear map $f:X\to A$.
\end{defn}

Channels here are just the channels in \cite{CHK14}, given as normalized morphisms in\\ $\mathbf{CP^* [FHilb]}$.

\begin{exam}[Maximally quantum and classical channels]
In the maximally quantum case, this notion of channel recovers the usual definition of a quantum channel as a CPTP map between matrix algebras, but with scaled traces.

In the classical case, this definition picks out exactly classical channels, which are probability distributions $p(a|x)$ on the set $A$ conditioned on the set $X$. In that setting, complete positivity is equivalent to non-negativity of $p$ and counitality is equivalent to $\sum_a p(a|x) = 1$.
\end{exam}

\begin{defn}[Correlation] A \emph{quantum set correlation} is a channel $P$ from $X\otimes Y$ to $A \otimes B$.
\end{defn}

Note, that this should not be confused with the typical notion of correlations induced by quantum strategies, but rather is a generalization to when the question and answer sets are quantum. With that said, in the following we will drop the words \emph{quantum set} and just write \emph{correlation}.  Our definition of correlation is the dagger dual of \cite[Definition 5.3]{BKS21}.

To relate strategies and correlations, we need a notion of (pure) states. We use \cite[Definition 1.10]{catsQT}, as follows.

\begin{defn}[(Normalized) state]
     Let $\cl H$ be a finite-dimensional Hilbert space. A \emph{state} of $\cl H$ is a linear map $\ket\psi: \mathbb{C}\to \cl{H}$, depicted
    \ctikzfig{state}
    We denote the corresponding \emph{effect} by $\bra \psi := \ket \psi ^\dagger$, and depict it
    \ctikzfig{effect}
    We say $\ket \psi$ is \emph{normalized} if
    \ctikzfig{normalized}
\end{defn}

Recall that the empty diagram above represents $\mathrm{Id}_\mathbb{C}$, or scalar multiplication by $1$.

\begin{defn}[Realized by]\label{defn:real}
    A correlation $P$ is said to be \emph{realized by} a combined strategy $\varphi$ on $\cl H$ and a normalized state $\ket\psi \in \cl H$ when
    \ctikzfig{corr\_from\_strat}
\end{defn}

The physical interpretation of a correlation as a channel from question pairs to answer pairs emphasizes the observable ``signature" of a strategy, from the referee's perspective, not having access to the players' resource state. 

\begin{remark}\label{rem:corr_from_strat}
    A combined strategy $\varphi$ over a finite-dimensional $\cl{H}$ and a normalized state $\ket{\psi} \in \cl{H}$ always give rise to some correlation $P$, as in Definition \ref{defn:real}. See \ref{pf:corr_from_strat} for the proof.
\end{remark}

In the study of nonlocal games, we are especially interested in correlations that are guaranteed to win a game, also known as perfect correlations. Let us define a perfect correlation in this general setting.

\begin{defn}[Perfect correlation]\label{defn:perf_corr}
A correlation $P$ for the game $\lambda$ is called \emph{perfect} if
\ctikzfig{perf\_corr}
\end{defn}

Note that swapping $P$ and $\lambda$ in the definition is equivalent due to both maps being self-conjugate.

\begin{remark}[Typical perfect correlations]
    In the case that questions and answers are classical, this says that the question-answer pairs that occur with nonzero probability according to $P$ are winning question-answer pairs according to $\lambda$, which is exactly the typical definition of perfect. Indeed, the above equation becomes $$p(a,b|x,y)\lambda(a,b|x,y) = p(a,b|x,y)$$ for all $a \in A, b\in B, x\in X, y\in Y$. This is equivalent to saying $\lambda(a,b|x,y) =0$ implies $p(a,b|x,y)=0$.
\end{remark}

By adding a resource wire to the definition of a perfect correlation, we can define perfect strategies.

\begin{defn}[Perfect strategy]\label{defn:perf_strat}
A (combined) strategy $\varphi$ for the game $\lambda$ is called \emph{perfect} if
\ctikzfig{perf\_strat}
\end{defn}

If a strategy over resource space $\cl H$ is perfect for $\lambda$ and realizes a correlation $P$ together with a normalized state in $\cl H$, then $P$ is perfect for $\lambda$.

\begin{remark}[Perfect strategies versus perfect correlations]\label{rmk:perf_strat}
    If a strategy over resource space $\cl H$ is perfect for $\lambda$ and realizes a correlation $P$ together with a normalized state in $\cl H$, then $P$ is perfect for $\lambda$. However, the converse implication does not hold. Perfect correlations need not come from perfect strategies. Indeed, being a perfect strategy has much stronger consequences than realizing a perfect correlation. In light of this, it may be more appropriate to think of perfect strategies as \emph{absolutely perfect}. For typical nonlocal games, the words \emph{perfect strategy} are used to refer to the correlation being perfect.
\end{remark}

\subsection*{Strategy classes}

We have the equipment to describe some of the typical strategy classes for nonlocal games (specifically -- deterministic, quantum tensor, and quantum commuting strategies), and easily extend them to quantum games. All of these classes fall under the umbrella of non-signalling strategies (realizing non-signalling correlations). We will begin with the diagrammatic definition of non-signalling, and work towards the stricter strategy classes.

\begin{defn}[Non-signalling, marginals] A quantum-set correlation $P: X\otimes Y \to A\otimes B$ is called \emph{non-signalling} if there are linear maps $P_A:X\to A$ and $P_B:Y\to B$ such that
\ctikzfig{nonsig}
The maps $P_A, P_B$ are called the \emph{marginals} of $P$.
\end{defn}

This reduces to the typical definition of non-signalling for classical question and answer sets. Non-signalling quantum-set correlations are also referred to as QNS (quantum non-signalling) correlations, and were introduced in \cite{DW16}.

\begin{cor}\label{cor:margs_channels}
    The marginals of a non-signalling correlation $P$ are channels.
\end{cor}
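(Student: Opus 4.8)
The plan is to extract $P_A$ (and, symmetrically, $P_B$) from $P$ by a ``partial-trace'' composite, and then transport the two defining properties of a channel — complete positivity and counitality — along that composite. First I would record what non-signalling gives: capping the $B$-output of $P$ with the counit produces a map that factors through $X$, i.e. $(\mathrm{Id}_A \otimes \varepsilon_B)\circ P = P_A \otimes \varepsilon_Y$ as maps $X\otimes Y \to A$ (and dually $(\varepsilon_A \otimes \mathrm{Id}_B)\circ P = \varepsilon_X \otimes P_B$). Pre-composing the first identity with $\mathrm{Id}_X \otimes u_Y$ and using that $\varepsilon_Y u_Y = \dim_{\mathbb C} Y =: d_Y \neq 0$ (Remark \ref{rem:qset_structure}), I obtain the closed form $P_A = d_Y^{-1}\,(\mathrm{Id}_A \otimes \varepsilon_B)\circ P \circ (\mathrm{Id}_X \otimes u_Y)$; diagrammatically this just says ``feed the unit of $Y$ into $P$ and discard the $B$-output''.

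For counitality I would post-compose the non-signalling identity with $\varepsilon_A$: the left side becomes $(\varepsilon_A \otimes \varepsilon_B)\circ P$, which equals $\varepsilon_X \otimes \varepsilon_Y$ since $P$ is a counital channel, while the right side is $(\varepsilon_A \circ P_A)\otimes \varepsilon_Y$. Cancelling the nonzero factor $\varepsilon_Y$ (again by pre-composing with $u_Y$, or by evaluating on any vector on which $\varepsilon_Y$ is nonzero) yields $\varepsilon_A \circ P_A = \varepsilon_X$, which is counitality of $P_A$. For complete positivity I would argue that each factor in the closed form above is CP: the unit $u_Y\colon \mathbb{C}\to Y$ is CP because $u_Y(1)$ is a positive element (Remark \ref{rem:qset_structure}), hence $\mathrm{Id}_X \otimes u_Y$ is CP; the counit $\varepsilon_B = u_B^\dagger$ is CP, being the dagger of a CP map (equivalently, a positive functional), hence $\mathrm{Id}_A \otimes \varepsilon_B$ is CP; and $P$ is CP by hypothesis. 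Since CP maps between quantum sets are closed under composition, tensoring with identities, and scaling by a positive real, the closed form exhibits $P_A$ as CP. The argument for $P_B$ is identical, using the second non-signalling identity.

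The algebraic cancellation of counits is routine; the only point needing a little care — and where I expect the main obstacle to sit — is assembling the closure properties of CP maps and the CP-ness of the structure maps $u,\varepsilon$ of a quantum set. These are standard facts about the category $\mathbf{CP^*[FHilb]}$ in which channels live here, and if one prefers to stay diagrammatic they can be re-derived by producing explicit Stinespring-type witnesses as in Definition \ref{defn:CP} (for instance, dilating $P$ and then plugging $u_Y$ and $\varepsilon_B$ into the dilation). So this is bookkeeping rather than a genuine difficulty.
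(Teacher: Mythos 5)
Your proof is correct, and its skeleton matches the paper's: extract $P_A$ by feeding $u_Y$ into the $Y$ input and discarding the $B$ output with $\varepsilon_B$ (normalizing by $\varepsilon_Y u_Y = \dim Y$), then verify counitality by post-composing the non-signalling identity with $\varepsilon_A$ and using counitality of $P$. Where you diverge is the complete-positivity step, which is the only substantive part. The paper stays entirely inside its own diagrammatic Definition \ref{defn:CP}: it takes the square-root witness $f$ for $P$ and explicitly builds from it a square-root witness $g$ for $P_A$ by plugging the unit and counit into the dilation — i.e., exactly the fallback you sketch in your last paragraph. You instead invoke closure of CP maps under composition, tensoring with identities, and positive scaling, together with CP-ness of $u$ and $\varepsilon$; these are true and standard facts about $\mathbf{CP^*[FHilb]}$, but the paper never states or proves them (it only records that CP maps are self-conjugate, Proposition \ref{prop:CP_self_conj}), so as written your argument leans on external results where the paper's is self-contained and arguably shorter — one diagram rather than four lemmas. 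Both routes are sound; if you want your version to stand on its own within this paper's framework, promote your parenthetical remark about dilating $P$ and plugging in $u_Y$, $\varepsilon_B$ from an aside to the actual argument.
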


See \ref{pf:margs_channels} for the proof. If we are interested not just in marginals, but in Alice and Bob's implementation of their strategies, we need to look at valid strategy classes. The most general valid strategy class we will deal with here is quantum commuting.

\begin{defn}[Quantum commuting strategy] A \emph{quantum commuting strategy} for the game $\lambda$ is a pair of quantum functions $(E,F)$ on a Hilbert space $\cl H$, given by $E: X \to_\cl H A, F: Y \to_\cl H B$, that commute on $\cl H$. In diagrams, \ctikzfig{qc_strat}
\end{defn}

A quantum commuting strategy gives rise to a combined strategy on $\cl H$ as follows: \ctikzfig{combined_qc}

If Alice and Bob each have their own resource Hilbert space instead of sharing a Hilbert space, we get quantum tensor strategies.

\begin{defn}[Quantum tensor strategy] A \emph{quantum tensor strategy} for the game $\lambda$ is a pair of quantum functions $(E,F)$, given by $E: X \to_{\mathcal H_A} A, F: Y \to_{\mathcal H_B} B$. In diagrams, \ctikzfig{qt_strat}.
\end{defn}

A quantum tensor strategy gives rise to a combined strategy on $\cl H := \cl H_A \otimes \cl H_B$ as follows:
\ctikzfig{combined_qt}

Note that a quantum tensor strategy is the composition defined in \cite{MRV} of the quantum functions $(E\otimes I_B)\circ (I_A \otimes F)$.

\begin{remark}
    As $\cl H$ is finite-dimensional here, all quantum commuting strategies should decompose as quantum tensor strategies. The reasons for introducing both classes here are: to pave the way to consider an infinite-dimensional $\cl H$; and to mirror the language already used in nonlocal games.
\end{remark}

\begin{defn}[Deterministic strategy]  A \emph{deterministic strategy} for the game $\lambda$ is a quantum tensor strategy with classical resource spaces, i.e. $\cl H_A = \cl H_B = \mathbb{C}$. 
\end{defn}

\begin{prop}
    Deterministic strategies are quantum tensor. Quantum tensor strategies are quantum commuting. Quantum commuting strategies realize non-signalling correlations.
\end{prop}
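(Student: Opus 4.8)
The plan is to treat the three implications in turn; the first two are pure bookkeeping, and the only genuine content is in the last. For the first implication nothing needs to be said: $\mathbb{C}$ is a finite-dimensional Hilbert space, so by definition a deterministic strategy already \emph{is} a quantum-tensor strategy, namely one whose two private resource spaces happen to be $\cl H_A = \cl H_B = \mathbb{C}$.

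For ``quantum-tensor implies quantum-commuting'', given a quantum-tensor strategy $(E,F)$ with $E:X\to_{\cl H_A}A$ and $F:Y\to_{\cl H_B}B$, I would set $\cl H := \cl H_A\otimes\cl H_B$ and define $\tilde E := E\otimes\mathrm{Id}_{\cl H_B}$ and $\tilde F := \mathrm{Id}_{\cl H_A}\otimes F$ (reordering tensor factors so that both become maps out of $\cl H\otimes(-)$). The first thing to verify is that $\tilde E,\tilde F$ are still quantum functions: each of the three defining equations \eqref{eq:q_comult}, \eqref{eq:q_counital}, \eqref{eq:self_conj} for $E$ holds verbatim ``with an extra idle $\cl H_B$-wire drawn alongside'' (and symmetrically for $F$), which is just bifunctoriality of $\otimes$; for \eqref{eq:self_conj} one also uses that the conjugate of the identity on a Hilbert space is the identity on its dual. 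Then $\tilde E$ and $\tilde F$ commute on $\cl H$ automatically: $\tilde E$ touches only the $\cl H_A$ tensor factor together with the $X$- and $A$-wires, while $\tilde F$ touches only the $\cl H_B$ factor together with the $Y$- and $B$-wires, so the commuting-strategy diagram reduces to the interchange law applied to disjoint tensor factors. (Alternatively one can simply observe that the quantum-tensor combined strategy is by construction the \cite{MRV}-composite $(E\otimes I_B)\circ(I_A\otimes F)$ and read off commutation from that.)

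For ``quantum-commuting strategies realize non-signalling correlations'', let $(E,F)$ be a quantum-commuting strategy on $\cl H$, let $\varphi$ be the combined strategy it induces (constructed just above the proposition), and let $P$ be any correlation realized by $\varphi$ together with some normalized state $\ket\psi\in\cl H$; such a $P$ exists by Remark \ref{rem:corr_from_strat}. I would exhibit the marginals explicitly: let $P_A:X\to A$ be $E$ with its resource wire closed off by $\ket\psi$ at the bottom and $\bra\psi$ at the top, and define $P_B:Y\to B$ analogously from $F$. To check the non-signalling equation, I would draw the diagram for $(\mathrm{Id}_A\otimes\varepsilon_B)\circ P$ and note that the counit $\varepsilon_B$ meets the $B$-output of the $F$-part of $\varphi$; the counital axiom \eqref{eq:q_counital} for the quantum function $F$ rewrites $\varepsilon_B\circ F$ as $\varepsilon_Y$ on the $Y$-input with the $\cl H$-wire passing straight through, and what is left is precisely $P_A$ tensored with the discarding map $\varepsilon_Y$. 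The symmetric computation, discarding Alice's output and using counitality of $E$ — here commutativity of $E$ and $F$ is what lets me redraw $\varphi$ with $E$ applied ``last'' so that $\varepsilon_A$ reaches it directly — yields $P_B\otimes\varepsilon_X$. This is exactly the non-signalling condition, and Corollary \ref{cor:margs_channels} then upgrades $P_A,P_B$ to channels.

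I expect the third implication to be the only step needing care, and within it the delicate part is the diagrammatic bookkeeping: one must use the commutation hypothesis to slide whichever quantum function is \emph{not} being marginalized into the position where its counital axiom applies, while correctly tracking the resource wire and the cup/cap built from $\ket\psi$. Everything else is definitional unwinding together with bifunctoriality of the monoidal structure on $\mathbf{FHilb}$.
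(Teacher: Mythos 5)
Your proof is correct and follows essentially the same route as the paper's (which compresses all three claims into three sentences): the first implication is definitional, the second extends the resource spaces of $E$ and $F$ to act trivially on the other tensor factor of $\cl H_A\otimes\cl H_B$, and the third derives non-signalling from counitality of $E$ and $F$. Your added observation that the commutation hypothesis is what lets you slide the non-marginalized quantum function into the position where its counital axiom applies (for one of the two marginals) is a detail the paper leaves implicit, and you handle it correctly.
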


\begin{proof}
    Deterministic strategies are quantum tensor by definition.
    
    A quantum tensor strategy can be thought of as quantum commuting by setting $\cl H := \cl H_A \otimes \cl H_B$ and extending the resource space of the given $E$, $F$ to act trivially on the newly accessible half of $\cl H$.
    
    A quantum commuting strategy realizes a nonsignalling correlation due to counitality of $E, F$.    
\end{proof}

\begin{remark}[Beyond quantum functions] Quantum functions quantize the classical question and answer game strategies given by families of PVMs. Seemingly more generally, Alice and Bob may use families of POVMs, but this is made unnecessary, via Naimark's dilation theorem (a consequence of Stinespring's theorem), by working in a larger resource space. For reference, see \cite[Proposition 9.6 and Theorem 9.8]{paulsen16}.

In the quantum question-and-answer setting, there is a similar result to Naimark's dilation theorem presented in \cite{BKS21}.
\end{remark}

\subsection*{Products of games}
\label{sec:products}

We mention two products of games -- their tensor product, and their commuting product, and we relate perfect strategies for products of games to perfect strategies for the individual games.

\begin{defn}[Tensor product of games]
    The \emph{tensor product} of games $\lambda_1, \lambda_2$ is denoted $\lambda_1 \otimes \lambda_2$ and given by
    \ctikzfig{tensor\_prod\_games}
\end{defn}

Some wires are crossed in the above definition so that the new game keeps Alice's questions and answers separate from Bob's. Note that the tensor product of games is a game. The tensor product (with the same wires crossed) of perfect strategies or correlations for each game is perfect for the product.

\begin{defn}[Commuting product of games]
    Given games $\lambda_1, \lambda_2$ with the same question and answer sets satisfying
    \ctikzfig{games\_commute}
    we denote the above product $\lambda_1 \star \lambda_2$ and we call it the \emph{commuting product} of games.
\end{defn}

Note that the commuting product of games is a game. Note also that in the classical question and answer case, this is just the Schur product of the rule matrices. 

If the commuting product of games is well-defined, we say we have \emph{commuting games}.

\begin{prop} \label{prop:perf_product}
    Let $\lambda_1, \lambda_2$ be commuting games $X\otimes Y \to A\otimes B$. Let $\varphi$ be a perfect strategy for both $\lambda_1, \lambda_2$. Then $\varphi$ be a perfect strategy for both $\lambda_1, \lambda_2$ if and only if $\varphi$
    is a perfect strategy for $\lambda_1 \star \lambda_2$.   Analogously, $P$ is a perfect correlation for both $\lambda_1, \lambda_2$ if and only if $P$ is a perfect correlation for $\lambda_1 \star \lambda_2$.
\end{prop}

The proof follows from the definitions of perfect strategy and perfect correlation, and the fact that $\lambda_1, \lambda_2$ commute. See \cite[Proposition 4.43]{myThesis} for details.

The above proposition supports viewing the commuting product of games as a logical AND operation.

\section{Synchronicity}
\label{sec:sync}

We start by defining the opposite of a quantum set, and then studying the Frobenius map, which we will also refer to as \emph{sharing}. We then use sharing to define synchronicity, the central property of this paper. We ensure that our terminology is apt by considering the classical question and answer case.

\subsection*{Opposite of a quantum set}
\begin{defn}[Opposite of a quantum set]
    Given a quantum set $X$, its \emph{opposite}, denoted $\op X$, is another quantum set with the same underlying set $X$, the same unit as $X$,
    \ctikzfig{opposite\_unit}
    and multiplication given by $m\sigma$,
    \ctikzfig{opposite\_mult}
\end{defn}

In the following, we always take $Y = \op X$, and $B=\op A$, with  filled nodes corresponding to the opposite multiplication. The choice to use the opposite quantum set for Bob initially resulted from some technical issues in the proofs. There is a more operational interpretation: Using the opposite quantum set for Bob results in a half-twisted multiplication on the composite system belonging to Alice and Bob. Untwisting this multiplication leaves Alice's input wires adjacent, and Bob's input wires adjacent, representing the players' physical separation. This is partially demonstrated below, in the proof of property (d) of sharing.

There is no need to distinguish the cups and caps that come from the original quantum set or its opposite, as \eqref{eq:symmetric} shows us that they are equal. Under the duality coming from the tensor product quantum set structure of $X\otimes \op X$, we can check that the transpose of the filled multiplication, viewed as a linear map $X\otimes \op X \to X$, is the empty comultiplication and vice versa. Equivalently, the filled nodes become the empty nodes under conjugation and vice versa.

\subsection*{Sharing}

We now consider a \emph{sharing} map, defined as follows. This map features in the definition of synchronicity, and will also allow us to introduce the notion of classical dimension.

\begin{defn}[Sharing]
    The \emph{sharing map of a quantum set} $X$ is the Frobenius map of \eqref{eq:frob} given by
    \ctikzfig{frob}
\end{defn}

It may seem unnecessary to give a new name to the Frobenius map, but it makes some of our statements more operationally accessible. It also leaves room for different sharing maps to be used in the definition of synchronicity -- for instance, we could introduce a twist into the sharing map, and define synchronicity that way, with many of the same properties.

In the classical case, the sharing map is the projection onto the diagonal (or ``shared'') questions $\ket x\otimes \ket x$, which is the reason for calling it sharing.

We will often work with the thick wire $X\otimes \op X$ (and $A\otimes \op A$). It will be convenient to have a shorthand notation for sharing, using thick wires.

\ctikzfig{sharing\_thick\_wire}

We take note of some straightforward but useful properties of the sharing map.

\begin{lemma}[Properties of sharing]
\label{lem:sharing}
    Consider the sharing map of $X$. 
    \begin{enumerate}[(a)]
        \item \label{item:sharing_sa} Sharing is self-adjoint.
        \item \label{item:sharing_proj} Sharing is a projection.
        \item \label{item:sharing_unit} Sharing the unit produces the cup state.
        \ctikzfig{sharing\_unit}
        \item \label{item:sharing_mult} Sharing commutes with the right leg of multiplication.
        \ctikzfig{sharing\_right\_mult}
    \end{enumerate}
\end{lemma}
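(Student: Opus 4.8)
The plan is to work throughout from the identity $S = \Delta m$ for the sharing map $S$ of $X$, which by the Frobenius equation \eqref{eq:frob} also equals $(\mathrm{Id}_X\otimes m)(\Delta\otimes\mathrm{Id}_X)$ and $(m\otimes\mathrm{Id}_X)(\mathrm{Id}_X\otimes\Delta)$ — these being exactly the three pictures appearing in \eqref{eq:frob}. Each of the four claims then reduces to a two- or three-step manipulation using only the quantum set axioms and the tensor quantum set structure on $X\otimes\op X$ of \eqref{eq:tensor_mult}--\eqref{eq:tensor_unit}.

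For self-adjointness I take the dagger of $S=\Delta m$ and use $\Delta=m^\dagger$ from Definition~\ref{defn:frob}, so that $S^\dagger = m^\dagger\Delta^\dagger = \Delta m = S$; graphically this is just the observation that flipping the sharing picture upside down exchanges the $m$ and $\Delta$ nodes and returns the same picture. For the projection claim it then suffices to check idempotence, and $S^2 = \Delta m\Delta m = \Delta(m\Delta)m = \Delta m = S$ by specialness \eqref{eq:special}; together with self-adjointness this exhibits $S$ as a self-adjoint idempotent, i.e.\ a projection. For the statement that sharing the unit produces the cup, I recall that by \eqref{eq:tensor_mult}--\eqref{eq:tensor_unit} the unit of $X\otimes\op X$ is $u\otimes u$, and the left unit law gives $m(u\otimes u)=u$; hence sharing the unit is $S(u\otimes u) = \Delta m(u\otimes u) = \Delta u$, which is by definition the cup.

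For the last part, the content is an identity of the form $S\circ(\mathrm{Id}_X\otimes m) = (\mathrm{Id}_X\otimes m)\circ(S\otimes\mathrm{Id}_X)$ on $X^{\otimes 3}$ (sharing the left two wires and then merging wires two and three equals merging wires two and three and then sharing the left two wires), drawn as in the statement. Starting from the left side and writing $S=\Delta m$, associativity of $m$ turns $\Delta m(\mathrm{Id}_X\otimes m)$ into $\Delta m(m\otimes\mathrm{Id}_X)$; rewriting the leading $\Delta m$ via the Frobenius equation in the form $(\mathrm{Id}_X\otimes m)(\Delta\otimes\mathrm{Id}_X)$ and then recombining $(\Delta\otimes\mathrm{Id}_X)(m\otimes\mathrm{Id}_X)=(\Delta m)\otimes\mathrm{Id}_X=S\otimes\mathrm{Id}_X$ gives the right side. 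If the diagram in the statement places opposite-algebra nodes on the right wire, the same computation applies verbatim with $m$ replaced by $m\sigma$, since associativity and \eqref{eq:frob} hold equally for $\op X$. There is no real obstacle here; the only points to watch are, in this last part, choosing the form of \eqref{eq:frob} that lines the $m$ up with the intended leg, and, in the unit part, remembering that the relevant unit is $u\otimes u$ so that the unit law — rather than specialness — is what is invoked.
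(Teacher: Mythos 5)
Parts (a), (b), and (c) of your proposal are correct and coincide with the paper's own arguments: the vertical symmetry of the Frobenius equation (equivalently $\Delta=m^\dagger$) for self-adjointness, specialness for idempotence, and the unit law followed by $\Delta u$ being the cup for part (c).

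Part (d) contains a genuine gap: you have proven a true identity, but not the one the lemma asserts. Property (\ref{item:sharing_mult}) is stated on thick wires, and the multiplication it refers to is the multiplication \eqref{eq:tensor_mult} of the tensor quantum set $X\otimes\op X$ --- a map $(X\otimes\op X)\otimes(X\otimes\op X)\to X\otimes\op X$, i.e.\ four thin wires in and two out, built from $m$, the opposite multiplication $m\sigma$, and a swap. The claim is that sharing applied to the \emph{right thick input} of this multiplication equals sharing applied to its output; this is exactly the assertion, made explicit in the paragraph following the lemma, that sharing is a left-module endomorphism of $X\otimes\op X$ over itself, and it is in this form that the property is invoked in the proof of Theorem \ref{thm:perf_sync}, where the relevant multiplication is the thick-wire Schur product. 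Your identity $S\circ(\mathrm{Id}_X\otimes m)=(\mathrm{Id}_X\otimes m)\circ(S\otimes\mathrm{Id}_X)$ on $X^{\otimes 3}$ lives on the wrong number of wires, and has sharing straddling one external wire and the \emph{left} leg of a single $m$ node rather than sitting on the right leg of the thick multiplication. The good news is that your computation is precisely the Frobenius-plus-associativity ingredient (the $\Delta(ab)=\Delta(a)(1\otimes b)$-type rewriting) that the paper's thin-wire proof applies ``repeatedly'': running your argument once on the unfilled strand and once, mirrored and with $m$ replaced by $m\sigma$, on the filled strand, and checking that the swap in \eqref{eq:tensor_mult} does not interfere, yields the actual statement. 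As written, though, the proposal does not establish part (d).
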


\begin{proof}
\begin{enumerate}[(a)]
    \item  Self-adjointness follows from applying a dagger to the Frobenius equations \eqref{eq:frob}, or graphically by noticing their vertical symmetry.
    \item Composing the sharing map with itself and then applying specialness of the quantum set $X$ returns the sharing map.
    \item Let's see that sharing the unit produces the cup state.
    \ctikzfig{pf\_sharing\_unit}
    \item Using thin wires, and repeated applications of the Frobenius equation and associativity, we have
    \ctikzfig{pf\_sharing\_mult}
\end{enumerate}
\end{proof}

Property (\ref*{item:sharing_unit}) of sharing motivates one more piece of shorthand notation, allowing us to represent the cups and caps of $X$ using only thick wires. We visually retract the unit in property (\ref*{item:sharing_unit}) to define
\ctikzfig{sharing\_unit\_shorthand}

We denote the conjugate of sharing, with respect to the duality given by the thick-wire cups and caps of $X\otimes \op X$, by the filled square,
\ctikzfig{sharing\_conjugate}

This is also the transpose of sharing, due to self-adjointness, and is also sharing in the opposite quantum set $\op X$. 

Property (\ref*{item:sharing_mult}) of sharing is equivalent to two other equations relating sharing, the conjugate of sharing, and multiplication. The equivalence comes from bending wires.

\ctikzfig{sharing\_mult\_bent}

Another way to approach the relationship between sharing and multiplication, is that if we think of $X\otimes \op X$ as a left-module over itself, then sharing is a left-module endomorphism. Similarly, opposite sharing is a right-module endomorphism.

\subsection*{Synchronicity}

We now define synchronicity in a way that turns out to be distinct from both \cite{BKS21} and \cite{BHTT21}. See Section \ref{sec:other_approaches} for comparisons.

\begin{defn}[Synchronous map]
    \label{defn:sync_map}
    We say a linear map $f:X\otimes \op X\to A\otimes \op A$ is \emph{synchronous} if
    \ctikzfig{sync\_map}
\end{defn}

Note that the sharing map of $X$, although defined on $X\otimes X$, also defines a linear map on $X\otimes \op X$.

The above definition, in casual terms, says a map is synchronous if ``shared inputs lead to shared outputs''. If the linear map is a game or correlation, we get the notion \emph{synchronous game}, or \emph{synchronous correlation}.

It will sometimes be useful to use the equivalent thin-wire version of the above definition, given by
\ctikzfig{sync\_map\_thin\_wire}

Synchronous games and correlations with classical questions and answers are already an established notion, defined in \cite{PSSTW16}. Let's briefly call that property \emph{typically synchronous}. A correlation or game $f(a,b|x,y)$ is typically synchronous if for all $a\neq b$, $f(a,b|x,x) = 0$ for all $x$. It may be momentarily confusing to not give our more general property a unique name, but Proposition \ref*{prop:sync_is_sync} resolves this confusion.

\begin{prop}\label{prop:sync_is_sync}
    A classical correlation (game) is synchronous if and only if it is typically synchronous.
\end{prop}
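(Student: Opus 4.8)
The plan is to unpack the diagrammatic definition of synchronicity in the classical case, where all the Frobenius structure becomes the ``copy/delete'' comonoid on $C(X)$, and then compute both sides of the synchronicity equation on standard basis elements. First I would recall from Example \ref{exam:classical} that when $X$ is classical, $m(\ket i \otimes \ket j) = \delta_{ij}\ket i$ and $\Delta(\ket i) = \ket i \otimes \ket i$, so that the sharing map of $X$ is precisely $\ket x \otimes \ket x \mapsto \ket x \otimes \ket x$ on the diagonal and $\ket x \otimes \ket{x'} \mapsto 0$ for $x \neq x'$ — i.e., the orthogonal projection onto the span of the diagonal vectors $\{\ket x \otimes \ket x\}$. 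The same holds for the sharing map of $A$. Since $Y = \op X$ and $B = \op A$ with $\op X = X$, $\op A = A$ as classical sets (commutativity makes the opposite trivial), a classical correlation $P$ is a linear map $C(X\times X) \to C(A \times A)$ whose matrix entries are the conditional probabilities $p(a,b|x,y)$.

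Next I would evaluate the defining equation of Definition \ref{defn:sync_map}, namely ``sharing on outputs after $P$ after sharing on inputs equals $P$ after sharing on inputs'' (or equivalently the thin-wire version). Precomposing $P$ with the input sharing map restricts attention to the diagonal inputs $\ket x \otimes \ket x$; applying $P$ gives $\sum_{a,b} p(a,b|x,x)\,\ket a \otimes \ket b$; then applying the output sharing projection kills all off-diagonal terms, leaving $\sum_a p(a,a|x,x)\,\ket a \otimes \ket a$. So the synchronicity equation says exactly that $\sum_{a,b} p(a,b|x,x)\,\ket a\otimes\ket b = \sum_a p(a,a|x,x)\,\ket a\otimes\ket a$ for every $x$, which holds if and only if $p(a,b|x,x) = 0$ whenever $a\neq b$ — this is the definition of typically synchronous. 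For games the computation is identical with $\lambda(a,b|x,y)$ in place of $p$, using that the game map's coefficients are the $\{0,1\}$-valued rule function (and self-conjugacy is automatic in the classical case, as noted after Definition \ref{defn:game}), so $\lambda$ is synchronous iff $\lambda(a,b|x,x) = 0$ for all $a \neq b$, which is typical synchronicity for games.

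The only mild subtlety — and the step I would treat most carefully — is bookkeeping the identification $Y = \op X$, $B = \op A$ and confirming that in the classical (commutative) case the opposite quantum set genuinely coincides with the original (so the filled and empty nodes agree), ensuring the diagonal vectors $\ket x\otimes\ket x \in C(X\times \op X)$ are the correct fixed points of sharing; this is immediate from commutativity, $m\sigma = m$. I would also note explicitly that both sharing maps are projections onto the respective diagonals (Lemma \ref{lem:sharing}\eqref{item:sharing_proj}), so ``kills off-diagonal, keeps diagonal'' is exactly right. With these points in place the proof is a one-line basis computation in each direction, so I would present it compactly, likely folding the game case into a remark that the same computation applies verbatim.
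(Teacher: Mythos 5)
Your proof is correct and takes essentially the same route as the paper: both arguments reduce synchronicity in the classical case to the statement that the diagonal projection (sharing) fixes $P\ket{x}\ket{x}$, i.e., that $P\ket{x}\ket{x}$ lies in the diagonal subspace of $A\otimes A$, which is exactly $p(a,b|x,x)=0$ for $a\neq b$. The paper's version is just a terser statement of the same basis computation, with the same remark that the game case is identical.
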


\begin{proof}
    Let $X,A$ be classical. Let $P$ be a correlation $X\otimes X \to A \otimes A$. Denote the coefficients $p(a,b|x,y):= \bra a \bra b P \ket x \ket y.$ Being synchronous says that $P\ket x \ket x$ is in the diagonal subspace of $A\otimes A$. This means that the coefficients, $p(a,b|x,x)$ are zero unless $a=b$. The same proof applies when replacing the correlation $P$ by a game $\lambda$.
\end{proof}

\begin{remark}
A quantum processes interpretation of synchronicity for maps on classical sets: In the case that $X,A$ are classical, this equation can be understood as saying that after one instance of decoherence (the map $\Delta m$), the map $f$ commutes with any further decoherence. See \cite{PQP} for a diagrammatic definition of decoherence.
\end{remark}

\begin{defn}[Synchronous quantum function]
    \label{defn:sync_qfunc}
    We say a quantum function $\varphi: X\otimes \op X \to_{\cl H} A\otimes \op A$ is \emph{synchronous} if
    \ctikzfig{sync\_strat}
\end{defn}

The above definition gives us the notion of a \emph{synchronous strategy}. Note that if a strategy on $\cl H$ is  synchronous, then together with any normalized state $\ket \psi \in \cl H$, it realizes a synchronous correlation.

\section{Consequences of synchronicity}\label{sec:sync_cons}

\subsection*{Impact on perfect strategies}
One of the most useful properties of (typical) synchronous games is that they force associated perfect strategies to be synchronous. We generalize this result to quantum question and answer games.

\begin{thm}\label{thm:perf_sync}
    Perfect correlations for synchronous games are synchronous.
\end{thm}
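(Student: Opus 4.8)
The plan is to argue diagrammatically, combining the defining equation of a perfect correlation (Definition \ref{defn:perf_corr}) with the synchronicity of the game (Definition \ref{defn:sync_map}) and the structural properties of the sharing map collected in Lemma \ref{lem:sharing}. Write $P$ for the perfect correlation and $\lambda$ for the synchronous game, both maps $X\otimes\op X \to A\otimes\op A$. We want to show that postcomposing $P\circ(\text{sharing on }X\otimes\op X)$ lands in the ``shared'' subspace of $A\otimes\op A$, i.e.\ that sharing on the output is a no-op after this composite. The key inputs are: (i) perfection, which lets us insert $\lambda$ (as an idempotent-type coefficient) freely against $P$; (ii) synchronicity of $\lambda$, which says the sharing projection on the output absorbs into $\lambda$ whenever $\lambda$ is fed shared inputs; and (iii) the fact (Proposition \ref{prop:CP_self_conj}) that $P$ is self-conjugate, together with the remark after Definition \ref{defn:perf_corr} that $P$ and $\lambda$ may be swapped in the perfection equation.

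First I would set up the target diagram: the sharing map of $A\otimes\op A$ postcomposed with $P$ postcomposed with the sharing map of $X\otimes\op X$, and aim to show it equals $P$ postcomposed with the sharing map of $X\otimes\op X$. Second, I would use perfection in the form that lets me replace $P$ (when precomposed with input sharing) by $\lambda$ acting against $P$ — concretely, the perfection equation states that $\lambda$ composed appropriately with $P$ returns $P$, so I can graft a copy of $\lambda$ onto the output side. Third, with $\lambda$ now present and its inputs passing through the sharing projection, I would invoke the synchronicity equation for $\lambda$: the output sharing projection commutes past / is absorbed by $\lambda$ because $\lambda$ receives shared inputs. This moves the troublesome output sharing map onto the $\lambda$, where it disappears. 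Fourth, I would undo the perfection step (using the swapped version, valid by self-conjugacy of both maps) to remove $\lambda$ again, leaving exactly $P$ composed with input sharing, which is what we wanted. Cleaning up with Lemma \ref{lem:sharing}(b) (idempotency of sharing) handles any leftover doubled sharing maps.

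The main obstacle I anticipate is bookkeeping the ``realized by'' versus ``perfect correlation'' formulations and making sure the perfection equation is applied on the correct leg so that the inserted $\lambda$ sees \emph{shared} inputs — the synchronicity of $\lambda$ is only available in that situation, so the sharing projection on the input side must be threaded through to $\lambda$'s inputs before synchronicity can be used. A secondary subtlety is that $P$ is a CP counital map rather than merely self-conjugate, so I should check that only self-conjugacy (and perhaps counitality, if a marginal argument is needed) is actually invoked; I expect self-conjugacy plus the symmetry of the cups/caps for $X\otimes\op X$ (noted after the definition of the opposite quantum set) to suffice, with no appeal to complete positivity beyond Proposition \ref{prop:CP_self_conj}. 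If the direct insertion of $\lambda$ turns out to be awkward, a fallback is to argue coefficient-wise in the maximally-quantum-plus-classical decomposition of Remark \ref{rem:qset_structure}, but I would prefer the clean diagrammatic route.
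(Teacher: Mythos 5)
Your plan matches the paper's proof essentially step for step: insert $\lambda$ via the perfection equation, thread the input sharing through to $\lambda$'s leg, apply synchronicity of $\lambda$, thread the resulting output sharing back out, and remove $\lambda$ by perfection again. The only small mismatch is that the threading steps are carried out in the paper by Lemma \ref{lem:sharing}(\ref{item:sharing_mult}) (sharing commutes with the right leg of multiplication) rather than by idempotency of sharing, but you correctly identified that this threading is the crux, so the approach is the same.
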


\begin{proof}
    Let $P$ be a perfect correlation for a synchronous game $\lambda$. We have that
    \ctikzfig{pf\_perf\_sync}
    where
    \begin{enumerate}
        \item uses the definition of a perfect correlation,
        \item uses property (\ref*{item:sharing_mult}) of sharing,
        \item uses synchronicity of $\lambda$,
        \item again uses property (\ref*{item:sharing_mult}) of sharing, and
        \item again uses the definition of a perfect correlation.
    \end{enumerate}
\end{proof}

\begin{cor} \label{cor:perf_sync}
    Perfect strategies for synchronous games are synchronous.
\end{cor}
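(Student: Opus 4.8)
The plan is to deduce Corollary~\ref{cor:perf_sync} from Theorem~\ref{thm:perf_sync} essentially for free, by passing from a perfect strategy to the perfect correlation it realizes and then pulling the synchronicity conclusion back up through the resource wire. Concretely, let $\varphi : X \otimes \op X \to_{\cl H} A \otimes \op A$ be a perfect strategy for a synchronous game $\lambda$, over a finite-dimensional resource space $\cl H$. The first step is to attach a resource state: pick any normalized state $\ket\psi \in \cl H$ (e.g.\ a normalized vector), and form the correlation $P$ realized by $\varphi$ and $\ket\psi$ as in Definition~\ref{defn:real}; this exists by Remark~\ref{rem:corr_from_strat}. Since $\varphi$ is perfect and realizes $P$, the remark following Definition~\ref{defn:perf_strat} gives that $P$ is perfect for $\lambda$, so Theorem~\ref{thm:perf_sync} applies and $P$ is synchronous.

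The second step is to upgrade this from ``some correlation is synchronous'' to ``the strategy itself is synchronous'', i.e.\ to verify the equation of Definition~\ref{defn:sync_qfunc} rather than that of Definition~\ref{defn:sync_map}. The cleanest way is not to pick a single $\ket\psi$ but to run the same diagrammatic manipulation as in the proof of Theorem~\ref{thm:perf_sync} with the resource wire left open throughout. That is, I would insert the perfect-strategy equation of Definition~\ref{defn:perf_strat} in place of the perfect-correlation equation at steps (1) and (5), and keep the $\cl H$-wire as a spectator wire (it is untouched by property~(\ref*{item:sharing_mult}) of sharing and by synchronicity of $\lambda$, which only act on the quantum-set wires). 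The chain of equalities (1)--(5) then reads verbatim as a statement about $\varphi$ and produces exactly the diagram of Definition~\ref{defn:sync_qfunc}: post-composing $\varphi$ with sharing on $A \otimes \op A$ equals pre-composing $\varphi$ with sharing on $X \otimes \op X$ (both with the $\cl H$-wire carried along unchanged). Hence $\varphi$ is a synchronous quantum function, i.e.\ a synchronous strategy.

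I would present the argument in the second form, since it is self-contained and avoids any worry about whether synchronicity of $P$ for all choices of $\ket\psi$ suffices to recover synchronicity of $\varphi$ (it does, because states of $\cl H$ span, but spelling that out is more work than just re-running the five-line diagram chase). The main — and really only — obstacle is a bookkeeping one: checking that each of the five steps in the proof of Theorem~\ref{thm:perf_sync} still goes through with an extra idle $\cl H$-wire present. Steps (2) and (4) are property~(\ref*{item:sharing_mult}) of sharing, which is a statement purely about the quantum-set structure and is unaffected by additional wires; step (3) is synchronicity of $\lambda$, which likewise does not see $\cl H$; and steps (1) and (5) are the definition of a perfect strategy (Definition~\ref{defn:perf_strat}), which is precisely the resource-wire version of the perfect-correlation equation. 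So no genuine difficulty arises, and the corollary follows.
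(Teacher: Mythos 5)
Your proposal is correct and, in the form you say you would present it (re-running the five-step diagram chase of Theorem~\ref{thm:perf_sync} with the resource wire carried along as a spectator), it is exactly the paper's own proof, which likewise notes the corollary is proved by replacing $P$ with $\varphi$ and tacking on the Hilbert space wire in each diagram. Your careful check that each step is unaffected by the idle $\cl H$-wire is the right bookkeeping and matches the paper's intent.
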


\begin{proof}
    While not truly a corollary, the steps of the proof are identical to the above, with $P$ replaced by a strategy $\varphi$, and Hilbert space wires tacked on to $\varphi$ in each diagram.
\end{proof}

We have presented the results in this order as the proof is cleaner and easier to read without the additional Hilbert space wire.

\subsection*{First examples of synchronous games}
\label{sec:example_games}

Let us now make some guesses as to what may constitute a synchronous quantum game. Our candidates are the identity map and the sharing map. In the classical case, the identity map can be thought of as the game \emph{Simon Says}, where Alice and Bob must each repeat their question as their answer, and the sharing map can be thought of as an unfair \emph{Simon Says}, where they can only win if their questions also agree with each other.

These are both synchronous maps, due to the fact that sharing is a projection. The identity is always a game, by specialness of a quantum set. However, this is not true of sharing.

\begin{prop}\label{prop:classical_sharing_game}
    $X$ is classical if and only if sharing is a game.
\end{prop}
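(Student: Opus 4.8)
The plan is to unpack what it means for the sharing map $s : X \otimes \op X \to X \otimes \op X$ to be a game, using Definition \ref{defn:game}, and compare this to specialness. A game is a linear map $\lambda : X \otimes Y \to A \otimes B$ (here $Y = \op X$, $A = X$, $B = \op X$) satisfying the idempotency-type condition $\tikzfig{game\_idem}$ together with self-conjugacy; since sharing is already known to be self-conjugate (it is self-adjoint by Lemma \ref{lem:sharing}(\ref*{item:sharing_sa}), and its transpose equals opposite sharing, which under the symmetry equation \eqref{eq:symmetric} is identified appropriately), the content lives entirely in the idempotency condition. The plan is to spell out the diagrammatic idempotency condition for $\lambda = s$ and simplify both sides using the properties of sharing from Lemma \ref{lem:sharing}.

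First I would write out the game idempotency diagram for $s$ in terms of thin wires. The left-hand side involves comultiplying the question wires, feeding through two copies of $s$, and multiplying the answer wires back together (this is the general shape of $\tikzfig{game\_idem}$); the right-hand side is just $s$ itself. Using Lemma \ref{lem:sharing}(\ref*{item:sharing_proj}) (sharing is a projection) and Lemma \ref{lem:sharing}(\ref*{item:sharing_mult}) (sharing commutes with the right leg of multiplication), together with the dual statements obtained by bending wires as noted after the lemma, I would reduce the idempotency condition to an equation purely in the Frobenius structure of $X$. My expectation is that the condition collapses to something equivalent to $m \Delta$ being a projection onto a "copyable" subalgebra in a way that is only consistent when $m\Delta = \mathrm{Id}$ composed suitably — but more precisely, I expect it to reduce to the statement that the multiplication $m$ restricted to the image of $\Delta$ behaves like a commutative (copying) operation, i.e. that $X$ is commutative, equivalently classical.

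For the forward direction (sharing a game $\Rightarrow$ $X$ classical), the cleanest route is probably to use the structure theorem, Remark \ref{rem:qset_structure}: write $X = \bigoplus_k M_{n_k}(\mathbb C)$, compute the sharing map and the game-idempotency composite explicitly on the matrix-unit basis $E^k_{ij}$, and observe that the idempotency equation fails as soon as some $n_k \geq 2$ — e.g. by evaluating both sides on a suitable basis element or cup state and extracting a scalar discrepancy involving $1/\sqrt{n_k}$ versus $1/n_k$. For the converse (classical $\Rightarrow$ sharing a game), I would invoke Example \ref{exam:classical}: when $X$ is classical, sharing is the projection onto the diagonal $\ket x \otimes \ket x$, and one checks directly that the game idempotency diagram holds, since copying, projecting to the diagonal, and merging all interact coherently on the standard basis (this is essentially the first "Example \ref{exam:classical} continued" remark about three equivalent ways of projecting onto pairs of equal elements). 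The main obstacle I anticipate is the bookkeeping in the forward direction: getting the diagrammatic idempotency condition into a form where the failure of commutativity is manifest, and in particular making sure the self-conjugacy half of the game definition genuinely contributes nothing new, so that the whole equivalence rests on the single idempotency equation. If the purely diagrammatic reduction proves unwieldy, falling back on the explicit matrix computation via Remark \ref{rem:qset_structure} should always close the argument.
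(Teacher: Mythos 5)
Your overall strategy matches the paper's: for ``sharing is a game $\Rightarrow X$ classical'' the paper also uses the block decomposition of Remark \ref{rem:qset_structure} and extracts a scalar discrepancy from the idempotency equation, and for the converse it verifies idempotency directly for classical $X$ (via the spider theorem for $\dagger$-SCFAs rather than a basis-by-basis computation, but this is the same check). The paper's precise instantiation of your ``evaluate on the cup state and extract a scalar'' step is worth recording, since it is the cleanest way to close your argument: cap off the game idempotency equation for sharing with units and counits; the left-hand side then evaluates to the number of blocks $K$ (this closed diagram is exactly what the paper later names the classical dimension), while the right-hand side evaluates to the circle, $\dim X = \sum_k n_k^2$. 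The equality $K = \sum_k n_k^2$ forces every $n_k = 1$, so you never need to chase the idempotency equation on individual matrix units.

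One claim in your setup is wrong, however: sharing is \emph{not} self-conjugate in general, and the symmetry axiom does not make it so. The conjugate of sharing (with respect to the thick-wire duality on $X \otimes \op X$) is sharing for the opposite algebra, i.e.\ the Frobenius map built from $m\sigma$ rather than $m$; the symmetry equation $\varepsilon m \sigma = \varepsilon m$ identifies these only after composing with the counit, not as maps. They coincide exactly when $m\sigma = m$, i.e.\ when $X$ is commutative --- which is why the paper's proof of the forward direction begins by invoking classicality, via $X = \op X$, to get self-conjugacy. Your argument survives because self-conjugacy is only needed in the direction where $X$ is assumed classical (in the other direction it is part of the hypothesis and is simply not used), but the justification ``the content lives entirely in the idempotency condition'' should be replaced by the observation that self-conjugacy holds for classical $X$ since $X = \op X$.
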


\begin{proof}
    Let $X$ be classical. Then sharing is self-conjugate because $X = X^\mathrm{op}$.

    For classical sets, we have access to the spider theorem for $\dagger$-SCFAs, that asserts that all connected diagrams from $n$ input wires to $k$ output wires using only the quantum set operations are equal. For a full treatment of the spider theorem, see \cite[Section 5.2]{catsQT}. We can use the spider theorem to show sharing is a game. The left-hand-side of the idempotency equation for games is a connected 2-input, 2-output classical spider. By the spider theorem, this is equal to any other connected 2-input, 2-output spider, and so the idempotency equation defining a game holds.

    Conversely, assume sharing is a game. Recall that $X = \bigoplus_{k=1}^{K} M_{n_k}$, as in Remark \ref{rem:qset_structure}. We first show graphically that
    $$
        \tikzfig{classical\_dim\_sharing} = K,
    $$
    as follows.
        \ctikzfig{pf\_classical\_dim\_sharing}
    with the coefficients introduced as given in Remark \ref{rem:qset_structure}, due to four instances of multiplication or comultiplication. But the complicated looking knot is just two closed loops in $\mathbb{C}^{n_k}$, so is equal to $\dim^2 \mathbb{C}^{n_k} = n_k^2$, so we get 
    $$
        \tikzfig{classical\_dim\_sharing} = \sum_{k=1}^K 1 = K.
    $$
    Then applying units and counits to one of the defining equations for sharing being a game, we get
    \ctikzfig{pf\_classical\_sharing\_game}
    where the final loop comes from the fact that $X$ is special. The loop evaluates to $\dim X$. We conclude that $K = \dim X$, so we must have $n_k=1$ for all $k$. Therefore, $X$ is classical.
\end{proof}

So in general, sharing is synchronous but it is not a game outside of the classical setting. The identity map is our only example so far of a synchronous game with non-classical questions and answers.

We can also look at the \emph{function} game $\lambda_\to$, given by
    \ctikzfig{function\_game}
This is a game, and is synchronous. In fact, it is the ``easiest'' synchronous game, in the sense that perfect strategies for other synchronous games can also win the function game.

\begin{prop}[Function game is the easiest synchronous game]
    Given any synchronous game, $\lambda:X\otimes \op X \to A \otimes \op A$, the function game on the same sets commutes with $\lambda$, and satisfies $$\lambda \star \lambda_\to = \lambda.$$
\end{prop}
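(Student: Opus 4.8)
The plan is to unwind the definitions of the function game $\lambda_\to$, of the commuting product $\star$, and of synchronicity, and to check the two required facts diagrammatically. First I would recall what $\lambda_\to$ is: reading off its diagram, it takes $X\otimes \op X \to A\otimes \op A$ by applying some fixed map on the first ($X\to A$) leg and then copying that output across both output legs via the comultiplication of $A$ (paired with a counit eating the $\op X$ input), so that its ``coefficients'' are supported exactly on the diagonal outputs of $A\otimes\op A$. In other words, the range of $\lambda_\to$ is contained in the shared subspace of $A\otimes\op A$; equivalently, pre/post-composing $\lambda_\to$ with sharing on the output does nothing, $(\text{sharing}_A)\circ\lambda_\to=\lambda_\to$. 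This is the structural property of the function game I would isolate first, since both halves of the proposition will flow from it.

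Next I would verify that $\lambda_\to$ commutes with $\lambda$ in the sense required by Definition of the commuting product — i.e. that the two composites $\lambda_\to\star\lambda$ and $\lambda\star\lambda_\to$ agree, so that $\lambda\star\lambda_\to$ is well defined and is a game. The natural way is to use that $\lambda_\to$ acts as ``apply a map on the $X$-leg, then comultiply-and-share on the $A$-side'', together with Lemma \ref{lem:sharing}(\ref*{item:sharing_mult}) (sharing commutes with the right leg of multiplication) and Proposition \ref{prop:CP_self_conj}/self-conjugacy of games, to slide the $\lambda_\to$ part of one composite past the $\lambda$ part of the other. I expect the commutation to reduce to the module-endomorphism remark after Lemma \ref{lem:sharing}: thinking of $A\otimes\op A$ as a left module over itself, $\lambda_\to$ factors through sharing, which is a left-module endomorphism, and so commutes with any map that behaves multiplicatively on that side — which the Frobenius/game structure of $\lambda$ supplies.

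For the identity $\lambda\star\lambda_\to=\lambda$, the diagram for $\lambda\star\lambda_\to$ is (by definition of $\star$) the map obtained by feeding the shared input through both $\lambda$ and $\lambda_\to$ and merging the outputs with multiplication on $A\otimes\op A$. Since $\lambda_\to$'s output already lies in the shared subspace of $A\otimes\op A$ and since $\lambda$ is synchronous, $\lambda$ applied after sharing on the input also lands in the shared subspace (this is exactly Definition \ref{defn:sync_map} together with the fact, used in Theorem \ref{thm:perf_sync}, that sharing on the input of a synchronous map can be pushed to sharing on the output). Multiplying two elements of the shared subspace and using that $\lambda_\to$ restricted to shared outputs acts as the ``copy'' comultiplication/identity — i.e. that merging with the all-ones pattern that $\lambda_\to$ produces on the diagonal is the identity on the shared subspace — collapses the product back to $\lambda$. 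Concretely I would: (1) rewrite $\lambda\star\lambda_\to$ with an extra sharing inserted in front of $\lambda$ (free, by synchronicity); (2) use Lemma \ref{lem:sharing}(\ref*{item:sharing_mult}) to move that sharing through the multiplication of the $\star$-product; (3) recognize the resulting $\lambda_\to$-plus-merge gadget as the sharing projection, which is idempotent and absorbed; (4) remove the now-redundant sharing using synchronicity again, recovering $\lambda$.

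The main obstacle I anticipate is bookkeeping the $\op{}$-decorations and the crossed wires: $\lambda_\to$ involves a counit on the $\op X$ input and a comultiplication producing an $A\otimes \op A$ output, and one has to be careful that the copy on the output uses the comultiplication compatible with the thick-wire structure of $A\otimes\op A$ (so that it genuinely equals the sharing projection after merging), and that the symmetry equation \eqref{eq:symmetric} is invoked to identify the cups/caps of $A$ and $\op A$. Once the diagram for $\lambda_\to$ is correctly massaged into ``sharing on the $A$-side composed with a function on the $X$-side'', both claims are short planar-isotopy manipulations driven by Lemma \ref{lem:sharing}; the commutation claim is the more delicate of the two and is where I would spend the most care.
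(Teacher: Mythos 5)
The paper states this proposition without proof, so there is nothing to compare your argument against; judged on its own it has a genuine gap, and the gap is at the very first step: you have misidentified the function game. You take $\lambda_\to$ to be a map whose output always lies in the shared subspace of $A\otimes\op A$, i.e.\ satisfying $S_A\circ\lambda_\to=\lambda_\to$ where $S_A$ denotes sharing on $A\otimes\op A$, and everything downstream leans on this. But $\lambda_\to$ is the $G=0$ specialization of the quantum graph homomorphism game: classically its rule is $\lambda_\to(a,b|x,y)=1-\delta_{xy}+\delta_{xy}\delta_{ab}$ (``if the questions agree the answers must agree; otherwise anything goes''), and diagrammatically it is the three-term sum $\mathrm{cup}_A\circ\mathrm{cap}_X+u_{A\otimes\op A}\,\varepsilon_{X\otimes\op X}-u_{A\otimes\op A}\circ\mathrm{cap}_X$. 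Its range is not contained in the shared subspace: by Lemma \ref{lem:sharing}, sharing sends the unit to the cup, so the middle term is not fixed by $S_A$. Worse, with the game you describe the proposition would be false: for $\lambda=\mathrm{id}$ (a synchronous game) one has $(\mathrm{id}\star\lambda_\to)(x,y|x,y)=\lambda_\to(x,y|x,y)$, which must equal $1$ for all $x,y$ to recover the identity, whereas a diagonally-supported $\lambda_\to$ gives $0$ when $x\neq y$. The name ``easiest synchronous game'' is the clue: $\lambda_\to$ is the weakest rule compatible with synchronicity, not a rule forcing equal answers on all inputs. A second, independent error is step (1) of your four-step plan: synchronicity reads $S_A\circ\lambda\circ S_X=\lambda\circ S_X$, which lets you delete an output sharing in the presence of an input sharing; it does not let you insert a sharing in front of $\lambda$ ``for free''.

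The correct argument is a short expansion rather than an absorption. Each term of $\lambda_\to$ is a disconnected diagram $\ket{s}\bra{t}$, and $\star$-ing $\lambda$ against such a term amounts to multiplying the output of $\lambda$ by the element $s$ and co-multiplying its input against the effect $t$. By the Frobenius equation together with Lemma \ref{lem:sharing} (the cup equals sharing applied to the unit, and sharing commutes with one leg of multiplication), multiplication by the cup on either side is exactly the sharing projection, while multiplication by the unit and co-multiplication against the counit are identities. Hence
$\lambda\star\lambda_\to=S_A\circ\lambda\circ S_X+\lambda-\lambda\circ S_X$,
and synchronicity cancels the first and third terms, leaving $\lambda$. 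Commutativity of the product likewise falls out of this identification --- left and right multiplication by the cup agree (both equal sharing, by the two halves of the Frobenius equation), as do left and right multiplication by the unit --- and not from $\lambda$ being a module endomorphism, which the game axioms do not supply.
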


The proof follows directly from computing both products, and using synchronicity of $\lambda$ (with respect to both sharing and conjugate sharing) to cancel two of the three terms.

We will construct examples, in Section \ref{sec:graph_games}, of more interesting synchronous games with truly quantum questions and answers. 

We can also always construct more synchronous games by taking the commuting product of any game that commutes with a synchronous game, because of the following fact.

\begin{prop}\label{prop:prod_sync_sync}
    If $\lambda$ is synchronous, then $\lambda' \star \lambda$ is synchronous.
\end{prop}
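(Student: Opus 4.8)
The plan is to show that synchronicity of $\lambda' \star \lambda$ follows from synchronicity of $\lambda$ alone, exploiting the fact that the commuting product is, by definition, a composition of the two games, and that sharing interacts well with multiplication. Recall that $\lambda' \star \lambda$ is the map obtained by composing $\lambda'$ and $\lambda$ in the manner depicted just before the statement (the ``\texttt{games\_commute}'' diagram): the two games share the same input and output quantum sets $X\otimes \op X$ and $A\otimes \op A$, and their product is formed by feeding the output of one into the input of the other via the multiplication on $A\otimes \op A$ (and splitting the input via the comultiplication on $X \otimes \op X$), using the commuting hypothesis to make this unambiguous.

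First I would write out the definition of synchronicity for $\lambda' \star \lambda$: precompose the product with the sharing map on $X\otimes \op X$ and check that the result equals postcomposing with sharing on $A\otimes \op A$ (equivalently, that shared inputs produce shared outputs). Then I would use property (\ref{item:sharing_mult}) of Lemma \ref{lem:sharing} — sharing commutes with (one leg of) multiplication — to push the input-side sharing map through the comultiplication/multiplication structure gluing $\lambda'$ and $\lambda$ together, so that it lands directly on the input wire feeding $\lambda$ (or $\lambda'$, whichever we route it to). At that point synchronicity of $\lambda$ converts the input-side sharing on $\lambda$'s wire into an output-side sharing on $\lambda$'s output wire. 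Finally I would use property (\ref{item:sharing_mult}) again, in the other direction, to pull that sharing map back out through the multiplication on $A\otimes \op A$ that defines the product, arriving at a sharing map on the overall output of $\lambda' \star \lambda$. This is structurally the same five-move dance as in the proof of Theorem \ref{thm:perf_sync}, with ``perfect correlation'' replaced by ``commuting product'' and the role of $P$ replaced by $\lambda'$.

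The step I expect to be the main obstacle is the bookkeeping of the wire crossings in the commuting-product diagram: because $Y = \op X$ and $B = \op A$ and the product is drawn with filled/unfilled nodes for the opposite multiplications, one has to be careful that the sharing map being pushed through really does commute with the particular (opposite-algebra) multiplication appearing at that spot, and that the thick-wire conventions for $X\otimes \op X$ and $A \otimes \op A$ are used consistently. The remark before Definition \ref{defn:sync_map} that the thin-wire and thick-wire versions of synchronicity agree, together with the observation that the filled and unfilled nodes are exchanged under conjugation, should handle this, but it needs to be tracked explicitly in the diagrammatic computation. I would present the argument as a single string-diagram chain with a numbered list of justifications (definition of commuting product; property (\ref{item:sharing_mult}); synchronicity of $\lambda$; property (\ref{item:sharing_mult}) again), mirroring the format used for Theorem \ref{thm:perf_sync}.
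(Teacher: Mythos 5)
Your proposal is correct and is essentially the paper's own proof: the paper simply states that the argument is steps 2--4 of the proof of Theorem \ref{thm:perf_sync} (push sharing through the multiplication via Lemma \ref{lem:sharing}(\ref{item:sharing_mult}), apply synchronicity of $\lambda$, pull sharing back out), with $P$ replaced by $\lambda'$, which is exactly the three middle moves you describe. The only cosmetic difference is that you frame it as the full ``five-move dance,'' whereas the outer two moves (the perfect-correlation rewrites) are not needed here since the commuting product already supplies the multiplication structure.
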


\begin{proof}
    The proof is identical to steps 2-4 of the proof of Theorem \ref{thm:perf_sync}, replacing $P$ with $\lambda'$.
\end{proof}

Some of the games in this section motivate one more definition, that applies to a general game $\lambda:X\otimes Y\to A\otimes B$.

\begin{defn}[Fair game]
We say a game is \emph{fair} if there is no $\ket x \in X, \ket y \in Y$ such that $\lambda(\ket x \otimes \ket y) = 0$.    
\end{defn}

The identity map and the function game are fair games. The graph games in Section \ref{sec:graph_games} are also fair.

\subsection*{Classical dimension}  

The proof of Proposition \ref{prop:classical_sharing_game} motivates us to define the classical dimension of $X$ as the number $K$ of direct summands when $X$ is decomposed as a direct sum of matrix algebras. In diagrams, we can use the sharing map to make our definition without appealing to a structure theorem.

\begin{defn}[Classical dimension]
\label{def:classical_dim}
The \emph{classical dimension} of a quantum set $X$ (seen from the perspective of $X\otimes \op X$) is
\ctikzfig{classical\_dim\_sharing}
\end{defn}

We call this notion ``classical dimension'' because it counts classically distinct subspaces. At the level of Abelian categories, this notion is called the \emph{length} of $X$ as an object of $\textbf{FHilb}$. It should not be confused with other notions of classical dimension that may be in the literature.

We have defined classical dimension of $X$ using the wires of $X\otimes \op X$. Using the notation for the opposite Frobenius algebra, we can give a simpler, equivalent definition.

\begin{prop}
The classical dimension of a quantum set $X$ is also given by
\tikzfig{classical\_dim}.
\end{prop}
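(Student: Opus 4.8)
The plan is to show that the diagram defining classical dimension in terms of the $X\otimes\op X$ wires equals the diagram $\tikzfig{classical\_dim}$ built from the $X$-wires and the opposite Frobenius structure. The key observation is that the sharing map of $X$ was defined (Definition, Sharing) precisely as the Frobenius map of $X$, i.e.\ the composite going up through a comultiplication of $X$ and down through a multiplication of $X$ on the thin $X$-wires --- and the thick-wire cups and caps of $X\otimes\op X$ restrict to the ordinary cups and caps of $X$ and of $\op X$ on the two thin strands. So I would first unfold the thick-wire shorthand in $\tikzfig{classical\_dim\_sharing}$ back into the two thin $X$-wires, using the shorthand conventions established in the ``Sharing'' subsection (in particular that the thick-wire cup/cap on $X\otimes\op X$ is the pair of thin cups/caps, one for $X$ and one for $\op X$, which coincide by symmetry \eqref{eq:symmetric}).

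Next I would carry out the actual diagram manipulation. After unfolding, the classical-dimension diagram becomes a closed diagram on two thin $X$-wires: one strand carries the $X$-Frobenius structure (from sharing) and the partnering cup/cap strand carries the $\op X$-structure. The goal is to slide the $\op X$-cap/cup strand around and absorb it, using the snake equations for $X$ (and for $\op X$), and the fact noted right after the definition of the opposite quantum set that the filled (opposite) comultiplication is the transpose of the empty multiplication and vice versa. Concretely, bending the $\op X$-strand across via the snake equations should collapse the closed loop involving $\op X$ and leave exactly the diagram $\tikzfig{classical\_dim}$: a single $X$-wire with a comultiplication of $X$ feeding an opposite multiplication $m\sigma$ (or its dagger), closed up by the $X$-cup and cap. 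This is just an application of planar oriented isotopy plus the snake equations, both of which are available to us by the ``Graphical proofs'' remark and the self-duality of quantum sets.

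Finally, to be safe, I would sanity-check the claim against the structure theorem of Remark \ref{rem:qset_structure}: both expressions should evaluate to $K$, the number of matrix blocks $M_{n_k}$. The left-hand side was already computed to equal $\sum_k 1 = K$ inside the proof of Proposition \ref{prop:classical_sharing_game} (the ``complicated knot'' being two closed loops in $\mathbb{C}^{n_k}$ contributing $n_k^2$, cancelled by the $1/n_k^2$ from four (co)multiplications), so it suffices to check that $\tikzfig{classical\_dim}$ also evaluates block-by-block to $\sum_k 1$. Using the formulas $\Delta(E^k_{ij}) = \tfrac{1}{\sqrt{n_k}}\sum_t E^k_{it}\otimes E^k_{tj}$, the opposite multiplication $m\sigma$, and the counit $\varepsilon(a)=\sum_k\sqrt{n_k}\mathrm{Tr}(a_k)$, one gets the same count.

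The main obstacle I anticipate is purely bookkeeping: tracking orientations and the filled-versus-empty node convention for the opposite algebra while unfolding the thick wire, and making sure the cup/cap used to close up the diagram on the $X\otimes\op X$ side really does restrict correctly to the $X$-cup and the $\op X$-cup. Once the thick-wire shorthand is carefully expanded, the remaining steps are routine isotopy and snake-equation moves, so I would present the argument as a short sequence of diagrams with the unfolding step spelled out explicitly.
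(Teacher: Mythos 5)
Your proposal is correct, but your primary route is genuinely different from the paper's. The paper makes no attempt to rewrite one diagram into the other: it simply evaluates the new diagram block-by-block using the decomposition $X = \bigoplus_k M_{n_k}(\mathbb{C})$ of Remark \ref{rem:qset_structure} (two (co)multiplication nodes contributing $(1/\sqrt{n_k})^2$, one Hilbert-space loop contributing $\dim \mathbb{C}^{n_k} = n_k$), obtaining $\sum_{k}1 = K$, and then relies on the fact that the defining thick-wire diagram was already shown to equal $K$ inside the proof of Proposition \ref{prop:classical_sharing_game}. That is precisely your closing ``sanity check,'' so your proposal contains a complete proof even if the diagrammatic rewriting were dropped entirely. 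The rewriting is the more ambitious part, and it is slightly under-justified as stated: unfolding the thick-wire cup and cap into nested thin cups and caps and applying two snake moves does collapse the trace of sharing to a theta-shaped closed diagram, but identifying that with the target diagram built from the opposite multiplication $m\sigma$ is not ``just planar oriented isotopy plus snake equations'' --- you must also absorb the swap hidden in $m\sigma$ under the trace closure, which uses the symmetry axiom \eqref{eq:symmetric} (equivalently the transpose relation between the filled and empty nodes that you cite). What your diagrammatic route would buy is an equality established without appealing to the structure theorem; what the paper's route buys is brevity and no bookkeeping of orientations and crossings. Either way the statement is established, since your fallback argument coincides with the paper's proof.
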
 

Here, the horizontal wire is a common notational shortcut coming from a direct consequence of the Frobenius equation. Recall that multiplication and comultiplication nodes can be freely rotated in the plane, keeping the inputs/outputs fixed.

\begin{proof}
    This diagram also counts the number of matrix blocks (denoted $K$) in the quantum set $X$, as follows (see Remark \ref{rem:qset_structure} for block decomposition of $X$):
        $$
        \tikzfig{classical\_dim} = \sum_{k=1}^K \left(\frac{1}{\sqrt{n_k}}\right)^2 \tikzfig{classical\_dim\_pop} = \sum_{k=1}^K \frac{1}{n_k}\ \tikzfig{hilb\_loop}
        = \sum_{k=1}^K \frac{\dim \mathbb C^{n_k}}{n_k} 
        =\sum_{k=1}^K 1
        = K
        $$
\end{proof}

We can now make a key observation that follows directly from applying the definition of classical dimension to the statement of Proposition \ref{prop:classical_sharing_game}.

\begin{cor}[of Proposition \ref{prop:classical_sharing_game}]
    A quantum set $X$ is classical if and only if its (quantum) dimension is equal to its classical dimension.
\end{cor}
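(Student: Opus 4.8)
The statement to prove is the corollary: a quantum set $X$ is classical if and only if its quantum dimension $\dim X$ equals its classical dimension $K$. The plan is to reduce everything to the block decomposition $X = \bigoplus_{k=1}^{K} M_{n_k}(\mathbb{C})$ from Remark \ref{rem:qset_structure} and compare the two dimension counts directly. The quantum dimension $\dim X$ is just $\dim_{\mathbb C} X = \sum_{k=1}^K n_k^2$, which one can see diagrammatically as the loop $\varepsilon \Delta u$ evaluated on $X$ (the ``loop evaluates to $\dim X$'' fact already used in the proof of Proposition \ref{prop:classical_sharing_game}). The classical dimension, by the Proposition just established, is $\sum_{k=1}^K 1 = K$.

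So the core of the argument is the elementary inequality $\sum_{k=1}^K n_k^2 \geq \sum_{k=1}^K 1 = K$, with equality if and only if every $n_k = 1$. Concretely: each term satisfies $n_k^2 \geq 1$ since $n_k \geq 1$ is a positive integer, and $n_k^2 = 1 \iff n_k = 1$; summing, $\dim X = K$ exactly when $n_k = 1$ for all $k$, which is precisely the definition of $X$ being classical (a finite direct sum of copies of $\mathbb C$). This gives both directions at once.

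Alternatively — and this may be the cleaner way to present it given that the surrounding text is diagrammatic — one can simply invoke Proposition \ref{prop:classical_sharing_game} together with the closing paragraph of its proof. That proof shows sharing is a game iff $K = \dim X$, and separately that sharing is a game iff $X$ is classical; chaining these equivalences yields the corollary with no new computation. I would actually state the proof in this second form, since it keeps the corollary genuinely a corollary: ``By Proposition \ref{prop:classical_sharing_game}, $X$ is classical iff sharing is a game, and the proof of that proposition shows sharing is a game iff $K = \dim X$, where $K$ is the classical dimension of $X$ (Definition \ref{def:classical_dim}).''

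There is no real obstacle here — the only thing to be a little careful about is making sure the two notions of ``dimension'' are unambiguously matched to the two diagrammatic/numeric quantities ($\dim X = \sum n_k^2$ from the loop, classical dimension $= K$ from the Proposition above), and that the phrase ``classical'' is pinned to the condition $n_k = 1$ for all $k$ as in the discussion following Remark \ref{rem:qset_structure}. Once those identifications are spelled out, the equivalence is immediate from $\sum n_k^2 = K \iff$ all $n_k = 1$.
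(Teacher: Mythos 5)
Your proposal is correct and matches the paper's intent: the corollary is stated as a restatement of the key takeaway of Proposition \ref{prop:classical_sharing_game}, whose proof already contains the identification of the classical dimension with the number of blocks $K$ and of the quantum dimension with $\dim X = \sum_k n_k^2$, so the equivalence reduces to $\sum_k n_k^2 = K$ iff all $n_k = 1$, exactly as you argue. Either of your two presentations is fine; the direct one via the block decomposition is the cleaner self-contained justification.
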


\subsection*{Structure of synchronous correlations}

\begin{lemma} \label{lem:loop}
    A synchronous correlation $P:X\otimes \op X \to A \otimes \op A$ can ``slide upwards'' off a loop:
    \ctikzfig{loop\_slide\_sync}
    
    A synchronous (combined) strategy $\varphi: X\otimes \op X\to_{\cl H} A \otimes \op A$ can ``slide upwards'' off a loop, leaving behind its resource wire:
    \ctikzfig{loop\_slide\_sync\_strat}
\end{lemma}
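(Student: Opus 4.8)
The plan is to establish the correlation version first, then obtain the strategy version by the same argument with a resource wire threaded through (exactly as Corollary \ref{cor:perf_sync} was deduced from Theorem \ref{thm:perf_sync}). The key observation is that the left-hand side of the ``slide'' equation contains a sharing map fed into $P$ along some of its input legs, so synchronicity of $P$ should let us move the ``shared'' structure from the inputs of $P$ to its outputs, after which the loop can be rerouted purely through the output side and detached.

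More concretely, I would proceed as follows. First, rewrite the loop on the left-hand side so that the wire closing the loop passes through a cup and a cap of $X \otimes \op X$; using Lemma \ref{lem:sharing}(\ref{item:sharing_unit}) and the shorthand retracting the unit into a cup, the cap-and-loop configuration feeding $P$ should be massaged into the form ``sharing composed with $P$'' on the relevant legs — i.e.\ I want to expose a copy of the sharing map sitting on the input wires of $P$. Second, apply Definition \ref{defn:sync_map}: synchronicity says $P$ precomposed with sharing equals $P$ postcomposed with sharing on $A \otimes \op A$ (in the thin-wire form, $\Delta_A m_A$ on the appropriate output legs). This pushes the sharing projection from below $P$ to above $P$. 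Third, on the output side the sharing map (being a projection onto the ``diagonal'', by Lemma \ref{lem:sharing}(\ref{item:sharing_proj}), and interacting with multiplication via Lemma \ref{lem:sharing}(\ref{item:sharing_mult})) can be used to collapse the loop: the cup/cap that formed the loop now meets the sharing map on the $A$-side, and by property (\ref{item:sharing_mult}) together with the snake equations the loop slides off the top, leaving $P$ with the loop replaced by whatever free wires remain. I would carry out this last step diagrammatically, chaining property (\ref{item:sharing_mult}) and planar isotopy, much as in the proof of Theorem \ref{thm:perf_sync}.

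For the strategy version, the same sequence of moves applies verbatim with $P$ replaced by the combined strategy $\varphi$ and a Hilbert-space wire $\cl H$ running alongside; synchronicity of $\varphi$ (Definition \ref{defn:sync_qfunc}) is used in place of synchronicity of $P$, and since none of the sharing manipulations touch the $\cl H$ wire, it simply comes along for the ride and is ``left behind'' after the loop detaches — which is precisely the asymmetry recorded in the statement.

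The main obstacle I anticipate is purely bookkeeping: correctly identifying which legs of $P$ (or $\varphi$) the loop attaches to and verifying that, after bending wires, the configuration feeding those legs really is the sharing map (as opposed to its conjugate, the filled square, or $\op{X}$-sharing) so that Definition \ref{defn:sync_map} applies on the nose. Getting the orientations and the thick-wire/thin-wire translations consistent — in particular using symmetry \eqref{eq:symmetric} to identify the cups/caps of $X$ and $\op X$ — is where care is needed; once the sharing map is correctly exposed, the rest is a routine application of synchronicity followed by Lemma \ref{lem:sharing}(\ref{item:sharing_mult}) and the snake equations.
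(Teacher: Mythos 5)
Your first two moves are the right ones and essentially match the paper's: use Lemma \ref{lem:sharing}(\ref{item:sharing_unit}) (and its dagger) to rewrite the cup feeding the inputs of $P$ and the cap closing its outputs as ``sharing applied to the unit'' and ``the counit applied to sharing'', so that the closed diagram becomes $\varepsilon_{A\otimes\op A}\circ\mathrm{sh}_A\circ P\circ\mathrm{sh}_X\circ u_{X\otimes\op X}$, and then invoke synchronicity to replace $\mathrm{sh}_A\circ P\circ\mathrm{sh}_X$ by $P\circ\mathrm{sh}_X$. But your third step is where the argument breaks. The content of the lemma is that $P$ \emph{vanishes} from the diagram, leaving only a bare loop on $X$ (equivalently, $\mathrm{tr}\,\hat P=\dim X$), and no amount of manipulating sharing via property (\ref{item:sharing_mult}) and the snake equations will delete $P$. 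The missing ingredient is counitality of $P$ --- the fact that $P$ is a \emph{channel} (Definition \ref{defn:counital}) --- which lets you replace $\varepsilon_{A\otimes\op A}\circ P$ by $\varepsilon_{X\otimes\op X}$; only then does the remaining diagram contract, via Lemma \ref{lem:sharing}(\ref{item:sharing_unit}) again and specialness, to the loop $\dim X$. This is not bookkeeping: synchronicity alone cannot give the conclusion, since $2P$ is still a synchronous linear map yet traces to $2\dim X$. The same omission recurs in the strategy version, where it is \eqref{eq:q_counital} for $\varphi$ that removes $\varphi$ and is precisely what leaves the resource wire behind.

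Apart from that, running the argument ``forwards'' from the left-hand side of the loop-slide equation, rather than starting from the synchronicity equation and applying units and counits as the paper does, is a harmless reversal, and your handling of the $\cl H$ wire is consistent with the paper's. Once you substitute the counitality step for the sharing-multiplication/snake-equation step (which belongs to the proof of Theorem \ref{thm:perf_sync}, not this lemma), the proof goes through.
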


\begin{proof}
    Beginning with the definition of a synchronous correlation, we make a series of graphical derivations.
    \ctikzfig{pf\_loop\_slide}
    The steps are justified as follows.
    \begin{enumerate}
    \item Apply units and counits.
    \item Contract the units and counits using Lemma \ref{lem:sharing}, and use the fact that $P$ is counital.
    \item Contract the remaining counit.
    \item Rewrite using single wire notation, and use specialness of $X$.
    \end{enumerate}
    All of the above steps can be applied to a synchronous strategy $\varphi$ instead of a correlation $P$, by adding a resource Hilbert space wire through the map. Then, in step 2, on the right hand side, we use the counitality equation \eqref{eq:q_counital} for quantum functions, and the resource wire remains in the diagram.
\end{proof}

\begin{remark}
    Notice that by defining $\Hat{P}$ to be an operator on $X\otimes A$, given by the partial transpose of $P$
    \ctikzfig{P\_hat}
    Lemma \ref{lem:loop} says that for $P$ synchronous, $\mathrm{tr}\Hat P = \dim X$ where the trace of $\Hat P$ is given by
    \ctikzfig{P\_hat\_trace}
\end{remark}

\begin{prop} \label{prop:marg}
    A non-signalling correlation $P:X\otimes \op X \to A\otimes \op A$ has marginals given by
    \ctikzfig{marg}
\end{prop}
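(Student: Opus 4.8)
The plan is to read the two formulas straight off the defining non-signalling equation, by feeding a unit into the appropriate input wire. Recall that $P$ being non-signalling means precisely that discarding Bob's answer wire $\op A$ factors through discarding Bob's question wire $\op X$, i.e.\ $(\mathrm{Id}_A \otimes \varepsilon_{\op A})\circ P = P_A \circ (\mathrm{Id}_X \otimes \varepsilon_{\op X})$, together with the mirror-image identity $(\varepsilon_A \otimes \mathrm{Id}_{\op A})\circ P = P_B \circ (\varepsilon_X \otimes \mathrm{Id}_{\op X})$; these two equations \emph{characterise} $P_A$ and $P_B$, and the proposition just makes this characterisation explicit.

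First I would precompose the first equation with $\mathrm{Id}_X \otimes u_{\op X}$, that is, plug the unit of $\op X$ into the $\op X$-input slot of $P$. The right-hand side becomes $P_A \circ \bigl(\mathrm{Id}_X \otimes (\varepsilon_{\op X}\circ u_{\op X})\bigr)$, and the scalar $\varepsilon_{\op X}\circ u_{\op X}$ is a closed loop of the $X$-wire. By specialness \eqref{eq:special} this loop equals $\varepsilon_X u_X$, which by the block description of Remark \ref{rem:qset_structure} is $\sum_k n_k^2 = \dim X$ — the same loop evaluation that appears in the proof of Proposition \ref{prop:classical_sharing_game}. Hence the right-hand side is $(\dim X)\,P_A$, and dividing by $\dim X$ exhibits $P_A$ as $\tfrac{1}{\dim X}(\mathrm{Id}_A \otimes \varepsilon_{\op A})\circ P \circ (\mathrm{Id}_X \otimes u_{\op X})$, which is the diagram in the statement. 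Running the mirror-image argument — feed $u_X$ into the $X$-input and discard the $A$-output — gives the stated formula for $P_B$, again with prefactor $\tfrac{1}{\dim X}$.

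The one point needing care is the normalisation: since the counit of a quantum set is a reweighted trace rather than the canonical one (Remark \ref{rem:qset_structure}), the closed $X$-loop $\varepsilon_X u_X$ equals $\dim X$ and not $1$, so the factor $\tfrac{1}{\dim X}$ is genuinely present and cannot be dropped. Beyond that the argument is a single diagrammatic move and I expect no real obstacle; note in particular that we do not need $P$ itself to be a channel here, only the existence of the marginals, although $P_A$ and $P_B$ are of course channels by Corollary \ref{cor:margs_channels}.
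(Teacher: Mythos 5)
Your derivation is internally sound, but it lands on a different diagram than the one the proposition asserts, so as written it does not prove the stated figure. The paper's formula for the marginal --- reconstructible from its two-step proof (``$P$ is nonsignalling'', then ``comultiplication is counital'') and from the way Proposition \ref{prop:marg_sync} builds on it --- is
\[
P_A \;=\; (\mathrm{Id}_A \otimes \varepsilon_{\op A})\circ P \circ \Delta_X,
\]
i.e.\ \emph{comultiply} Alice's question into both input slots, apply $P$, and discard Bob's answer, with no normalising prefactor: one applies the non-signalling identity to get $P_A\circ(\mathrm{Id}_X\otimes\varepsilon_{\op X})\circ\Delta_X$ and then counitality $(\mathrm{Id}\otimes\varepsilon)\Delta=\mathrm{Id}$ finishes the job. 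You instead feed the \emph{unit} $u_{\op X}$ into Bob's question slot and divide by the loop $\varepsilon u=\sum_k n_k^2=\dim X$. That computation is correct (the loop evaluation matches Remark \ref{rem:qset_structure}), and the identity $P_A=\tfrac{1}{\dim X}(\mathrm{Id}_A\otimes\varepsilon_{\op A})\circ P\circ(\mathrm{Id}_X\otimes u_{\op X})$ is a true alternative expression for the marginal of a non-signalling $P$ --- but it is not the diagram in the statement, so the sentence ``which is the diagram in the statement'' is where the argument detaches from the target.

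The distinction matters downstream: Proposition \ref{prop:marg_sync} obtains the synchronous form $P_A=m_A\circ P\circ\Delta_X$ by inserting sharing between $\Delta_X$ and $P$ (using $\Delta m\Delta=\Delta$ and synchronicity), which requires the $\Delta$-based expression as a starting point; the unit-based expression does not feed into that argument. If you replace the precomposition with $\mathrm{Id}_X\otimes u_{\op X}$ by precomposition with $\Delta_X$ and swap your loop evaluation for counitality of $\Delta$, you recover the paper's proof exactly, and the $\tfrac{1}{\dim X}$ normalisation disappears.
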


\begin{proof}
    We give a diagrammatic proof for $P_A$
    \ctikzfig{pf\_marg}
    with the steps justified as follows:
    \begin{enumerate}
        \item $P$ is nonsignalling, and
        \item comultiplication is counital.
    \end{enumerate}
    Flip the proof horizontally for $P_B$.
\end{proof}

 Note that conjugating the above statement allows us to use the opposite quantum set instead, as $P_A$ and $P_B$ are channels and so are self-conjugate. When Alice and Bob have the same sets, we define the swapped correlation $P^\sigma := \sigma P \sigma$, corresponding to Alice and Bob switching places. When $P$ is non-signalling, $P^\sigma$ is also non-signalling, and $P_B = P^\sigma_A$. If additionally, $P$ is synchronous, we get $P_A = P_B$, as follows.

\begin{prop} \label{prop:marg_sync}
    A non-signalling synchronous correlation $P$ has equal marginals given by
    \ctikzfig{marg\_sync}
\end{prop}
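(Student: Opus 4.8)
The plan is to combine the general formula for marginals of a non-signalling correlation (Proposition \ref{prop:marg}) with the ``loop-sliding'' property of synchronous correlations (Lemma \ref{lem:loop}). We already know from Proposition \ref{prop:marg} that $P_A$ is obtained from $P$ by capping off the $Y = \op X$ input with a unit and the $B = \op A$ output with a counit; symmetrically $P_B$ caps off the $X$ input and $A$ output. The claimed statement says that for a synchronous $P$ both marginals equal the single map on thick wires pictured in the statement, i.e.\ the map obtained by sharing the input, running $P$, and then closing the $A$--$\op A$ output pair into a single $A$-wire via multiplication (equivalently, the ``diagonal'' extraction). So the two things to prove are: (i) $P_A$ (as given by Proposition \ref{prop:marg}) equals this diagonal map, and (ii) likewise for $P_B$; equality of $P_A$ and $P_B$ then follows.

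First I would take the expression for $P_A$ from Proposition \ref{prop:marg} and insert a sharing projection on the input using synchronicity of $P$: since $P$ is synchronous, $P \circ (\text{sharing} \otimes \text{stuff})$ has the form ``shared inputs to shared outputs'', so composing $P$ with the input unit on $\op X$ is the same as composing $P$ with sharing applied to a single $X$-input unit first — more precisely, I would feed a single $X$-wire in, copy/share it, and use Lemma \ref{lem:sharing}(\ref*{item:sharing_unit}) (sharing the unit gives the cup) to reconcile the unit-on-$\op X$ picture with the sharing picture. Then synchronicity lets me replace the shared output by something living in the diagonal of $A \otimes \op A$, and capping one leg with a counit while the other leg is forced to the diagonal collapses, via specialness ($m\Delta = \mathrm{Id}$) and counitality, to exactly the diagonal-extraction map shown in the statement. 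This is essentially a repackaging of the steps already used in the proof of Lemma \ref{lem:loop} (apply units/counits, contract using Lemma \ref{lem:sharing}, use counitality and specialness), so I would present it as a short diagrammatic derivation citing those two results plus Proposition \ref{prop:marg}.

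For $P_B$ I would not redo the computation: instead I would invoke the swapped correlation $P^\sigma = \sigma P \sigma$ already introduced in the paragraph preceding the statement, noting $P_B = P^\sigma_A$ and that $P^\sigma$ is again non-signalling; the only extra point needed is that $P^\sigma$ is again synchronous, which is immediate from the thin-wire form of Definition \ref{defn:sync_map} since conjugating by $\sigma$ just relabels Alice's and Bob's (identical) sets and the sharing map and multiplication are symmetric under this relabelling (the cups/caps of $X$ and $\op X$ coincide by \eqref{eq:symmetric}). Applying part (i) to $P^\sigma$ then yields the same diagonal map, so $P_B = P_A$ equals the pictured map.

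The main obstacle I expect is bookkeeping the thick-wire versus thin-wire notation and the filled-versus-empty (opposite) nodes carefully enough that the ``diagonal of $A\otimes \op A$'' is unambiguously the map drawn in the statement — in particular making sure that after using synchronicity the surviving output leg, once the other is counit-capped, really does reduce by specialness to a single $A$-wire rather than to some reweighted version. Concretely this means I must check that the counit appearing from the marginal formula (on $\op A$) is exactly the one that pairs with $\Delta$ via $m\Delta = \mathrm{Id}$, which follows because $\op A$ shares $A$'s unit and hence its counit, and because \eqref{eq:symmetric} identifies the relevant cups/caps. Once that alignment is pinned down the proof is a three- or four-step diagram chase with no genuinely hard content; I would keep it short and lean on Lemma \ref{lem:loop} and Proposition \ref{prop:marg} rather than re-deriving from scratch.
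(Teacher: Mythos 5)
Your first half---expressing $P_A$ via Proposition \ref{prop:marg}, getting sharing onto the input, applying synchronicity, and contracting the counit---is essentially the paper's derivation, so that part is fine. The gap is in how you handle $P_B$. You claim $P^\sigma=\sigma P\sigma$ is synchronous because ``the sharing map and multiplication are symmetric under this relabelling''; that is false for noncommutative $X$. Conjugating sharing by the swap gives $\sigma(\Delta m)\sigma=\Delta^{\mathrm{op}}m^{\mathrm{op}}$, i.e.\ the sharing of $\op{X}$ (the filled square), which differs from $\Delta m$ whenever $X$ is not commutative (on $M_n$, for instance, $\Delta m(E_{11}\otimes E_{12})\neq 0$ while $\Delta m(E_{12}\otimes E_{11})=0$). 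Equation \eqref{eq:symmetric} only identifies the cups and caps, not the full Frobenius map. So ``$P^\sigma$ is synchronous'' does not follow by relabelling. It can be rescued---conjugate the synchronicity equation and use self-conjugacy of the correlation $P$ (Proposition \ref{prop:CP_self_conj}) to see that $P$ is also synchronous for the opposite sharing---but even then, running your part (i) on $P^\sigma$ lands on $P_B=m^{\mathrm{op}}\circ P\circ\Delta^{\mathrm{op}}$, the opposite-algebra (filled) form of the diagram, and you would still need the conjugation remark following the proposition to convert that back to the stated (empty) form.

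The fix is simpler and is what the paper does: Proposition \ref{prop:marg} writes \emph{both} marginals with the same bottom part $P\circ\Delta$, differing only in which output leg carries the counit. Synchronicity (together with $S\Delta=\Delta$ from specialness) inserts $\Delta_A m_A$ above $P\circ\Delta$, and contracting the counit on either output leg by counitality yields the same map $m_A\circ P\circ\Delta_X$. No swapped correlation is needed, and equality of the two marginals falls out of the single computation.
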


\begin{proof}
    We give a diagrammatic proof
        \ctikzfig{pf\_marg\_sync\_empty}
    with the steps justified as follows:
    \begin{enumerate}
        \item Apply Proposition \ref{prop:marg}.
        \item $P$ is synchronous.
        \item Contract loose counit.  
    \end{enumerate}
\end{proof}

Note that conjugating the statement of the above proposition replaces all multiplication and comultiplication with that of the opposite algebra. So we also have
    \ctikzfig{marg\_sync\_filled}

In the classical question and answer case, consider the quantum processes interpretation (see \cite{PQP}) of the above proposition: Each of Alice's and Bob's marginals looks like encoding their own question as a quantum state, applying $P$, and then measuring to get back classical data.

If the non-signalling correlation $P$ is realized by some quantum commuting strategy and a shared state, we can say even more about the strategy. This is a diagrammatic alternative to the proof of \cite[Theorem 6.6, (1)]{BKS21}.

\begin{thm}\label{thm:sync_qc_corr}
If a synchronous correlation is realized by a quantum commuting strategy $(E,F)$ on a finite-dimensional resource space $\cl H$, and a normalized state $\ket\psi \in \cl H$, then
\ctikzfig{equal\_qc} 
\end{thm}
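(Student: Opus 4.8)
The plan is to start from the statement that the synchronous correlation $P$ is realized by the quantum-commuting strategy $(E,F)$ together with the normalized state $\ket\psi$, meaning $P$ equals the combined strategy built from $E$ and $F$ (as in the diagram following the definition of a quantum-commuting strategy) plugged with $\ket\psi$ on the resource wire and $\bra\psi$ on the other end. First I would feed a shared input into this realized correlation and invoke synchronicity of $P$ in the thin-wire form: the output lands in the diagonal, i.e. it factors through sharing on $A\otimes\op A$. This turns an equation about $P$ into an equation about the concrete map assembled from $E$, $F$, $\ket\psi$, $\bra\psi$. The target equation ``$E$ and $F$ agree on a shared input, up to the state $\ket\psi$'' should then fall out by manipulating that diagram.

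The key technical steps, in order: (1) write $P$ in terms of $E$, $F$ and $\ket\psi,\bra\psi$ using Definition \ref{defn:real} and the combined-strategy-from-quantum-commuting-strategy picture; (2) precompose with the sharing map of $X$ and use property (\ref{item:sharing_mult}) of Lemma \ref{lem:sharing} to slide sharing through the multiplication, and the definition of synchronous correlation to push the output through the sharing map of $A$; (3) cap off appropriately — bend the $\op X$ and $\op A$ wires into cups/caps (legitimate since $Y=\op X$, $B=\op A$ carry the same duality by symmetry), converting the ``shared input $\to$ shared output'' statement into a trace-like/loop diagram in which $E$ sits on one side and $F$ on the other, connected through $\ket\psi\bra\psi$; (4) use counitality of $E$ and $F$ (they are quantum functions, so \eqref{eq:q_counital} applies) and the commutation relation of the quantum-commuting strategy on $\cl H$ to collapse redundant structure; (5) compare the two sides and read off the claimed identity $\ctikzfig{equal\_qc}$, possibly after using specialness of $X$ and $A$ to remove a composed $m\Delta$. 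A useful intermediate observation is Lemma \ref{lem:loop}: a synchronous correlation slides off a loop leaving $\dim X$, and the strategy version leaves the resource wire behind; this is essentially the mechanism that will let the $E$-side and $F$-side of the diagram be brought into contact through $\ket\psi$.

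The main obstacle I anticipate is step (3)–(4): correctly tracking how the resource wire $\cl H$ threads through once both $E$ and $F$ are present, and verifying that the commutation condition $EF = FE$ on $\cl H$ (rather than mere non-signalling of the marginals) is exactly what is needed to bring Alice's operator and Bob's operator onto the same wire without an illegal crossing. In the finite-dimensional setting the proof will quietly use that $\cl H \cong \cl H^{**}$ to bend the resource wire, which is why the hypothesis insists $\cl H$ be finite-dimensional; I would be careful to flag where that is used. Everything else — sliding sharing through multiplication, applying synchronicity, contracting counits — is routine diagram chasing of the same flavour as the proofs of Theorem \ref{thm:perf_sync} and Lemma \ref{lem:loop}.
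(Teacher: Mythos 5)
There is a genuine gap at your step (5), ``compare the two sides and read off the claimed identity.'' The conclusion $\tikzfig{equal\_qc}$ is an identity between two \emph{vectors/maps} (roughly, Alice's operator applied to $\ket\psi$ equals the conjugate of Bob's applied to $\ket\psi$), whereas every equation you can extract from the hypothesis by diagram rewriting lives at the level of the correlation $P$, i.e.\ sandwiched between $\ket\psi$ and $\bra\psi$ --- a \emph{scalar}-valued (inner-product) statement. Synchronicity of $P$, plus loop-sliding and the commutation of $(E,F)$, only tells you that a certain pairing $\langle f,g\rangle$ equals $\dim X$, where $f$ and $g$ are the two sides of the target equation; no amount of sliding sharing through multiplication, contracting counits, or invoking specialness will turn that scalar identity into $f=g$. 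The missing ingredient is the equality case of the Cauchy--Schwarz inequality: one must also compute (using counitality of $E$ and $F$ and specialness of $X,A$) that $\langle f,f\rangle=\langle g,g\rangle=\dim X$, so that $\langle f,g\rangle=\dim X$ saturates Cauchy--Schwarz, forcing $f=\alpha g$, and then counitality pins $\alpha=1$. This is exactly how the paper proceeds: it shows the quantum-commuting condition makes the two cross-pairings equal, uses Lemma \ref{lem:loop} to evaluate them as $\dim X$, and then invokes the equality case of Theorem \ref{thm:cs_ineq_qfunc}(1). It is the same mechanism as in the classical synchronous-correlation argument, where one cannot conclude $E^x_a\ket\psi=F^x_a\ket\psi$ from $\sum_a\bra\psi E^x_aF^x_a\ket\psi=1$ without a norm/Cauchy--Schwarz step.

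Your steps (1)--(4) are sound and do match the setup of the paper's argument (the loop-sliding lemma and the commutation relation are indeed what bring the $E$-side and $F$-side into contact through $\ket\psi$), and your caution about finite-dimensionality and bending the resource wire is well placed. But as written the proposal would stall at the final step; you should add the Cauchy--Schwarz equality argument of Theorem \ref{thm:cs_ineq_qfunc} as the engine that converts the scalar identity into the stated operator identity.
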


Before proving this result, we will present a diagrammatic version of the Cauchy-Schwarz inequality for $\textbf{FHilb}$. (Note that the diagrams being compared in the following equation are both non-negative scalars.)

\begin{lemma}[Cauchy-Schwarz]\label{lem:cs_ineq}
    Let $f, g: \cl K \to \cl H$ in $\textbf{FHilb}$. Then
    \begin{equation}\label{eq:cs_ineq}
    \tikzfig{cs\_ineq}
    \end{equation}
    with equality if and only if $f=\alpha g$ for some $\alpha \in \bb C$.
\end{lemma}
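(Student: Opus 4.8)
The plan is to reduce the diagrammatic statement to the ordinary Cauchy-Schwarz inequality in a finite-dimensional Hilbert space by identifying what the two scalars in \eqref{eq:cs_ineq} actually compute. In the graphical calculus of $\textbf{FHilb}$, a closed diagram built from $f,g\colon\cl K\to\cl H$ and the cups/caps of $\cl K$ and $\cl H$ evaluates to an inner product of operators under the Hilbert–Schmidt pairing; concretely, a ``loop'' of $f^\dagger$ against $g$ (both legs closed off) equals $\operatorname{Tr}(g^\dagger f)=\langle f,g\rangle_{\mathrm{HS}}$, while closing $f^\dagger$ against $f$ gives $\operatorname{Tr}(f^\dagger f)=\langle f,f\rangle_{\mathrm{HS}}=\lVert f\rVert_{\mathrm{HS}}^2$, and similarly for $g$. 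So the claimed inequality is, after translation, exactly
\[
\lvert\langle f,g\rangle_{\mathrm{HS}}\rvert^2 \;\le\; \lVert f\rVert_{\mathrm{HS}}^2\,\lVert g\rVert_{\mathrm{HS}}^2,
\]
the Cauchy–Schwarz inequality on the Hilbert space $\mathrm{HS}(\cl K,\cl H)$ of Hilbert–Schmidt (here, all linear) maps $\cl K\to\cl H$, with the standard inner product. Since the paper works entirely in finite dimensions, $\mathrm{HS}(\cl K,\cl H)$ is genuinely an inner product space and the classical inequality applies verbatim, including its equality condition: equality holds iff $f$ and $g$ are linearly dependent, i.e. $f=\alpha g$ for some $\alpha\in\bb C$ (or $g=0$, which is subsumed once one is careful about the degenerate case).

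First I would make the translation precise: fix orthonormal bases and spell out that the left-hand diagram is $\operatorname{Tr}\!\big((g^\dagger f)\,(g^\dagger f)^{\text{?}}\big)$—more carefully, that the left side is $\langle f,g\rangle\overline{\langle f,g\rangle}=|\langle f,g\rangle|^2$ (one loop contributes $\langle f,g\rangle$, the mirror-image loop contributes its conjugate, because taking the dagger/horizontal flip of a closed diagram conjugates the scalar), and that the right side is $\langle f,f\rangle\langle g,g\rangle$. This bookkeeping uses only the snake equations and the fact that a scalar diagram reflected through a vertical axis is the complex conjugate of the original—both standard facts recalled in the excerpt. Then I would invoke Cauchy–Schwarz in $\mathrm{HS}(\cl K,\cl H)$ to conclude the inequality, and its equality case to conclude $f=\alpha g$; conversely, if $f=\alpha g$ a direct substitution into both diagrams shows equality, so the iff is genuine. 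I would also note explicitly why both diagrams are non-negative reals: each is of the form $\langle v,v\rangle$ for a suitable vector $v$ (the left is $|\langle f,g\rangle|^2\ge 0$, and the right is a product of two squared norms), which is what makes the comparison meaningful.

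The main obstacle is not mathematical depth but notational hygiene: one must be scrupulous about which cups and caps are being used to close the diagrams, about the orientation conventions (so that the dagger really does implement conjugation of the resulting scalar and the loop really does implement the Hilbert–Schmidt trace rather than some transpose variant), and about the degenerate case $g=0$ in the equality statement, where ``$f=\alpha g$'' forces $f=0$ too and one should check the diagrams still agree. A cleaner alternative that avoids bases entirely is to argue purely diagrammatically: set $h := f - \alpha g$ where $\alpha := \langle f,g\rangle/\langle g,g\rangle$ (assuming $g\neq 0$), observe that the closed diagram $\langle h,h\rangle\ge 0$ since it is a sum-of-squares scalar in $\textbf{FHilb}$ (any self-conjugate ``state times its effect'' is a nonnegative real), and expand this inequality bilinearly using the snake equations; the cross terms cancel by the choice of $\alpha$, leaving exactly $\langle f,f\rangle\langle g,g\rangle - |\langle f,g\rangle|^2 \ge 0$, with equality iff $\langle h,h\rangle = 0$ iff $h=0$ iff $f=\alpha g$. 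I would present this second route, since it keeps the proof in the same diagrammatic idiom as the rest of the paper and makes the equality condition fall out automatically, handling $g=0$ separately as a trivial case.
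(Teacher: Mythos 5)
Your proposal is correct and its first route is essentially the paper's own proof: the paper likewise slides the maps around their loops to recognize the two scalars as $\langle f,g\rangle\langle g,f\rangle$ and $\langle f,f\rangle\langle g,g\rangle$ in the Hilbert--Schmidt inner product on maps $\cl K\to\cl H$, then invokes the classical Cauchy--Schwarz inequality together with its equality case. Your extra care about the degenerate case $g=0$ (where equality holds but $f=\alpha g$ need not) is a legitimate refinement the paper glosses over, and your alternative fully diagrammatic expansion of $\langle h,h\rangle\ge 0$ with $h=f-\alpha g$ is a valid, self-contained variant, though not needed to match the paper.
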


\begin{proof}
    By sliding the left-side $f$s along the $\cl H$ wire to the other side of their loops, this just says $\langle f, g \rangle \langle g, f \rangle \leq \langle f, f \rangle \langle g, g \rangle $, which is the Cauchy-Schwarz inequality for the Hilbert-Schmidt inner product, and it holds with equality exactly when $f$ is a scalar multiple of $g$.
\end{proof}

The Cauchy-Schwarz inequality in this format leads to two inequalities for pairs of quantum functions.

\begin{thm}\label{thm:cs_ineq_qfunc}
    Let $E:X\to _\cl H A, F:\op X\to _\cl H \op A$ be quantum functions, with $\cl H$ finite-dimensional.
    \begin{enumerate}
    \item  Let $\ket \psi \in \cl H$ be a normalized state.
        \ctikzfig{cs\_ineq\_qfunc\_state}
        with equality if and only if
        \ctikzfig{equal\_qc}
     \item We have
        \ctikzfig{cs\_ineq\_qfunc}
        with equality if and only if
        \ctikzfig{equal\_qc\_strat}
    \end{enumerate}
\end{thm}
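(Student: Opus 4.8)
The plan is to reduce both inequalities to the Hilbert space Cauchy--Schwarz inequality of Lemma~\ref{lem:cs_ineq}, by exhibiting in each case a pair of morphisms $f,g$ in $\textbf{FHilb}$ whose pairwise inner products reproduce the three non-negative scalars being compared. For Part (1), I would first unfold the left-hand diagram and read it as a single loop in which one arc carries the quantum function $E$ (``Alice''), the other carries $F$ (``Bob''), the two arcs are joined through the sharing maps of $X$ and of $A$, and the loop is closed off with $\ket\psi$ and $\bra\psi$. The goal is to cut this loop along the two $\cl H$-wires emerging from $E$ and $F$ so that it reads as $\langle f,g\rangle\langle g,f\rangle$ for a suitable pair $f,g\colon\cl K\to\cl H$, where $\cl K$ is the Hilbert space obtained from an $A$-wire and an $X$-wire bent down via the self-duality of the quantum sets. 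Concretely I expect $g$ to be the morphism assembled from the sharing map of $X$, the state $\ket\psi$, and a single copy of $E$ (with its $X$-input and $A$-output bent to a common side), and $f$ the analogous morphism built from $F$; then $\langle g,g\rangle$ and $\langle f,f\rangle$ are exactly the two factors on the right-hand side of the inequality --- each a loop containing only $E$, respectively only $F$ --- while $\langle f,g\rangle$ is the ``mixed'' loop containing both. The comultiplicativity axiom~\eqref{eq:q_comult} and counitality axiom~\eqref{eq:q_counital} are what let me slide the Frobenius nodes past $E$ and $F$ to put the diagram into this inner-product form, and the self-conjugacy axiom~\eqref{eq:self_conj} is what licenses bending $E$ and $F$ into morphisms between Hilbert spaces in the first place.

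Once the three scalars are identified with $\langle f,g\rangle\langle g,f\rangle$, $\langle f,f\rangle$, $\langle g,g\rangle$, Lemma~\ref{lem:cs_ineq} gives the inequality at once and tells us that equality holds iff $f=\alpha g$ for some $\alpha\in\bb C$. To finish, I would cap both sides of $f=\alpha g$ with the counit of $A$: counitality of $E$ and $F$ collapses the $E$- and $F$-labelled arcs, leaving the same pure-Frobenius diagram built from $\ket\psi$ on each side, so normalization of $\ket\psi$ forces $\alpha=1$, and translating $f=g$ back through the wire-bending yields precisely the claimed diagrammatic identity (the one also appearing as the conclusion of Theorem~\ref{thm:sync_qc_corr}).

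Part (2) is the state-free version, where the $\cl H$-wires stay open (traced) rather than being closed with $\ket\psi$. I would run the same argument with the source $\cl K$ enlarged to absorb the extra $\cl H$-wire --- that is, apply Lemma~\ref{lem:cs_ineq} to the morphisms $\cl H\otimes\cl K\to\cl H$ obtained by not contracting $\psi$ (equivalently, purify $\ket\psi$ to a maximally entangled state on $\cl H\otimes\cl H$ and reduce to Part (1)) --- so that the three inner products become the traced loops in the statement and $f=\alpha g$ becomes the asserted identity after the same counit-capping normalization; here one must take care that the traced diagram carries an $\cl H$-wire on both the input and the output side, so the Cauchy--Schwarz pairing is taken in the correct Hilbert space, which is exactly where finite-dimensionality of $\cl H$ enters (the caveat flagged after Definition~\ref{defn:qfunc}). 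The step I expect to be the main obstacle is the wire-bending bookkeeping in Part (1): choosing which legs of $E$ and $F$ to bend so that the mixed loop genuinely factors as $\langle f,g\rangle$ rather than, say, $\langle f,g\rangle^2$ or a partial trace of it, and checking that the resulting $f$ and $g$ share the same source and target so that Lemma~\ref{lem:cs_ineq} applies verbatim; pinning down the equality-case scalar $\alpha$ is then routine, but does rely on getting the normalizations of the sharing map and of $\ket\psi$ exactly right.
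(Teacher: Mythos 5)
Your proposal is correct and follows essentially the same route as the paper: both parts are reduced to the diagrammatic Cauchy--Schwarz inequality of Lemma~\ref{lem:cs_ineq} by cutting the loops along the $\cl H$-wires into a pair $f,g$ built from $E$ with $\ket\psi$ and from (the conjugate of) $F$ with $\ket\psi$, the norm factors are evaluated using the quantum-function axioms together with specialness and normalization, the equality case is settled by untwisting $f=\alpha g$ into $E=\alpha F$ and forcing $\alpha=1$ via counitality, and Part~(2) is obtained from Part~(1) exactly as you suggest, by tensoring $E$ and $F$ with an identity on $\cl H^*$ and taking $\ket\psi$ to be the normalized cup state of $\cl H\otimes\cl H^*$.
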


See \ref{pf:cs_ineq_qfunc} for proof. With these tools in hand, we are equipped to prove the previous theorem, \ref{thm:sync_qc_corr}.

\begin{proof}[Proof of Theorem \ref{thm:sync_qc_corr}]
Because $(E,F)$ is quantum commuting, we have that
\ctikzfig{loop\_EF\_equals\_FE}
By the loop-sliding lemma \ref{lem:loop}, given that the realized correlation is synchronous, these are also equal to
\ctikzfig{loop}
Then, we have equality in Theorem \ref{thm:cs_ineq_qfunc} (1), implying the desired equality.
\end{proof}

\begin{thm}\label{thm:sync_qc_strat}
    If a quantum commuting strategy $(E,F)$ on a finite-dimensional resource space $\cl H$ is synchronous, then 
    \ctikzfig{equal\_qc\_strat}
\end{thm}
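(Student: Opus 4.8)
The plan is to mimic the proof of Theorem~\ref{thm:sync_qc_corr}, but without a state $\ket\psi$ on the resource space, using part (2) of the Cauchy-Schwarz inequality for quantum functions (Theorem~\ref{thm:cs_ineq_qfunc}) in place of part (1). The key observation is that a synchronous quantum-commuting \emph{strategy} $(E,F)$ gives rise, together with any normalized state, to a synchronous correlation; but more directly, synchronicity of the strategy itself (Definition~\ref{defn:sync_qfunc}) should let us run the loop-sliding argument at the level of the strategy map, keeping the resource Hilbert space wire present throughout, exactly as in the strategy version of Lemma~\ref{lem:loop}.

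Concretely, first I would form the diagram in which $E$ and $F$ are composed around a loop (the ``$EF$ around a loop equals $FE$ around a loop'' identity), which holds because $(E,F)$ is quantum-commuting --- this is the resource-wire analogue of the first displayed equation in the proof of Theorem~\ref{thm:sync_qc_corr}. Next, I would apply the strategy version of the loop-sliding Lemma~\ref{lem:loop}: since the combined strategy built from $(E,F)$ is synchronous (which is the hypothesis), $\varphi$ slides upward off the loop, leaving behind its resource wire. Doing this slide to the expression with $EF$ and separately to the expression with $FE$ yields two a priori different-looking diagrams that must be equal. Finally, this equality is precisely the equality case of the inequality in Theorem~\ref{thm:cs_ineq_qfunc}(2), whose equality condition is exactly the diagram \tikzfig{equal\_qc\_strat} that we want to prove.

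The main obstacle I anticipate is bookkeeping: making sure that the combined strategy assembled from the quantum-commuting pair $(E,F)$ (via the ``combined\_qc'' construction) is genuinely synchronous in the sense of Definition~\ref{defn:sync_qfunc} so that Lemma~\ref{lem:loop} applies with the resource wire intact, and that the loop being slid off is the correct one (connecting an $A$-output of $E$ to an $\op A$-input appropriately, using the cup/cap duality on $A\otimes\op A$). One must also check that the two sides produced by loop-sliding really are the two sides of the Cauchy-Schwarz inequality for the pair $(E,F)$ --- i.e.\ that the ``slid-off'' residue matches $\langle\cdot,\cdot\rangle$-type diagrams $f^\dagger g$ versus $\|f\|\,\|g\|$ after appropriate bending of wires. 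Once those identifications are in place, the argument is essentially a one-line invocation: ``by Lemma~\ref{lem:loop} and quantum-commutativity, both expressions in Theorem~\ref{thm:cs_ineq_qfunc}(2) agree, hence we are in the equality case, giving the claim.''

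I expect the written proof to be very short, reading roughly: ``Because $(E,F)$ is quantum-commuting, $EF$ around a loop equals $FE$ around a loop. By the loop-sliding Lemma~\ref{lem:loop}, using that $(E,F)$ is synchronous, both of these equal the same diagram with $\varphi$ slid off, leaving the resource wire. Thus equality holds in Theorem~\ref{thm:cs_ineq_qfunc}(2), which gives the stated identity.'' No heavy computation should be needed beyond the diagrammatic manipulations already established.
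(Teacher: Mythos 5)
Your proposal matches the paper's proof essentially verbatim: quantum-commutativity gives the ``$EF$ around a loop equals $FE$ around a loop'' identity, the strategy version of the loop-sliding Lemma~\ref{lem:loop} (applicable since the strategy is synchronous) shows both equal the same slid-off diagram with the resource wire left behind, and this forces the equality case of Theorem~\ref{thm:cs_ineq_qfunc}(2). The only cosmetic difference is that the paper's diagrams carry a double loop (quantum-set wire plus resource wire), which is exactly the bookkeeping point you already flagged.
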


\begin{proof}
Because $(E,F)$ is quantum commuting, we have that
\ctikzfig{double\_loop\_EF\_equals\_FE}

By the loop-sliding lemma \ref{lem:loop}, given that the strategy is synchronous, these are also equal to
\ctikzfig{double\_loop}

Then, we have equality in Theorem \ref{thm:cs_ineq_qfunc}, (2), implying the desired equality.
    
\end{proof}

The consequence $E=F$ of Theorem \ref{thm:sync_qc_strat} may be surprising. It shows us that synchronicity at the level of strategies is an extremely strong property -- much stronger than synchronicity at the level of correlations. In a sense, it results in an $n$ dimensional restriction on the players' behaviour instead of a $1$ dimensional restriction.

As a corollary, we will show that synchronous quantum commuting strategies are essentially families of \textit{local} strategies. Over an $n$-dimensional Hilbert space, these strategies are merely recipes for convex interpolation between $n$ deterministic strategies. The specific local strategy implemented varies depending on the selected resource state. A more precise statement follows.

\begin{cor}[Corollary of Theorem \ref{thm:sync_qc_strat}]\label{cor:sync_qc}
Let $(E,F)$ be a synchronous quantum commuting strategy on an $n$-dimensional resource space $\cl H$. Then $(E,F)$ together with any state $\psi$ of $\cl H$ can be realized by a convex combination of $n$ synchronous deterministic strategies.
\end{cor}

\begin{proof}
    By Theorem \ref{thm:sync_qc_strat}, we have $E=F$. Fix a self-conjugate basis of each of $X$ and $A$ (this translates to being self-adjoint when thought of as operators). For each choice of basis elements $x,y$ of $X$ and $a,b$ of $A$, composing these with the quantum commuting condition on $(E,E)$ gives a condition $$E_{y,b} E_{x,a} = E_{x,a} E_{y,b},$$ where $E_{x,a}:\cl H \to \cl H$ comes from $E$ by plugging $x$ into the $X$ wire and $a^\dagger$ into the $A$ wire. Note also that each $E_{x,a}$ satisfies
    $$E_{x,a}^\dagger = E_{x,a}$$
    due to the realness property of the quantum function $E$ and due to self-conjugacy of the basis elements $x,a$.
    
    So we have a family of pairwise commuting self-adjoint operators, $\{E_{x,a}\}$. Fix an orthonormal basis $\{\ket i: i \in [n]\}$ of $\cl H$ in which this family of operators is diagonal. Define $f_i:= \bra i E \ket i$, with a slight abuse of notation. It follows that each $f_i$ is a (classical) quantum function both $X\to_{\bb C} A$ and $\op X\to_{\bb C} \op A$ so $(f_i, f_i)$ is a deterministic strategy for the same game.
    
    Now let $Q$ be the classical quantum set structure on $\cl H$ coming from the specified orthonormal basis. Let $\ket \alpha \in \cl H$ be given by $$\ket \alpha := \sum_{i=1}^n \left|\braket{\psi}{i}\right|^2 \ket i.$$ Then the local strategy given by $\sum_{i=1}^n \braket{\alpha}{i} f_i \otimes f_i$ realizes the same correlation as $(E,E)$ with $\ket \psi$.
\end{proof}

\begin{remark}[Synchronous correlations versus synchronous strategies]
    This result does not conflict with the existence of synchronous correlations (between classical sets) that are quantum commuting but not local. The key difference here is that being a synchronous \emph{strategy} is much stronger (and therefore more restrictive) than being a synchronous \emph{correlation}. To be a synchronous \emph{strategy} means here to be synchronous in a state-agnostic way. To make the proper analogies to the classical question-and-answer literature, one must talk about synchronous correlations implemented by quantum commuting strategies, as opposed to synchronous quantum commuting strategies. Remark \ref{rmk:perf_strat} makes a related observation.
\end{remark}

\section{Bisynchronicity}
\label{sec:bisync}
Bisynchronous games and correlations are introduced in \cite{PR21}. We extend these definitions to the quantum question and answer setting.

In the classical question and answer setting, the graph isomorphism game is an example of a bisynchronous game. In Section \ref{sec:graph_games}, we discuss a quantum generalization of the graph isomorphism game, and use the fact that it is bisynchronous to relate its perfect strategies and correlations to quantum graph isomorphisms.

\subsection*{Bicorrelations and bistrategies}
\begin{defn}[Bicorrelation]
We say a correlation $P$ is a \emph{bicorrelation} if $P^\dagger$ is also a correlation.
\end{defn}

A bicorrelation is simply a unital correlation.

\begin{defn}[Bistrategy]
We say a strategy $\varphi:X\otimes Y \to_{\cl H} A\otimes B$ is a bistrategy if 
\ctikzfig{costrategy} 
is also a strategy $\overline\varphi: A\otimes B \to_{\cl H^*} X\otimes Y$.
\end{defn}

This is equivalent to defining a bistrategy $\varphi$ to be a quantum bijection as given in \cite[Definition 4.3]{MRV}. From the perspective of C$^*$-algebras, this can be thought of as an entanglement assisted $*$-isomorphism.

Note that correlations realized by bistrategies are automatically bicorrelations.

It turns out bistrategies have the desirable property of $\overline\varphi$ inheriting the strategy class of $\varphi$, as follows.

\begin{prop} \label{prop:bistrat_structure}
    Let $\varphi:X\otimes Y \to A \otimes B$ be a bistrategy. Let $\dim X = \dim A$.
    \begin{itemize}
        \item If $\varphi$ is quantum commuting implemented by $(E, F)$, then $\overline \varphi$ is quantum commuting implemented by $(\overline E, \overline F)$.
        \item If $\varphi$ is quantum tensor implemented by $(E, F)$, then $\overline \varphi$ is quantum tensor implemented by $(\overline E, \overline F)$.
        \item If $\varphi$ is deterministic implemented by $(E,F)$, then $\overline \varphi$ is deterministic implemented by $(E^\dagger, F^\dagger)$.
    \end{itemize}
\end{prop}

The proof is given in \ref{pf:bistrat_structure}. In particular, the proof begins with the fact that because $\varphi$ is a bistrategy, it automatically holds that $\dim X \dim Y = \dim A \dim B$.  However, there is no guarantee that $\dim X = \dim A$, as Alice's operator itself is not a priori a quantum bijection. The assumption $\dim X = \dim A$ ensures Alice and Bob's operators are quantum functions exactly (instead of the equations holding up to a constant factor). An alternate assumption could be made instead, namely that the players' individual operators are quantum bijections.

Note also that the apparent difference of implementation of $\overline \varphi$ in the deterministic case, with daggers instead of bars, is only due to the fact that when a quantum function is classical, the dagger and the bar coincide.

\subsection*{Bisynchronous maps}

\begin{defn}[Bisynchronous map]
We say a linear map $f:X\otimes \op X \to A\otimes \op A$ is \emph{cosynchronous} if $f^\dagger$ is synchronous. We say $f$ is \emph{bisynchronous} if $f$ is both synchronous and cosynchronous.
\end{defn}

\begin{defn}[Bisynchronous quantum function]
     We say a quantum function $\varphi:X\otimes \op X \to_{\cl H} A \otimes \op A$ is \emph{bisynchronous} if both $\varphi$ and $\overline\varphi$ are synchronous.
 \end{defn}

\begin{prop}\label{prop:bisync_sharing}
    A map or quantum function is bisynchronous if and only if it preserves sharing.
\end{prop}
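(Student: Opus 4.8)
The plan is to unravel what "preserves sharing" should mean and show it is equivalent to synchronous plus cosynchronous. First I would fix the statement: a map $f: X \otimes \op X \to A \otimes \op A$ \emph{preserves sharing} when composing $f$ with the sharing map on $X \otimes \op X$ (on the input side) equals composing the sharing map on $A \otimes \op A$ with $f$ (on the output side), i.e. $f \circ (\text{share}_X) = (\text{share}_A) \circ f$. Since sharing is a self-adjoint projection (Lemma \ref{lem:sharing}(a),(b)), this single operator equation is the natural candidate, and the quantum-function version reads the same with the resource wire threaded through both $\varphi$ and $\overline\varphi$ (the definition of $\overline\varphi$ flips $\cl H$ to $\cl H^*$, so one must be slightly careful that "preserves sharing" for a quantum function is really the two-sided diagrammatic identity, not an operator identity).

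The heart of the argument is a general fact about a self-adjoint projection $p$: for any linear map $f$, one has $pf = fp$ if and only if both $fp = pfp$ and $pf = pfp$. The forward direction is immediate (substitute). The reverse: from $fp = pfp$ take adjoints to get $pf^\dagger = pf^\dagger p$; but I want a statement about $f$, not $f^\dagger$, so instead I use both hypotheses directly — $pf = pfp = fp$. So in fact $pf = fp \iff pf = pfp$ \emph{and} $fp = pfp$, with no adjoint needed. Now I translate: synchronicity of $f$ (Definition \ref{defn:sync_map}) says exactly $\text{share}_A \circ f \circ \text{share}_X = f \circ \text{share}_X$ — "shared inputs lead to shared outputs," i.e. after feeding in shared data, the output is already shared so post-composing with $\text{share}_A$ changes nothing. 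Cosynchronicity says $f^\dagger$ is synchronous, i.e. $\text{share}_X \circ f^\dagger \circ \text{share}_A = f^\dagger \circ \text{share}_A$; taking daggers of this (using self-adjointness of both sharing maps) gives $\text{share}_A \circ f \circ \text{share}_X = \text{share}_A \circ f$. Combining the two: $f \circ \text{share}_X = \text{share}_A \circ f \circ \text{share}_X = \text{share}_A \circ f$, which is preservation of sharing. Conversely, if $f \circ \text{share}_X = \text{share}_A \circ f$, then post-composing both sides with $\text{share}_A$ and using that $\text{share}_A$ is idempotent gives synchronicity, and pre-composing both sides with $\text{share}_X$ and using idempotency of $\text{share}_X$ gives the daggered form of cosynchronicity.

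For the quantum-function case I would run the identical diagrammatic manipulation, but interpret the equations as string-diagram identities with the $\cl H$-wire passing through; the only subtlety is matching $\overline\varphi$'s synchronicity condition (Definition \ref{defn:sync_qfunc} applied to $\overline\varphi$, living over $\cl H^*$) with the "dagger" step — here one uses that $\overline\varphi$ is obtained from $\varphi$ by the conjugate operation pictured in the bistrategy definition, together with self-adjointness of sharing and the fact (noted right after the definition of the opposite quantum set) that the filled and empty sharing maps are conjugate/transpose to one another. I expect the main obstacle to be purely bookkeeping: keeping straight which sharing map (on $X\otimes\op X$ vs.\ $A\otimes\op A$, filled vs.\ empty) appears where, and making sure the passage from "cosynchronous" to "the daggered equation" is the conjugate-and-self-adjoint step rather than an illegal move — there is no deep content, and once the projection identity $pf = fp \iff (pf = pfp \text{ and } fp = pfp)$ is isolated, both directions are a couple of lines each.
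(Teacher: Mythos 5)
Your argument is correct and takes essentially the same route as the paper: the forward direction chains $f\circ\mathrm{share}_X = \mathrm{share}_A\circ f\circ\mathrm{share}_X = \mathrm{share}_A\circ f$ using synchronicity together with the dagger of cosynchronicity, and the converse uses only that sharing is a self-adjoint projection, exactly as the paper does (including handling the quantum-function case by threading the $\cl H$ wire through). Your isolated projection identity is just a cleaner packaging of the same two-line computation.
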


\begin{proof}
    Let $f:X\otimes \op X\to A\otimes \op A$ be a bisynchronous linear map. If $f$ is bisynchronous,
    \ctikzfig{pf\_bisync\_sharing}
    where the first equality is synchronicity of $f$, and the second is synchronicity of $f^\dagger$.
    Let $\varphi:X\otimes \op X \to_{\cl H} A \otimes \op A$ be a bisynchronous quantum function. Then the same proof holds with an $\cl H$ wire tacked on.

    Conversely, if $f$ preserves sharing, then $f$ and $f^\dagger$ are synchronous because sharing is a projection. The same goes for $\varphi$ and $\overline\varphi$.
\end{proof}

In the quantum commuting case, there is something redundant about the definition of a bisynchronous bistrategy, as follows.

\begin{prop}\label{prop:bisync_qc}
    Let $\varphi$ be a synchronous quantum commuting strategy.
    \begin{enumerate}
        \item If $\varphi$ is a bistrategy, then $\varphi$ is bisynchronous.
        \item If $\varphi$ is bisynchronous and unital -- in the sense that $\overline \varphi$ satisfies \eqref{eq:q_counital} -- then $\varphi$ is a bistrategy.
    \end{enumerate}
\end{prop}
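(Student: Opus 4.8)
The plan is to prove both parts by analysing the costrategy $\overline\varphi$ through the commuting pair that presents $\varphi$. Since $\varphi$ is quantum-commuting, write it in terms of commuting quantum functions $E : X \to_{\cl H} A$ and $F : \op X \to_{\cl H} \op A$; by Proposition \ref{prop:bistrat_structure} the costrategy $\overline\varphi$ is the quantum-commuting combination of the costrategies $(\overline E, \overline F)$ over $\cl H^*$.

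For part (1): because $\varphi$ is synchronous and quantum-commuting, Theorem \ref{thm:sync_qc_strat} supplies the identity relating $E$ and $F$ inside a double loop. I would then apply the costrategy operation to that identity --- bend every quantum-set wire around using the (symmetric, hence self-dual) Frobenius cups and caps of $X, \op X, A, \op A$ and relabel $\cl H \rightsquigarrow \cl H^*$. This is a planar oriented isotopy of the underlying diagram, so the same relation holds for $(\overline E, \overline F)$. Feeding a shared input of $A\otimes\op A$ into $\overline\varphi$, using this relation to eliminate $\overline F$ in favour of the bent $\overline E$, and then collapsing the resulting diagram with specialness and the snake equations, shows that the output lies in the image of the sharing map on $X\otimes\op X$, i.e.\ $\overline\varphi$ is synchronous. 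Together with the hypothesis that $\varphi$ is synchronous, this gives that $\varphi$ is bisynchronous. (Equivalently, one can phrase the conclusion as: $\varphi$ preserves sharing, and invoke Proposition \ref{prop:bisync_sharing}.)

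For part (2): it suffices to show $\overline\varphi$ is itself a strategy, i.e.\ satisfies \eqref{eq:q_comult}, \eqref{eq:q_counital}, and \eqref{eq:self_conj}. Counitality \eqref{eq:q_counital} is exactly the \emph{unital} hypothesis. Self-conjugacy \eqref{eq:self_conj} of $\overline\varphi$ is inherited from that of $\varphi$: the costrategy is assembled from daggers and the self-conjugate Frobenius cups/caps, so it commutes with categorical conjugation, and $\varphi$, being a quantum function, is self-conjugate. The real content is \eqref{eq:q_comult} for $\overline\varphi$. Bending the comultiplicativity equation of $\varphi$ turns it into the ``multiplicativity'' equation for $\overline\varphi$ (all $\Delta$'s becoming $m$'s). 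To upgrade multiplicativity to comultiplicativity, I would use that $\varphi$ preserves sharing (bisynchronicity, via Proposition \ref{prop:bisync_sharing}) together with the fact that the sharing map of $X\otimes\op X$ (resp.\ $A\otimes\op A$) is precisely the Frobenius map $\Delta m$: this lets one trade the $m$'s appearing in ``$\overline\varphi$ multiplicative'' for $\Delta$'s pushed through $\overline\varphi$, the quantum-commuting structure keeping track of the two tensor factors and the resource wire. This yields \eqref{eq:q_comult}, so $\overline\varphi$ is a strategy and $\varphi$ is a bistrategy.

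The step I expect to be the main obstacle is the comultiplicativity argument in part (2): turning the automatic multiplicativity of $\overline\varphi$ into the required comultiplicativity using only that $\varphi$ preserves sharing is a genuine diagram manipulation --- combining the Frobenius equation, specialness, and the commuting condition --- and one must be careful about how $\cl H$ and $\cl H^*$ thread through. In part (1), the corresponding subtlety is confirming that the bent-around relation of Theorem \ref{thm:sync_qc_strat} actually forces synchronicity of $\overline\varphi$ (the theorem is stated as a one-way implication); if a direct chase proves awkward, I would instead reuse the loop-sliding Lemma \ref{lem:loop} and the equality case of the Cauchy--Schwarz inequality for quantum functions (Theorem \ref{thm:cs_ineq_qfunc}) to close the argument.
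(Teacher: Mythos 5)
Part (1) of your argument is essentially the paper's: once the synchronous quantum-commuting $\varphi$ is presented by a single quantum function via Theorem \ref{thm:sync_qc_strat}, and the bistrategy hypothesis (through Proposition \ref{prop:bistrat_structure}) guarantees the conjugate is again a quantum function, commutation with sharing follows from comultiplicativity of $E$ and of $\overline E$; your detour through checking synchronicity of $\overline\varphi$ rather than directly checking that $\varphi$ preserves sharing is harmless.

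Part (2) is where the gap sits, and it is exactly the step you flag as the main obstacle. You correctly isolate comultiplicativity of $\overline\varphi$ as the real content, but the mechanism you propose --- obtain ``multiplicativity'' of $\overline\varphi$ by bending \eqref{eq:q_comult} for $\varphi$, then trade the $m$'s for $\Delta$'s using sharing-preservation and the Frobenius equation --- is left unexecuted precisely where it is delicate. The difficulty is that sharing-preservation (Proposition \ref{prop:bisync_sharing}) is an equation in a \emph{single} copy of $\varphi$, whereas both the multiplicativity and the comultiplicativity equations involve \emph{two} copies chained on the resource wire; relating a one-copy equation to a two-copy equation is not a routine Frobenius manipulation, and you do not say how the chaining on $\cl H$ is to be handled. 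Moreover, you make no essential use of the quantum-commuting hypothesis in part (2), yet that is exactly what the paper leans on: Theorem \ref{thm:sync_qc_strat} reduces $\varphi$ to two chained copies of one quantum function $E$, and then applying a counit to one $A$-leg of the sharing-preservation equation and invoking counitality of $E$ (automatic, since $E$ is a quantum function) collapses one copy of $E$ and leaves precisely the single-wire version of \eqref{eq:q_comult} for $\overline E$, from which comultiplicativity of $\overline\varphi$ follows by doubling. Your sketch never uses counitality at this step, and without the reduction to $E$ the ``trade $m$'s for $\Delta$'s'' move has no evident way to close. As written, the proposal for (2) is therefore incomplete.
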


\begin{proof}
    By Theorem \ref{thm:sync_qc_strat}, $\varphi:X \otimes \op X \to_{\cl H} A \otimes \op A$ is given by some quantum function $E:X\to_{\cl H} A$ applied twice.
    \begin{enumerate}
        \item If $\varphi$ is a bistrategy then applying \eqref{eq:q_comult} to $E$ and $\overline E$, we see that $\varphi$ commutes with sharing and is therefore bisynchronous.
        \item By assumption, we have that $\overline \varphi$ satisfies \eqref{eq:q_counital}. We know that $\overline \varphi$ satisfies \eqref{eq:self_conj} because $\varphi$ does and the condition is the same. To see that $\overline\varphi$ satisfies \eqref{eq:q_comult}, given that $\varphi$ is bisynchronous, start with the result of Proposition \ref{prop:bisync_sharing}, and apply a counit to the left $A$ wire of the equation. By counitality of $E$, we get that \eqref{eq:q_comult} holds for $E$, and then it follows for $\varphi$ as well. Therefore, $\varphi$ is a bistrategy.
    \end{enumerate}
\end{proof}

\begin{thm}
\label{thm:bisync_dims_equal}
Let $X, A$ be quantum sets.
\begin{enumerate}
    \item If $P:X\otimes \op X \to A \otimes \op A$ is a bisynchronous bicorrelation, then $\dim X = \dim A$.
    \item If additionally, $\varphi: X\otimes \op X \to_{\cl H} A \otimes \op A$ is a bisynchronous bistrategy, then $X$ and $A$ have the same classical dimension. \item If furthermore, $\varphi$ is quantum commuting, then $X,A$ are quantum isomorphic as quantum sets (in the sense that there is a quantum bijection between them).
    \item If furthermore, $\varphi$ is deterministic, then $X,A$ are isomorphic as quantum sets.
\end{enumerate}
\end{thm}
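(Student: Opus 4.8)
The plan is to prove the four claims in sequence, each time feeding the previous conclusion and the new hypothesis into the argument. For (1), the engine is the loop-sliding lemma (Lemma~\ref{lem:loop}) together with the identity in the remark following it: for any synchronous correlation $Q\colon X\otimes\op X\to A\otimes\op A$, the closed scalar obtained by feeding a cup of $X$ into $Q$ and capping its output with a cap of $A$ equals $\dim X$ (equivalently $\operatorname{tr}\widehat Q=\dim X$). Apply this to $P$ to evaluate that closed diagram as $\dim X$. Since $P$ is a bicorrelation, $P^\dagger\colon A\otimes\op A\to X\otimes\op X$ is again a correlation, and since $P$ is cosynchronous it is synchronous, so the same lemma applied to $P^\dagger$ evaluates the analogous closed diagram as $\dim A$. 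But the second diagram is literally the dagger of the first --- each cup becomes a cap and $P$ becomes $P^\dagger$ --- so the two scalars are complex conjugates of one another; being positive integers, they coincide, and $\dim X=\dim A$.

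For (2), note first that by choosing any normalized state $\varphi$ realizes a bisynchronous bicorrelation, so (1) already gives $\dim X=\dim A$; the extra content is the statement about the number of summands, which genuinely uses that $\varphi$ is a strategy. The key input is Proposition~\ref{prop:bisync_sharing}: with its resource wire threaded through, a bisynchronous bistrategy $\varphi$ commutes with sharing, $s_A\varphi=\varphi s_X$, and $\overline\varphi$ does likewise in the other direction; by the $X\leftrightarrow\op X$ symmetry of the whole picture the same holds for the conjugate sharing maps $\bar s_X$ and $\bar s_A$. The classical dimension of $X$ is, by Definition~\ref{def:classical_dim} and the proposition following it, a closed scalar diagram built only from the sharing maps (and cups and caps) of $X\otimes\op X$. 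I would insert $\varphi$ and $\overline\varphi$ into the corresponding closed diagram for $A$, slide them past every sharing node via sharing-preservation so as to convert each $s_A,\bar s_A$ into $s_X,\bar s_X$, and then annihilate the now-adjacent $\varphi$ and $\overline\varphi$, leaving precisely the closed diagram for $X$. This last step is the main obstacle: one must check that the $\cl H$ and $\cl H^*$ resource wires contract to the scalar $1$, not to $\dim\cl H$, which forces one to use both that $\varphi$ is a \emph{bi}strategy (so $\overline\varphi$ is a genuine counital quantum function) and the precise way $\varphi$, $\overline\varphi$ and the sharing maps interleave around the loop --- in essence a doubled version of the loop-slide of Lemma~\ref{lem:loop}, with the counitality axiom~\eqref{eq:q_counital} doing the collapsing.

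For (3) and (4), the aim is to descend from the ``doubled'' bijection $\varphi$ on $X\otimes\op X$ to an honest (quantum) bijection on $X$. Since $\varphi$ is synchronous and quantum-commuting, Theorem~\ref{thm:sync_qc_strat} factors it as a single quantum function $E\colon X\to_{\cl H}A$ applied twice, exactly as in the proof of Proposition~\ref{prop:bisync_qc}. Because $\varphi$ is a bistrategy, $\overline\varphi$ is a quantum function, and $\overline\varphi$ is correspondingly $\overline E$ applied twice; applying counits to the opposite-algebra legs, as in the proof of Proposition~\ref{prop:bisync_qc}(2), transfers the quantum-function axioms from $\overline\varphi$ down to $\overline E$, so $\overline E$ is a quantum function and hence $E$ is a bistrategy --- a quantum bijection $X\to A$ --- which is exactly what it means for $X$ and $A$ to be quantum isomorphic as quantum sets, giving (3). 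For (4), a deterministic $\varphi$ has $\cl H=\mathbb{C}$, so the $E$ produced above is a bistrategy over the trivial resource space, i.e.\ a $*$-isomorphism of the underlying C$^*$-algebras, whence $X\cong A$ as quantum sets. The only routine points left to verify are that the ``applied twice'' factorization is compatible with the bar construction and with triviality of the resource space, both immediate from Theorem~\ref{thm:sync_qc_strat}.
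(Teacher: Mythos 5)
Your proposal follows essentially the same route as the paper in all four parts: loop-sliding applied to $P$ and to $P^\dagger$ (then daggered) for (1), sharing-preservation via Proposition \ref{prop:bisync_sharing} together with the quantum-function axioms of both $\varphi$ and $\overline\varphi$ for (2), and the factorization of Theorem \ref{thm:sync_qc_strat} into two copies of a single quantum function $E$, upgraded to a quantum bijection and then to an isomorphism when $\cl H=\mathbb{C}$, for (3) and (4). The ``main obstacle'' you flag in (2) --- making the resource wires contract to $1$ rather than $\dim\cl H$ --- is resolved in the paper exactly as you anticipate: fix a normalized state of $\cl H$ and use counitality of both $\varphi$ and $\overline\varphi$.
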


\begin{proof}
Because $P$ is a synchronous correlation, the loop-sliding lemma \ref{lem:loop} gives us that
\ctikzfig{loop\_slide\_sync}
Because $P^\dagger$ is a synchronous correlation, by applying the loop-sliding lemma to $P^\dagger$ and then taking daggers (turning diagrams upside down), we have that
\ctikzfig{loop\_slide\_cosync}
Therefore, the two circles are equal. In a quantum set, the circle (given by $\varepsilon m \Delta u$) computes the dimension of the underlying Hilbert space.

In the case that $\varphi$ is a bisynchronous bistrategy, then using the fact that $\varphi$ and $\overline{\varphi}$ are both quantum functions, and fixing any normalized state of $\cl H$,
\ctikzfig{bisync\_preserves\_classical\_dim}

In the above, the steps are justified as follows:
\begin{enumerate}
    \item Apply the counitality equation \eqref{eq:q_counital} for $\varphi$.
    \item Apply the comultiplicativity equation \eqref{eq:q_comult} for $\overline{\varphi}$.
    \item Apply Proposition \ref{prop:bisync_sharing} to interchange $\varphi$ with the sharing map on both sides.
    \item Apply the comultiplicativity equation \eqref{eq:q_comult} for $\varphi$.
    \item Apply the counitality equation \eqref{eq:q_counital} for $\overline{\varphi}$.
\end{enumerate}

If, furthermore, $\varphi$ is quantum commuting, then because it is synchronous, it is realized by two copies of some quantum function $E$, as shown in Theorem \ref{thm:sync_qc_strat}. We have that $\varphi$ is a bistrategy and $\dim X = \dim A$, so by Proposition \ref{prop:bistrat_structure}, $\overline{E}$ is also a quantum function. Therefore, $E$ is a quantum bijection $X\to_{\cl H} A$.

In the case that $\varphi$ is deterministic, $\cl H = \mathbb{C}$ and the quantum bijection $E$ just becomes a quantum-set isomorphism.

\end{proof}

\begin{cor}[Corollary of Theorem \ref{thm:bisync_dims_equal}]
Consider a bisynchronous game $\lambda$.
\begin{enumerate} 
    \item If $\lambda$ has a perfect bicorrelation, then the question and answer sets have the same dimension.
    \item If $\lambda$ has a perfect bistrategy, then the question and answer sets have the same classical dimension.
    \item If $\lambda$ has a perfect quantum commuting bistrategy, then the question and answer sets are quantum isomorphic.
    \item If $\lambda$ has a perfect deterministic bistrategy, then the question and answer sets are isomorphic as quantum sets.
\end{enumerate}
\end{cor}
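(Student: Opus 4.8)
The plan is to reduce each of the four items to the corresponding part of Theorem \ref{thm:bisync_dims_equal}; the only new ingredient needed is the observation that a perfect bicorrelation (resp.\ perfect bistrategy) for a bisynchronous game is itself bisynchronous. Half of this is immediate: since the bisynchronous game $\lambda$ is in particular synchronous, Theorem \ref{thm:perf_sync} shows that any perfect correlation $P$ for $\lambda$ is synchronous, and Corollary \ref{cor:perf_sync} shows that any perfect strategy $\varphi$ for $\lambda$ is synchronous.

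The other half is cosynchronicity, and here I would pass to the dagger (resp.\ to the bar $\overline{(-)}$). First I would check that taking the dagger of the equation defining ``$P$ is a perfect correlation for $\lambda$'' (i.e.\ flipping that diagram upside down) reproduces the equation defining ``$P^\dagger$ is a perfect correlation for $\lambda^\dagger$'', now read between the maps $A\otimes\op A \to X\otimes\op X$; the point that makes the flipped diagram literally an instance of the perfect-correlation equation is that $\lambda$, being a game, is self-conjugate, so $\lambda^\dagger = \lambda^*$. Since $\lambda$ is cosynchronous, $\lambda^\dagger$ is a synchronous map, and since $P$ is a bicorrelation, $P^\dagger$ is a correlation; as the proof of Theorem \ref{thm:perf_sync} uses only synchronicity of the game-map together with the perfect equation, it applies verbatim to give that $P^\dagger$ is synchronous, i.e.\ $P$ is cosynchronous, hence a bisynchronous bicorrelation. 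The identical argument with the bar operation in place of the dagger (so that the resource wire $\cl H$ becomes $\cl H^*$) and Corollary \ref{cor:perf_sync} in place of Theorem \ref{thm:perf_sync} shows that $\overline\varphi$ — which is already a strategy because $\varphi$ is a bistrategy — is a perfect strategy for $\lambda^\dagger$ and hence synchronous, so $\varphi$ is a bisynchronous quantum function, i.e.\ a bisynchronous bistrategy.

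It then remains only to invoke Theorem \ref{thm:bisync_dims_equal}: part (1), applied to the bisynchronous bicorrelation $P$, gives $\dim X = \dim A$; part (2), applied to the bisynchronous bistrategy $\varphi$, gives that the question and answer sets have the same classical dimension; part (3), using additionally that $\varphi$ is quantum-commuting, gives that they are quantum isomorphic as quantum sets; and part (4), using additionally that $\varphi$ is deterministic, gives that they are isomorphic as quantum sets.

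I expect the main obstacle to be the cosynchronicity step, and specifically verifying that ``being perfect'' is preserved under the dagger and the bar: one must confirm that flipping the perfect-correlation diagram upside down genuinely reproduces the same diagram with $(P,\lambda)$ replaced by $(P^\dagger,\lambda^\dagger)$ and the roles of $X\otimes\op X$ and $A\otimes\op A$ interchanged, and likewise that barring the perfect-strategy diagram reproduces the perfect-strategy diagram for $(\overline\varphi,\lambda^\dagger)$ over $\cl H^*$. This is a short but careful diagram chase relying on the self-dualities on $X\otimes\op X$ and $A\otimes\op A$ and the self-conjugacy of $\lambda$; once it is in place, everything else is a direct citation of earlier results.
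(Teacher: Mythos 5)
Your proposal is correct and follows essentially the same route as the paper: the paper likewise notes that $P^\dagger$ is perfect for $\lambda^\dagger$ by taking daggers of Definition \ref{defn:perf_corr}, applies Theorem \ref{thm:perf_sync} to the synchronous game $\lambda^\dagger$ to conclude bisynchronicity of $P$ (and the same argument for strategies), and then cites Theorem \ref{thm:bisync_dims_equal}. Your extra care about why the flipped diagram is literally an instance of the perfect-correlation equation is a reasonable elaboration of a step the paper leaves implicit, not a different approach.
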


\begin{proof}
    A perfect correlation $P$ for a synchronous game $\lambda$ is synchronous by Theorem \ref{thm:perf_sync}. We also know that $P^\dagger$ is perfect for the game $\lambda^\dagger$ by taking daggers of Definition \ref{defn:perf_corr}. Because $\lambda^\dagger$ is also synchronous, we can apply Theorem \ref{thm:perf_sync} again to see that $P^\dagger$ is synchronous. Therefore, $P$ is bisynchronous, and so we can conclude that $\dim X = \dim A$. Making the same argument to show that a perfect strategy is also  forced to be bisynchronous, we can apply the rest of Theorem \ref{thm:bisync_dims_equal} to show the remaining items.
\end{proof}

\section{Games on quantum graphs} \label{sec:graph_games}
As an example of how to apply our graphical theory of games, we construct the quantum graph homomorphism game, generalizing the (classical) graph homomorphism game of \cite{MR16} to allow the question and answer spaces to be quantum graphs. We prove that it is a synchronous game, and we show that quantum graph homomorphisms realize perfect correlations for this game.

Note that other quantisations of the graph homomorphism game exist. There is a more restrictive, quantum-to-classical graph homomorphism game given in \cite{BGH22}. There is also a fully quantum graph homomorphism game given in \cite{TT24}. It remains to be seen if that game is equivalent to the one given here.

We also define the very similar quantum graph isomorphism game, generalizing the (classical) graph isomorphism game of \cite{AMRSSV}, later studied by 
\cite{BCEHPSW}. This game appears here for the first time. We prove that it is a bisynchronous game and that quantum graph isomorphisms realize perfect bicorrelations for this game.

We follow the quantum homomorphism/isomorphism game results with partial converses for the deterministic setting, in that perfect deterministic correlations \emph{only} come from homomorphisms/isomorphisms.

For background on quantum graphs that informs our approach, see \cite{MRV}. We include the definition here.

\begin{defn}[Quantum graph]
    Given a quantum set $X$, a \emph{quantum graph} is a self-adjoint linear map $G:X\to X$ satisfying 
    \begin{align*}
        &\tikzfig{graph\_idem}& \textrm{and}\qquad & \tikzfig{graph\_self\_conj}
    \end{align*}
    We say a quantum graph $X$ is \emph{irreflexive} if
    \ctikzfig{irreflexive}
\end{defn}

The quantum set $X$ can be thought of as a quantum vertex set. Notice that the definition of a quantum graph looks similar to the definition of a game, plus self-adjointness. In fact, we can equivalently define a quantum graph as a self-adjoint one-player game on $V$, by treating Bob as a trivial player having questions and answers given by the classical one-element set, and using the fact that for a quantum set $X$, $X \cong X \otimes \bb C$.

In order to work more easily with quantum graphs, we introduce thick wire notation for a quantum graph as a state (or effect) of $X\otimes \op X$, as well as notation for an equivalent formulation as a projection onto the ``edges''. We denote
    \ctikzfig{graph\_thick\_wire}
and
    \ctikzfig{graph\_projection}
which are both well-defined due to self-conjugacy of $G$. The notation for $P_G$ should evoke its self-adjointness but non-self-conjugacy.

Another way of thinking about $P_G$ is as the right-multiplication by the graph state $\ket G$. Indeed,
\ctikzfig{right\_mult\_G}

As a first example, consider the \emph{reflexive} quantum graph given by $G= \mathrm{Id}_X$ with $P_G$ given by sharing.

Note that the idempotency condition in the definition of a quantum graph is equivalent to
    \ctikzfig{graph\_thick\_comult}
Also note that irreflexivity of $G$ can be equivalently defined using sharing by
    \ctikzfig{thick\_irreflexive}

\subsection*{The quantum graph homomorphism game}
\label{sec:hom_game}
\begin{defn}
    Given irreflexive quantum graphs $G:X\to X$ and $H:A\to A$, the \emph{quantum graph homomorphism game} is denoted $\lambda_{G\to H}:X\otimes \op X \to A\otimes \op A$ and given by
    \ctikzfig{qgraph\_hom\_game}
    or equivalently, using thick wires,
    \ctikzfig{qgraph\_hom\_game\_thick}
\end{defn}

We work here with irreflexive quantum graphs for simplicity, but an analogous definition should be possible for the reflexive case.

In order for the above to be a reasonable definition, we should check that it is actually a game. 

\begin{prop}\label{prop:graph_hom_game}
    The quantum graph homomorphism game is a game.
\end{prop}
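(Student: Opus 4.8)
The plan is to verify the two defining conditions of a game (Definition \ref{defn:game}) for $\lambda_{G\to H}$: the idempotency (``spider'') equation and the self-conjugacy equation. The key structural fact to exploit is that $\lambda_{G\to H}$ is built from two copies of the edge-projection $P_H$ (one on the $A$-leg, one on the $\op A$-leg, suitably conjugated) sandwiched by comultiplications/multiplications on $X\otimes\op X$, together with the input graph $G$ on the $X$-side acting as a fixed effect. Concretely, using the thick-wire form, $\lambda_{G\to H}$ sends $X\otimes\op X$ into $A\otimes\op A$ by first applying the sharing-type comultiplication coming from $G$ on the input and then the graph projection $P_H$ on the output; the whole map is essentially ``$P_H$ conjugated against the $G$-weighted comultiplication.'' So the proof reduces to two lemmas: (i) $P_H$ is an idempotent (it is a projection, since $H$ is a quantum graph, which is exactly the content of the rephrased idempotency condition \tikzfig{graph\_thick\_comult}), and (ii) the relevant comultiplication maps interact with $P_H$ and with each other via the Frobenius and specialness equations so that composing two copies of $\lambda_{G\to H}$ collapses back to one.

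First I would set up the diagrammatic form of the idempotency equation for $\lambda_{G\to H}$, stacking two copies and using single (thick) wire notation for $X\otimes\op X$ and $A\otimes\op A$. I would then push the middle $A\otimes\op A$ comultiplication–multiplication pair through using the Frobenius equation \eqref{eq:frob} for the quantum set $A\otimes\op A$, so that the two stacked copies of $P_H$ come adjacent along a single wire; here I invoke that $P_H$ is right-multiplication by the graph state $\ket H$ (the identity \tikzfig{right\_mult\_G}) so that two copies of $P_H$ compose to right-multiplication by $m_{A}(\ket H\otimes\ket H)$-type data, and the graph idempotency \tikzfig{graph\_thick\_comult} for $H$ collapses this to a single $P_H$. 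On the input side the two stacked $G$-effects and comultiplications collapse similarly by the Frobenius and specialness equations \eqref{eq:frob}, \eqref{eq:special} for $X\otimes \op X$ (this is the same manipulation that shows sharing is a projection in Lemma \ref{lem:sharing}(\ref{item:sharing_proj}), since the reflexive case recovers exactly sharing). The self-conjugacy equation I would handle by the remark already made in the text: taking the horizontal flip of the diagram for $\lambda_{G\to H}$, the filled and empty nodes of $X\otimes\op X$ and $A\otimes\op A$ swap (as noted after the definition of the opposite quantum set), $P_H$ is self-conjugate by construction (stated where $P_G$ is introduced), and $G$ is self-conjugate since it is a quantum graph; a short diagram chase then shows the flipped diagram equals the original.

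I expect the main obstacle to be the idempotency step on the output side: one must be careful that composing the two copies of $P_H$ really does reproduce $P_H$ rather than some rescaled or twisted version, because $P_H$ is \emph{not} self-conjugate and the ambient multiplication on $A\otimes\op A$ mixes the $A$ and $\op A$ legs (Equation \eqref{eq:tensor_mult}). The cleanest route is probably to avoid the thick-wire shortcut at this point, split back into the $A$-wire and $\op A$-wire, and check that on the $A$-leg we are composing two copies of ``right-multiply by $\ket H$'' — which gives right-multiply by $m_A(\ket H \otimes \ket H)$, equal to $\ket H$ by the graph-idempotency of $H$ and specialness of $A$ — and symmetrically (via conjugation, so using the opposite multiplication) on the $\op A$-leg. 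Once that local computation is pinned down, reassembling into thick-wire notation and matching with the single-copy $\lambda_{G\to H}$ is routine, and the self-conjugacy check adds nothing new beyond the horizontal-flip bookkeeping.
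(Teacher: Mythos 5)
There is a genuine gap here, and it starts with a misreading of the definition. You describe $\lambda_{G\to H}$ as a single composite --- ``$P_H$ conjugated against the $G$-weighted comultiplication'' --- but the quantum graph homomorphism game is an inclusion--exclusion \emph{sum of five terms}, matching the classical rule function $\bra a H \ket b \bra x G \ket y + \langle a|b\rangle\langle x|y\rangle + 1 - \bra x G\ket y - \langle x|y\rangle$. Each summand is a disconnected diagram (an effect on $X\otimes \op X$ below, a state on $A\otimes \op A$ above). The single composite you have in mind is a different (and unfair) game --- essentially the graph state of $H$ over the graph effect of $G$ --- for which your argument via graph idempotency of $H$ and $G$ would indeed work, but that is not $\lambda_{G\to H}$. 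For the actual game, the idempotency condition expands into twenty-five cross-terms, and the whole content of the proof is that these cross-terms recombine into the original five. That recombination is not automatic: several of the required cancellations (e.g., the product of the $G$-edge effect against the sharing-type ``equal questions'' effect vanishing) depend essentially on \emph{irreflexivity} of $G$ and $H$, a hypothesis your proposal never invokes. The paper does exactly this bookkeeping, computing coproducts of the three distinct bottom-halves and products of the top-halves separately (the terms being disconnected) and tabulating all twenty-five products.

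Two smaller points. First, your self-conjugacy argument is internally inconsistent: you first assert that $P_H$ is ``self-conjugate by construction,'' then later correctly note that $P_H$ is \emph{not} self-conjugate (the paper is explicit that $P_G$ is self-adjoint but not self-conjugate). The clean statement is that the game is assembled from the graph states/effects $\ket G, \ket H$, units, counits, and (co)multiplications, all of which \emph{are} self-conjugate, which is the paper's one-line argument; $P_H$ should not appear in that check at all. Second, your worry about the thick-wire multiplication on $A\otimes\op A$ ``twisting'' the composition of two copies of $P_H$ is a reasonable instinct, but it is aimed at the wrong computation: in the paper's decomposition one never composes $P_H$ with itself, only multiplies graph states $m(\ket H\otimes\ket H)=\ket H$ and comultiplies graph effects, which is precisely the rephrased graph-idempotency condition. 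To repair the proposal you would need to restate the game as the five-term sum and carry out (or find a slicker organization of) the cross-term analysis, with irreflexivity doing the work of killing the off-diagonal products.
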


The proof is in \ref{pf:graph_hom_game}.

In the case of classical questions and answers, $\lambda_{G\to H}$ reduces to the graph homomorphism game between classical graphs, as introduced in \cite{MR16}. The rule function in the classical case becomes
\begin{align*}
    \lambda_{G\to H}(a,b|x,y) &= \bra a H \ket b \bra x G \ket y + \langle a | b \rangle \langle x | y \rangle + 1 - \bra x G \ket y - \langle x | y \rangle
\end{align*}
If $x=y$, then because $G$ is irreflexive, we get
$$\lambda_{G\to H}(a,b|x,y) = \langle a | b \rangle = \mathbbm{1}_{a=b}.$$
If $x\sim_G y$, we have
$$\lambda_{G\to H}(a,b|x,y) = \bra a H \ket b = \mathbbm{1}_{a\sim_H b}.$$
Finally, if $x,y$ are unrelated by $G$,
$$\lambda_{G\to H}(a,b|x,y) =  1.$$

\begin{exam}[Quantum colouring a quantum graph]
    Let $G$ be any irreflexive quantum graph on $X$. Let $K_A$ be the irreflexive quantum graph on $A$ given by the diagram
    \ctikzfig{complete\_graph}
    Then the game $\lambda_{G\to K_A}$ can be thought of as a quantum colouring game for $G$, and if the quantum vertex set $A$ is classical of dimension $n$, then this simplifies to the classical $n$-colouring game for $G$.
\end{exam}

\begin{prop} \label{prop:perf_strat_hom}
    The quantum graph homomorphism game is synchronous.
\end{prop}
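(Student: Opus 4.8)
The plan is to verify the condition of Definition \ref{defn:sync_map} directly from the diagram defining $\lambda_{G\to H}$, using only that the \emph{source} graph $G$ is irreflexive (irreflexivity of the target $H$ is not needed for synchronicity; in the classical rule function, setting $x=y$ already kills every $H$-dependent term because $\bra x G\ket y=0$, leaving $\langle a|b\rangle$). Concretely, I would precompose the game diagram with the sharing map of $X$ on the input thick wire $X\otimes\op X$, simplify, and show that the resulting map already factors through the sharing map of $A$ on the output thick wire $A\otimes\op A$, so that post-composing with sharing on $A$ does nothing.

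The key steps, in order: (1) Form $\lambda_{G\to H}\circ(\text{sharing of }X)$ using the thick-wire presentation of the game. (2) Invoke the ``sharing'' form of irreflexivity of $G$ — that composing the sharing map of $X$ with the $G$-edge part ($P_G$, in the notation of the excerpt) is zero — to delete the $G$-dependent portion of the diagram. (3) Identify what survives: by construction of $\lambda_{G\to H}$ (the inclusion–exclusion combination of $P_H$, $P_G$, sharing, and identity pieces), once the $G$-pieces are gone the remaining diagram routes the $A\otimes\op A$ output through the sharing map of $A$ — the diagonal-on-$A$ piece is exactly the one surviving the collapse $x=y$. (4) Since sharing is a projection (Lemma \ref{lem:sharing}, part (b)), applying sharing on $A$ on top of this is idempotent, which is precisely the equation of Definition \ref{defn:sync_map}. (One could equally run the argument in the thin-wire version of the synchronicity equation; I would pick whichever makes the cancellation in step (2) most transparent.)

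The main obstacle is bookkeeping rather than conceptual: in step (2)–(3) one must track the ``constant'' and ``$-\langle x|y\rangle$''-type summands of the rule (diagrammatically, maps built from counits/units of $X$ and the sharing map of $X$) and see that, after precomposing with sharing on $X$, they cancel against each other — reflecting $1-\langle x|y\rangle\mapsto 0$ on the diagonal — so that only the genuinely ``shared-output'' term remains. A cleaner alternative, if preferred, is to write $\lambda_{G\to H}=\lambda_E\star\lambda_\to$, a commuting product of the function game $\lambda_\to$ with the ``edge-respecting'' game $\lambda_E$; this identity holds exactly because $G$ is irreflexive (which annihilates the cross terms $\langle x|y\rangle\,\bra x G\ket y$), and then synchronicity is immediate from $\lambda_\to$ being synchronous together with Proposition \ref{prop:prod_sync_sync}. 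The cost of this route is having to check separately that $\lambda_E$ is itself a game and that it commutes with $\lambda_\to$, so I would lead with the direct diagrammatic computation above.
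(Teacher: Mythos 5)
Your proposal is correct and takes essentially the same route as the paper: the paper's proof likewise evaluates the synchronicity equation directly, first using irreflexivity of $G$ (in its sharing form, namely that sharing composed with the $G$-edge piece vanishes) to delete the $G$-dependent summands of $\lambda_{G\to H}$ precomposed with sharing on $X$, after which the remaining constant and cap terms cancel and the surviving cup-over-cap term is absorbed by sharing on $A$. Your side remarks --- that only irreflexivity of the source $G$ is needed, and that one could instead factor the game as a commuting product and invoke Proposition \ref{prop:prod_sync_sync} --- are both consistent with the paper, which leads with the direct computation just as you propose.
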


\begin{proof} We evaluate both sides of the synchronicity equation. Let $\lambda := \lambda_{G\to H}$. 
By irreflexivity of $G$, 
    \ctikzfig{pf\_irref\_sharing\_zero}
We have
    \ctikzfig{pf\_graph\_hom\_sync\_1}
and we also have
    \ctikzfig{pf\_graph\_hom\_sync\_2}
so $\lambda$ is synchronous.
\end{proof}

One would hope that the quantum graph homomorphism game is connected to some notion of homomorphisms between quantum graphs, as is the case for classical graphs.

In the case of classical graphs, the graph homomorphism game was defined first, and then used to define quantum homomorphisms in \cite{MR16}. Quantum homomorphisms between classical graphs are defined for the first time as perfect strategies for the graph homomorphism game.

We don't need to take an analogous approach here, as there are already notions in the literature of a quantum homomorphism between quantum graphs. We restate here the quantum homomorphism definition of \cite{MRV}.

\begin{defn}[Quantum homomorphism of quantum graphs, \cite{MRV}]\label{defn:qgraph_hom}
    Given quantum graphs $G:X\to X, H:A\to A$, a quantum function $E:X\to_\cl K A$ is said to be a \emph{quantum homomorphism} from $G$ to $H$ if
        \ctikzfig{quantum\_graph\_hom}
\end{defn}

We can now show that quantum homomorphisms give Alice and Bob a perfect way to play the quantum graph homomorphism game, in the form of a perfect quantum correlation.

In the remainder of this section, take $G:X\to X$, $H:A\to A$ to be irreflexive quantum graphs. 

\begin{thm}\label{thm:hom_to_perf_qt_corr}
    Given a quantum graph homomorphism $E$ from $G$ to $H$ over $\cl K$, the quantum tensor strategy $E\otimes E_*$ realizes a perfect correlation for $\lambda_{G\to H}$ with respect to the cup state of $\cl K \otimes \cl K^*$.
\end{thm}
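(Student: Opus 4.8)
The plan is to verify the defining equation of a perfect correlation (Definition \ref{defn:perf_corr}) for the correlation $P$ obtained by plugging the cup state of $\mathcal K\otimes \mathcal K^*$ into the combined strategy associated with the quantum-tensor pair $(E,E_*)$. First I would unwind the combined strategy: by the recipe for quantum-tensor strategies, the combined strategy on $\mathcal H := \mathcal K\otimes \mathcal K^*$ is $(E\otimes I)(I\otimes E_*)$ with Alice acting on the $\mathcal K$ leg and Bob (playing $\op X\to\op A$ via $E_*$) acting on the $\mathcal K^*$ leg; capping off $\mathcal H$ with the normalized cup state $\Delta u/\sqrt{\dim\mathcal K}$ of $\mathcal K\otimes\mathcal K^*$ produces, diagrammatically, $E$ on the left half of the wire and its conjugate $E_*$ on the right half joined through a cup. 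This is precisely the ``doubling'' construction, and the resulting $P: X\otimes\op X\to A\otimes\op A$ is the map ``apply $E$, then close the $\mathcal K$-loop''. I would note in passing that $P$ is indeed a correlation (CP since it is built from a quantum function closed with a state, counital by \eqref{eq:q_counital} for $E$ together with normalization of the cup state).

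Next I would set up the perfection equation $\lambda_{G\to H}\,P = P$ (or equivalently $P$ composed after $\lambda_{G\to H}$, using self-conjugacy), where $\lambda_{G\to H}$ is expanded using its thick-wire description in terms of the edge projections/graph states. The key algebraic input is Definition \ref{defn:qgraph_hom}: a quantum homomorphism satisfies the intertwining relation between $E$ and the two graphs $G,H$. I expect the proof to proceed by three ingredients applied in sequence: (1) the quantum-function comultiplicativity \eqref{eq:q_comult} of $E$, which lets $E$ pass through comultiplications/multiplications of the quantum sets; (2) the quantum graph homomorphism equation, which is exactly what is needed to match the ``$G$-part'' of $\lambda_{G\to H}$ (pushed through $E$) against the ``$H$-part''; and (3) idempotency/self-conjugacy of the graphs $G,H$ (equivalently that $P_G$, $P_H$ are the relevant projections), so that when the $H$-edge projection meets the image of the $G$-edge projection under $E$, it acts as the identity. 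Because $\lambda_{G\to H}$ is literally assembled from $P_G$, $P_H$, sharing, and units/counits, I would handle each of the three ``clauses'' of the rule ($x\sim_G y$, $x=y$, $x,y$ unrelated) uniformly in the diagrammatic calculus rather than case by case, the way Proposition \ref{prop:perf_strat_hom} was proved.

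More concretely, the core computation will look like: take $\lambda_{G\to H}\,P$, push the graph-state/edge-projection data of $G$ (appearing in $\lambda_{G\to H}$) up through the $E$'s using \eqref{eq:q_comult}; invoke the homomorphism relation of Definition \ref{defn:qgraph_hom} to replace ``$E$ after $G$-edge structure'' by ``$H$-edge structure after $E$''; then observe that the $H$-edge projection in $\lambda_{G\to H}$ composed with this is idempotent and collapses, and likewise the sharing/diagonal term collapses by specialness of $A$ (Lemma \ref{lem:sharing}(b) and the irreflexivity bookkeeping already used in Proposition \ref{prop:perf_strat_hom}). What remains after these collapses is exactly $P$ again, giving $\lambda_{G\to H}\,P = P$, i.e.\ perfection.

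The main obstacle I anticipate is bookkeeping the conjugate strategy $E_*$ on the $\op X\to\op A$ side and making sure the orientations of wires, the cup joining the two copies of $\mathcal K$, and the filled-vs-empty (opposite-algebra) nodes all line up so that the homomorphism equation can be applied simultaneously to both halves — essentially checking that ``$E\otimes E_*$ closed with a cup'' genuinely intertwines $G\otimes\op G$-type structure with $H\otimes\op H$-type structure, not just $E$ with the $G$/$H$ data separately. Self-conjugacy \eqref{eq:self_conj} of $E$ (so that $E_*$ is the ``same'' map on the dual side) and symmetry of the quantum sets (so the cups/caps of $X$ and $\op X$ agree, as noted after the definition of the opposite) should be exactly what makes the left and right halves compatible; verifying this compatibility carefully is the delicate step. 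Everything else is routine graphical manipulation using specialness, counitality, and the quantum-function axioms.
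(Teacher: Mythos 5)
Your proposal is correct and rests on exactly the ingredients the paper uses: the doubling construction $E\otimes E_*$ closed with the normalized cup state of $\cl K\otimes\cl K^*$, comultiplicativity \eqref{eq:q_comult} to handle the sharing/diagonal clause, the intertwining relation of Definition \ref{defn:qgraph_hom} to convert $G$-edge structure into $H$-edge structure, and idempotency of the edge projections. The only difference is organizational: where you propose to expand the five-term rule map and verify the perfection equation $\lambda_{G\to H}\star P = P$ clause by clause inline, the paper factors this into two reusable pieces --- Lemma \ref{lem:E_copied_sync} (the doubled quantum function with the cup state is synchronous, which is your step (1)) and Proposition \ref{prop:perf_corr_graph_hom} (perfect correlations for $\lambda_{G\to H}$ are precisely the synchronous ones satisfying a single $P_G$/$P_H$ intertwining equation, which absorbs your steps (2) and (3)) --- so that the theorem reduces to tracing out the $\cl K$ wire in the homomorphism equation and sliding one $E$ around the cap. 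Your worry about orientations and the $E_*$ bookkeeping is legitimate but is resolved exactly as you suspect, by self-conjugacy \eqref{eq:self_conj} of $E$ and the symmetry of the quantum sets; the paper's modularization simply lets that check live once inside Lemma \ref{lem:E_copied_sync} rather than being repeated across all cross terms.
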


In order to prove the theorem, we split it into several smaller results, in the sections to come. The mapping of homomorphisms to strategies, is given by
\ctikzfig{homs\_to\_corrs\_fwd}
with a quantum homomorphism on the left and a quantum tensor strategy realizing a perfect correlation on the right. We denote the strategy employed on the right by $E\otimes E_*$. Note that $E_*$ is equal to the linear map defined from $E$ by transposing only the Hilbert space wire.

The fact that Bob uses the opposite quantum sets to Alice is vital here. To be understood as Bob's operator, $E_*$ is interpreted as a linear map $\cl H ^* \otimes \op X \to \op A \otimes \cl H^*$, using the canonical linear isomorphism from a quantum set to its opposite. The reader may confirm that $E_*: \op X \to _{\cl H^*} \op A$ constitutes a quantum function, inheriting each of the three properties from $E$. This would not be the case without the opposite structure.

\begin{remark}
    The above mapping is one-to-one in the sense that there is an injective linear map from quantum graph homomorphisms to quantum tensor strategies with the given properties. However, note that different quantum tensor strategies may give rise to a single correlation.
\end{remark}

\begin{remark}[Do all perfect q-correlations arise this way?] We would ideally also like to have the converse result, namely that perfect quantum correlations must arise from quantum homomorphisms. We expect this to be the case. Alternatively, perhaps there is a perfect quantum correlation that does not arise from a quantum homomorphism along with a cup state. In that case, our game invites a well-motivated, weaker definition of quantum homomorphism for quantum graphs.
\end{remark}

We begin with the following lemma, allowing us to build a synchronous correlation from two copies of the same quantum function.

\begin{lemma}\label{lem:E_copied_sync}
Let $E:X\to_{\cl H} A$ be a quantum function. Then the quantum tensor strategy $E \otimes E_*$ realizes a synchronous correlation along with the cup state of $\cl H \otimes \cl H^*$.
\end{lemma}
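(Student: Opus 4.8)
The plan is to verify the synchronicity equation of Definition \ref{defn:sync_map} directly for the correlation $P$ realized by $E \otimes E_*$ together with the normalized cup state of $\cl H \otimes \cl H^*$. First I would write out what $P$ is: by Definition \ref{defn:real}, $P$ is obtained from the combined strategy $\varphi = E \otimes E_*$ (a quantum function $X \otimes \op X \to_{\cl H \otimes \cl H^*} A \otimes \op A$) by plugging the cup state into the resource wire at the bottom and the cup effect at the top. Diagrammatically, this glues the $\cl H$ wire of $E$ to the $\cl H^*$ wire of $E_*$ at top and bottom, producing a closed loop of resource space through which $E$ and $E_*$ are threaded. The key observation is that $E_*$, the categorical conjugate of $E$, is the horizontal flip of $E$, so after closing the loop the resulting map $P: X \otimes \op X \to A \otimes \op A$ is literally a "doubled" picture: two copies of $E$ joined along a shared resource loop, one being the mirror image of the other.

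Next I would compose this $P$ with a sharing map on the $X \otimes \op X$ input, as required by the left-hand side of the synchronicity equation, and try to slide the sharing through. The crucial input here is that $E$ is a quantum function, so it satisfies the comultiplicativity condition \eqref{eq:q_comult}. Concretely, sharing on $X \otimes \op X$ is built from the comultiplication $\Delta$ of $X$ (splitting one leg) together with the quantum-set cups/caps; pushing the comultiplication up through the lower copy of $E$ using \eqref{eq:q_comult} converts it into a comultiplication on $A$ sitting above $E$, and symmetrically for the mirror copy $E_*$ (which satisfies the conjugated version of \eqref{eq:q_comult}). After doing this on both copies, the $\Delta$ that was the "guts" of the input sharing map reappears as the $\Delta$ that is the guts of the output sharing map on $A \otimes \op A$. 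The self-conjugacy condition \eqref{eq:self_conj} on $E$ should be what makes the two halves fit together correctly across the resource loop, and counitality \eqref{eq:q_counital} may be needed to clean up a stray counit, analogous to its role in the proof of Lemma \ref{lem:loop}.

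So the skeleton is: (1) unfold $P$ as the resource-loop closure of $E \otimes E_*$; (2) attach the input sharing map and re-express it via $\Delta$ and cups/caps of $X$; (3) apply \eqref{eq:q_comult} to the lower $E$ and the conjugate of \eqref{eq:q_comult} to the $E_*$ copy to transport the comultiplication to the output side; (4) use \eqref{eq:self_conj} (and possibly properties of sharing from Lemma \ref{lem:sharing}, especially \eqref{item:sharing_mult}) to recognize the result as output-sharing composed with $P$; (5) conclude the synchronicity equation $\,\square_A \circ P = \square_A \circ P \circ \square_X$ (in the schematic notation of Definition \ref{defn:sync_map}) holds, i.e., $P$ is synchronous.

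The main obstacle I expect is bookkeeping the orientations and the mirror symmetry: because $E_*$ is the conjugate, the comultiplication condition it satisfies is the horizontally-flipped version of \eqref{eq:q_comult}, and one has to be careful that the "shared" resource loop joins the $\cl H$-leg of $E$ to the $\cl H^*$-leg of $E_*$ in the orientation that makes $P$ self-conjugate (this is presumably exactly why the cup state of $\cl H \otimes \cl H^*$ is the right choice rather than an arbitrary state). Getting the sharing map on $X \otimes \op X$ to interact correctly with the opposite-algebra multiplication on the $\op X$ factor — keeping track of which nodes are "filled" versus "empty" per the convention set after Definition \ref{defn:sync_map} — is the fiddly part, but it should be a routine diagram chase once the doubled picture is set up; no genuinely new idea beyond \eqref{eq:q_comult} and \eqref{eq:self_conj} is needed.
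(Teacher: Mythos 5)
Your proposal is correct and follows essentially the same route as the paper's proof: both hinge on pushing the comultiplication inside the sharing map through the two copies of $E$ via \eqref{eq:q_comult} (and its conjugate), using \eqref{eq:self_conj}, and identifying $E_*$ with $E$ slid around the cap. The only cosmetic difference is ordering --- the paper first proves synchronicity for the two copies of $E$ composed on an open $\cl H$ wire and only then closes the resource loop with the cup state, whereas you close the loop first and chase the diagram afterwards.
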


The proof is in \ref{pf:E_copied_sync}.

Next, we demonstrate the structure of perfect correlations for the quantum graph homomorphism game.

\begin{prop}\label{prop:perf_corr_graph_hom}
    Perfect correlations $P$ for $\lambda_{G\to H}$ are exactly the synchronous correlations satisfying
        \ctikzfig{pf\_perf\_corr\_graph\_hom3}    
\end{prop}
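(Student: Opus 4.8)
The plan is to prove the two inclusions separately: first that every perfect correlation for $\lambda_{G\to H}$ is synchronous and satisfies the displayed edge-compatibility equation, and conversely that any synchronous correlation satisfying that equation is perfect. The forward direction starts by recalling that $\lambda_{G\to H}$ is a synchronous game (Proposition \ref{prop:perf_strat_hom}), so by Theorem \ref{thm:perf_sync} any perfect correlation $P$ is automatically synchronous; this disposes of the synchronicity half for free. For the edge-compatibility equation, I would write out the defining equation of a perfect correlation, $P \cdot \lambda_{G\to H} = P$ (using that both maps are self-conjugate so the order does not matter), and substitute the thick-wire expression for $\lambda_{G\to H}$ as the sum of the ``edge'' term $P_G$-against-$P_H$, the ``diagonal'' term (sharing against sharing), and the ``unrelated'' term ($1$ minus the other pieces). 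The idea is then to post-compose with the graph projection $P_H$ on the $A\otimes\op A$ output and pre-compose with $P_G$ on the $X\otimes\op X$ input, and to use idempotency of $P_G, P_H$ together with irreflexivity (which, via the computation in the proof of Proposition \ref{prop:perf_strat_hom}, kills the cross-terms between the edge part and the sharing part) to isolate exactly the edge-compatibility identity.

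For the converse, I would take a synchronous correlation $P$ satisfying the edge equation and verify $P\cdot\lambda_{G\to H}=P$ directly. Expanding $\lambda_{G\to H}$ into its three summands, the edge summand contributes $P$ composed with $P_G$ on the input and $P_H$ on the output, which by hypothesis equals $P$ composed with just $P_G$ on the input; the diagonal summand contributes $P$ composed with sharing on input and output, which by synchronicity of $P$ (shared inputs go to shared outputs, and sharing is a projection) equals $P$ composed with sharing on the input alone; and then the three input-side operators $P_G$, sharing, and $1 - P_G - \text{(sharing)}$ add up to the identity on $X\otimes\op X$, so the three contributions collapse to $P$. Here I would be careful that irreflexivity of $G$ is what makes $P_G$ and sharing ``orthogonal enough'' on the input side for the bookkeeping to work — this is the same $P_G \circ (\text{sharing}) = 0$ type identity used in Proposition \ref{prop:perf_strat_hom}, rephrased on a single $X\otimes\op X$ wire.

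The main obstacle I anticipate is the diagrammatic bookkeeping in getting the cross-terms to vanish: $\lambda_{G\to H}$ is built from $P_G$, $P_H$, sharing maps, and identity/counit pieces glued in a fairly intricate way across the two wires $X$ and $\op X$, and I will need the irreflexivity identity ``sharing followed by $P_G$ is zero'' (equivalently $G$ composed with $\Delta u$ vanishes) applied on the correct legs, plus self-conjugacy of $G$ and $H$ to move the projections $P_G, P_H$ around. I would manage this by working entirely in the thick-wire notation for $X\otimes\op X$ and $A\otimes\op A$ and reusing verbatim the intermediate diagrams \tikzfig{pf\_graph\_hom\_sync\_1} and \tikzfig{pf\_graph\_hom\_sync\_2} from the proof that $\lambda_{G\to H}$ is synchronous, since the same cancellations recur here; the only genuinely new ingredient is recognizing that ``$P$ perfect'' after these cancellations is literally the stated equation, with no further hypotheses needed beyond synchronicity of $P$.
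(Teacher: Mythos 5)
Your setup for the forward direction (synchronicity of $P$ comes for free from Proposition \ref{prop:perf_strat_hom} and Theorem \ref{thm:perf_sync}) and your converse (expand $\lambda_{G\to H}$, use the hypothesis on the edge term, synchronicity on the diagonal term, and observe the contributions telescope to $P$) both match the paper. The gap is in your mechanism for extracting the edge-compatibility equation in the forward direction. Write the five summands of $\lambda_{G\to H}$ as rank-one maps $\ket\beta\bra\alpha$, so that Schur-multiplying against $P$ gives terms of the form $L_\beta\, P\, L_\alpha^\dagger$ (left multiplication by the corresponding states); the perfect-correlation identity reads
\begin{equation*}
L_{H}\,P\,L_{G} \;+\; \mathrm{sh}_A\,P\,\mathrm{sh}_X \;+\; P \;-\; P\,L_{G}\;-\;P\,\mathrm{sh}_X \;=\; P .
\end{equation*}
If you now sandwich this between $P_H$ on the output and $P_G$ on the input, irreflexivity does kill the two sharing terms, but the surviving terms are $P_H L_H P L_G P_G + P_H P P_G - P_H P L_G P_G = P_H P P_G$, which after idempotency collapses to $L_H P L_G = L_H P L_G$ --- a tautology. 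The post-composition with $P_H$ is exactly what destroys the content: the desired conclusion is $L_H P L_G = P L_G$, and hitting both sides with $P_H$ erases the difference between them. So the recipe as you state it proves nothing.

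Two repairs are available. One is to drop the post-composition entirely and only pre-compose with $P_G$: then irreflexivity kills the sharing terms, the $P$'s cancel, and you are left with $L_H P L_G = P L_G$ directly (interestingly, this variant never invokes synchronicity of $P$). The paper does something different and equally clean: it cancels the summand $u_{A\otimes\op A}\circ\varepsilon_{X\otimes\op X}$ (whose Schur product with $P$ is just $P$) against the right-hand side, rearranges, and then uses synchronicity of $P$ --- not irreflexivity --- to identify $\mathrm{sh}_A\,P\,\mathrm{sh}_X$ with $P\,\mathrm{sh}_X$ so that the two sharing terms cancel, leaving precisely the stated equation; the converse is then literally reversing these steps. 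One last small point: in your converse you worry that irreflexivity is needed to make $P_G$ and sharing ``orthogonal enough,'' but no orthogonality is used there --- the three contributions simply add linearly to $P$.
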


\begin{proof}
    Let $P:X\otimes \op X \to A \otimes \op A$ be a perfect correlation for $\lambda_{G\to H}$. 
    Then we have that
        \ctikzfig{pf\_perf\_corr\_graph\_hom1}    
    Cancelling out $P$ from each side and rearranging, we have
        \ctikzfig{pf\_perf\_corr\_graph\_hom2} 
    By Proposition \ref{prop:perf_strat_hom}, $\lambda_{G\to H}$ is synchronous. Then, by Theorem \ref{thm:perf_sync}, $P$ is also  synchronous. Cancelling the terms with sharing,
        \ctikzfig{pf\_perf\_corr\_graph\_hom3}

    Conversely, if $P$ is synchronous and satisfies the above equation, then we can reverse all of the steps to show that $P$ is perfect for $\lambda_{G\to H}$.
\end{proof}

We now have all the tools we need to prove Theorem \ref{thm:hom_to_perf_qt_corr}.

\begin{proof}[Proof of Theorem \ref{thm:hom_to_perf_qt_corr}]
     If $E:X\to _\cl K A$ is a quantum homomorphism from $G$ to $H$, define $\varphi$ to be the quantum tensor strategy where Alice uses $E$ and Bob uses $E_*$. Along with the cup state of $\cl K \otimes \cl K^*$, $\varphi$ realizes a synchronous correlation $P$, as shown in Lemma \ref{lem:E_copied_sync}.

     Because $E$ is a quantum graph homomorphism, we have
        \ctikzfig{quantum\_graph\_hom}
      Tracing out the $\cl K$ wire and sliding the righthand $E$ around the cap, we get
        \ctikzfig{qt\_graph\_strat}
     which is exactly the equation condition of Proposition \ref{prop:perf_corr_graph_hom}.
     
     $P$ satisfies the conditions of Proposition \ref{prop:perf_corr_graph_hom}, so $P$ is a perfect correlation for $\lambda_{G\to H}$.     
\end{proof}

\subsubsection*{The deterministic case gives a partial converse}

While we don't have a general converse to the above, we do when the players use deterministic strategies for the quantum graph homomorphism game. Note that in the following, a classical homomorphism between quantum graphs means a quantum homomorphism over $\bb C$.

\begin{thm}\label{thm:perf_det_to_hom}
    Let $P$ be a perfect correlation for $\lambda_{G\to H}$ arising from a deterministic strategy. Then the strategy is of the form $E\otimes E$ where $E:X\to A$ is a classical homomorphism from $G$ to $H$.
\end{thm}

Note that in the deterministic case, $E = E_*$.

\begin{proof}
    By Proposition \ref{prop:perf_corr_graph_hom}, we have
        \ctikzfig{pf\_perf\_corr\_graph\_hom3}
    
    Denote the deterministic strategy realizing $P$ by $E\otimes F$. Then by Theorem \ref{thm:sync_qc_corr}, $F=E$. 
  
    At the level of the quantum function $E$ and single wires, the fact that $P$ is perfect can be rewritten
    \ctikzfig{det\_graph\_hom} 
    which says exactly that $E$ is a classical homomorphism from $G$ to $H$.
\end{proof}

We may conclude that deterministic strategies for the quantum graph homomorphism game are in one-to-one correspondence with classical homomorphisms.

\subsubsection*{Graph homomorphisms and associated games} 

    The preceding results allow us to connect the quantum graph homomorphism game to other notions of graph homomorphism. 
    
    We present relationships, in Figure \ref{fig:graph_homs}, between the quantum and classical graph homomorphism games, and quantum and classical homomorphisms of quantum and classical graphs. Our quantum graph homomorphism game adds a previously missing vertex to this diagram. The solid arrows are due to the work presented here.

\subsection*{The quantum graph isomorphism game}

It is also possible to similarly construct a quantum graph isomorphism game, with more terms in the rule function. This can be constructed directly by inclusion-exclusion reasoning, in analogy to how the graph homomorphism game was defined, but can also be constructed from graph homomorphism games as a commuting product.

\begin{defn}[Quantum graph isomorphism game]
    Given irreflexive quantum graphs $G:X\to X$ and $H:A\to A$, the \emph{quantum graph isomorphism game} is denoted $\lambda_{G\simeq H}:X\otimes \op X \to A\otimes \op A$ and given by
    $$\lambda_{G\simeq H} := \lambda_{G\to H}\star \lambda_{H\to G}^\dagger.$$
\end{defn}

We quickly check that the commuting product is well defined here. We just need to see that $\lambda_{G\to H}$ and $\lambda_{H\to G}^\dagger$ commute. This follows from the fact that Table \ref{tbl:hom_comm_prod} is symmetric, as these are the building blocks of $\lambda_{G\to H}$ and $\lambda_{H\to G}^\dagger$. The commuting product $\lambda_{G\to H} \star \lambda_{H\to G}^\dagger$ can be computed in either order to give
\ctikzfig{qgraph\_iso\_game\_thick}

\begin{prop}\label{prop:iso_game_bisync}
    The quantum graph isomorphism game is bisynchronous.
\end{prop}

\begin{proof}
    Because $\lambda_{G\to H}$ is synchronous, then by Proposition \ref{prop:prod_sync_sync}, $\lambda_{G\simeq H}$ is also synchronous. Note further that  $\lambda_{G\simeq H}^\dagger = \lambda_{H\simeq G}$ is also synchronous so $\lambda_{G\simeq H}$ is bisynchronous. 
\end{proof}

\begin{remark}[Classical graph isomorphism game]
    \label{rem:cl_iso}
    In the case of classical graphs, $\lambda_{G\simeq H}$ is a $0$-$1$ matrix with entries $\bra a \bra b \lambda_{G\simeq H} \ket x \ket y$ that are $1$ whenever both
    \begin{align*}
        \bra a \bra b \lambda_{G\to H} \ket x \ket y &= 1, \\
        \bra x \bra y \lambda_{H\to G} \ket a \ket b &= 1.
    \end{align*}
    Perfect deterministic bistrategies in the classical case are bijections $E:X\to A$ mapping $x$ to $a$ and $y$ to $b$ only when the above conditions are satisfied. But this is exactly the condition that $E$ is a valid graph isomorphism from $G$ to $H$. This and related results are considered in \cite{LMR20}.

    This game is a simplified version of the graph isomorphism game of \cite{AMRSSV}, without some of the rules that force perfect strategies to come from bijections. Instead, we look to \emph{bistrategies} to get bijections, and only force perfect strategies to intertwine the adjacency matrices.
\end{remark}

We would like to extend Remark \ref{rem:cl_iso} to the quantum game setting, using the definition of quantum isomorphism of quantum graphs, given in \cite{MRV}. The following result is analogous to Theorem \ref{thm:hom_to_perf_qt_corr} from the homomorphism case.

\begin{thm}\label{thm:iso_to_perf_qt_bicor}
    Given a quantum isomorphism $E$ from $G$ to $H$, $E\otimes E_*$ is a quantum tensor bistrategy realizing a perfect bicorrelation (with respect to a cup state) for the quantum graph isomorphism game $\lambda_{G\simeq H}$.
\end{thm}

\begin{proof}
    Given a quantum graph isomorphism $E:G\to H$, we know by Theorem \ref{thm:hom_to_perf_qt_corr} $E\otimes E_*$ is a quantum bistrategy for $\lambda_{G\to H}$ and for $\lambda_{H\to G}^\dagger$ that realises a perfect bicorrelation for each, so by Proposition \ref{prop:perf_product}, $E\otimes E_*$ realises a perfect bicorrelation for the commuting product.
\end{proof}

A converse result to the above theorem would follow from a hypothetical converse to Theorem \ref{thm:hom_to_perf_qt_corr}. At least in the deterministic case, we do have a converse, as follows.

Note that in the following, a classical isomorphism between quantum graphs means a quantum isomorphism, as per \cite{MRV}, over $\bb C$.

\begin{thm}\label{thm:perf_det_to_iso}
    Let $P$ be a perfect bicorrelation for $\lambda_{G\to H}$ given by a deterministic bistrategy. Then the bistrategy is of the form $E\otimes E$ where $E:X\to A$ is a classical isomorphism from $G$ to $H$.
\end{thm}

\begin{proof}
    Given a deterministic bistrategy $E\otimes F$ realizing a perfect bicorrelation $P$ for the quantum graph isomorphism game, we can apply Theorem \ref{thm:perf_det_to_hom} in both directions, as follows.
    We know, by Proposition \ref{prop:perf_product}, that the bicorrelation $P$ is perfect for both $\lambda_{G\to H}$ and $\lambda_{H\to G}^\dagger$. By Theorem \ref{thm:perf_det_to_hom}, $F = E$ and $E$ is a homomorphism $G\to H$.
    We can also conclude that $P^\dagger$ is a perfect correlation for $\lambda_{H\to G}$, so again by Theorem \ref{thm:perf_det_to_hom}, $E^\dagger$ is a homomorphism $H\to G$. 
    
    But then, by combining these two homomorphism equations (and taking the dagger of the second one), we get that $E$ is a quantum bijection (over $\bb C$) satisfying
    \ctikzfig{qgraph\_det\_iso}
    Applying a counit on the top left and a unit on the bottom right, we get
    \ctikzfig{qgraph\_det\_iso1}
    so $E$ is a classical isomorphism from $G$ to $H$.
    
\end{proof}

Similarly to the homomorphism case, we may conclude that deterministic bistrategies for the quantum graph isomorphism game are in one-to-one correspondence with classical isomorphisms.

\section{Other quantum approaches to synchronicity}
\label{sec:other_approaches}

We now consider two other recent generalizations of synchronous games and correlations. We relate these to our definition of synchronicity.

\subsection*{Concurrent correlations}
In \cite{BHTT21, BHTT23}, there is a related attempt to generalize synchronicity to quantum question and answer games. The authors call these correlations \emph{concurrent} instead of synchronous, and their definition is specific to the case where $X,A$ are full matrix algebras. Concurrency says a correlation takes the cup state (which is maximally entangled) to the cup state.  

The question and answer sets of \cite{BHTT21}, thought of as quantum sets, are maximally quantum, as in Example \ref{exam:max_quant}. With this in mind, their definition of the property concurrent for a quantum correlation $P$ is represented by the graphical equation
\ctikzfig{concurrent_corr}

Using the above as a template, we can extend concurrency to when the question and answer sets are general quantum sets.

\begin{defn}[Concurrent]
    A linear map $F:X\otimes \op X \to A\otimes \op A$ is called \emph{concurrent} if it satisfies
    \ctikzfig{concurrent_map}
\end{defn}

We can think of concurrency as preserving the cup state.

Note that replacing $P$ with $P^\dagger$ gives ``co-concurrency" of $P$. In the case that $X,A$ are classical, this is causality of the quantum map $P$, as given in \cite{PQP}. 

In the case that $X=A$, and $F$ is a quantum relation on $X$ as defined in \cite{MRV}, concurrency is the reflexivity condition of \cite[Definition 7.6]{MRV}.

It turns out that synchronous games and correlations need not be concurrent.

\begin{exam}[Unfair function game]
The synchronous game 
\ctikzfig{square\_square}
is not concurrent, because
\ctikzfig{square\_square\_not\_concur}
This is an unfair version of the function game, where the players cannot win if they receive different answers.
\end{exam}

The concurrency equation fails by a scaling factor. There are also synchronous games that fail more seriously to be concurrent.

\begin{exam}[Unfair graph game]
Given irreflexive graphs $G, H$, the synchronous game
\ctikzfig{G\_H\_unfair}
is not concurrent, because
\ctikzfig{G\_H\_not\_concur}  
\end{exam}

Note that these are both unfair games. It would be interesting to see a fair synchronous game that is not concurrent.

If $P$ is a bicorrelation, then we can relate concurrency to bisynchronicity, as follows.

\begin{prop}\label{prop:bisync_concur}
    Bisynchronous bicorrelations are concurrent.
\end{prop}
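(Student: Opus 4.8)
The plan is to use Proposition \ref{prop:bisync_sharing}, which tells us that a bisynchronous map preserves sharing, i.e. $P$ commutes with the sharing map in the sense that sharing on the output equals $P$ followed by sharing equals sharing followed by $P$. Concretely, bisynchronicity gives $P \circ (\text{sharing}_X) = (\text{sharing}_A) \circ P = \text{sharing}_A \circ P \circ \text{sharing}_X$. The goal is to show $P$ sends the cup state of $X \otimes \op X$ to the cup state of $A \otimes \op A$.

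First I would recall, from Lemma \ref{lem:sharing}(\ref*{item:sharing_unit}), that sharing applied to the unit of $X$ produces the cup state of $X \otimes \op X$ (and similarly for $A$). So the cup state on the input side is sharing composed with the unit $u_X$. Then I would compute $P$ applied to the cup state of $X\otimes \op X$: rewrite the cup state as $\text{sharing}_X \circ u_X$, push $P$ through using $P \circ \text{sharing}_X = \text{sharing}_A \circ P$ (the cosynchronicity half, equivalently the sharing-preservation of Proposition \ref{prop:bisync_sharing}), obtaining $\text{sharing}_A \circ P \circ u_X$. Now $P$ is a correlation, hence counital, and $P$ is a bicorrelation, hence also \emph{unital}: $P \circ u_{X\otimes \op X} = u_{A \otimes \op A}$; since the unit of $X\otimes\op X$ is $u_X\otimes u_{\op X} = u_X$ under the identifications in place, $P \circ u_X = u_A$. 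Therefore $P$ applied to the cup state equals $\text{sharing}_A \circ u_A$, which by Lemma \ref{lem:sharing}(\ref*{item:sharing_unit}) again is exactly the cup state of $A \otimes \op A$. This is precisely the concurrency equation.

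I would present this diagrammatically: start with the cup-state-input side of the concurrency equation, retract it into sharing-of-unit using the shorthand introduced after Lemma \ref{lem:sharing}, slide $P$ down through the sharing box using Proposition \ref{prop:bisync_sharing}, absorb the unit using that $P$ is a (bi)correlation (unitality of $P^\dagger$ being a correlation translates to $P$ being unital), and finally re-expand sharing-of-unit back into a cup on the $A$ side. Each step is a single cited fact, so the proof is short.

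The only subtlety — and the main thing to get right rather than a genuine obstacle — is bookkeeping the identification $u_{X\otimes\op X} = u_X \otimes u_{\op X}$ and the fact that the unit of $\op X$ is the same as the unit of $X$ (stated in the definition of the opposite quantum set), so that "unital correlation" really does kill the cup-producing unit correctly; and likewise checking that the cups/caps of $X\otimes\op X$ used in the concurrency diagram are the ones coming from sharing-of-unit, which is the content of the shorthand notation. No estimate or inequality is needed, so there is no hard analytic step.
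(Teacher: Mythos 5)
Your proof is correct and follows essentially the same route as the paper's: unpack the thick-wire cup as sharing applied to the unit, push $P$ through sharing via Proposition \ref{prop:bisync_sharing}, and absorb the unit by unitality of the bicorrelation. The paper's diagrammatic proof cites exactly these two facts (sharing preservation and unitality), so your write-up is just a more explicit rendering of the same argument.
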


\begin{proof}
    Let $P:X\otimes \op X  \to A \otimes \op A$ be a bisynchronous bicorrelation. Then
    \ctikzfig{pf\_bi\_bi\_concur}
    with steps justified as follows:
    \begin{enumerate}
        \item $P$ preserves sharing by Proposition \ref{prop:bisync_sharing}.
        \item $P$ is unital.
    \end{enumerate}
\end{proof}

Note that the proof did not use the property that a bicorrelation is CP, nor that a bicorrelation is counital. This means there is a somewhat more general result here, that bisynchronous unital maps are concurrent.

\subsection*{Synchronicity in the quantum spaces of maps setting}

In \cite{BKS21}, there is also a definition for synchronicity of quantum question and answer correlations, which the authors refer to as quantum correlations of quantum spaces. Their input space denoted $C(\bb{P})$ corresponds to our question space $X$ while their output space $C(\bb{O})$ corresponds to our answer space $A$. For the purposes of this section, we denote $X = \bigoplus_{\ell=1}^{N_X} X_\ell$ where each $X_\ell$ is the full matrix algebra $M_{m_\ell}(\bb{C})$, and $A = \bigoplus_{k=1}^{N_A} A_k$ where each  $A_k$ is the full matrix algebra $M_{n_k}(\bb{C})$. 

Their definition of synchronous is given by the property
\begin{equation}
\label{eq:polish_sync}
\sum_{s, t, i, j, k, \ell} \frac{1}{n_k m_\ell}\ { }_{\ell \ell}^{k k} P_{(i j),(i j)}^{(s t),(s t)}=N_X
\end{equation}

where the coefficients of $P$ come from writing the correlation $P$ in a matrix unit basis for $X$ and $A$. In \cite{BKS21}, these are denoted with an $X$ instead of a $P$. See their paper for the details of index placement.

When matching summed indices using strings, this looks a lot like a consequence of our definition, given by the loop-sliding equation of Lemma \ref{lem:loop},
\ctikzfig{loop_slide_sync}
however, neither the weighting in the sum, nor the number $N_X$ can arise when interpreting each wire as a quantum set, or even as an (unnormalized) $\dagger$-SFA. (This last claim is based on the $\dagger$-SFA classification result \cite[Proposition 5.33]{catsQT}, which is slightly more general than what's described in Remark \ref{rem:qset_structure}.)

Instead, we show a graphical way to generate  \eqref{eq:polish_sync} using a similar loop-sliding diagram, but allowing for two Frobenius structures on the same set. The approach uses only classical structures, which means that \cite{BKS21}'s definition of synchronicity of QNS correlations is a classical property. We compare the graphical translation of \eqref{eq:polish_sync} to our definition of synchronous, and show that in the classical question and answer case, it is equivalent to our definition.

\subsubsection*{Using classical structures}
Using two commutative Frobenius structures (not both quantum sets) on each of $X$ and $A$, we can draw  \eqref{eq:polish_sync} like this:

\ctikzfig{bks\_sync\_classical}

To read \eqref{eq:polish_sync} from the above diagram, we must specify the $\dagger$-CFAs. Each $\dagger$-CFA on $X$ uses a differently weighted Schur product and the appropriate unit. The $\dagger$-CFA represented by the filled node \tikzfig{x_dark} has its multiplication given by the Schur product of matrices, $\circ$, with entrywise multiplication in the $\ell^\mathrm{th}$ block weighted by the square root of its size $\sqrt{m_\ell}$ and the Schur unit (having a $1$ in each nonzero entry)  weighted by $\sqrt{m_\ell}^{-1}$. In other words, if $a, b \in X_\ell$, then the filled multiplication takes $a \otimes b$ to $\sqrt{m_\ell}\ a\circ b$, and the unit is given by $\bigoplus_\ell \sqrt{m_\ell}^{-1}J_\ell$ where $J_\ell \in X_\ell$ is the all-ones matrix.

The $\dagger$-CFA on $X$ represented by the empty node \tikzfig{x_light} is the same as above but the weights are replaced with their multiplicative inverses. This second $\dagger$-CFA is actually special and so is indeed a quantum set, matching our empty node notation from the rest of the paper. 

For the $\dagger$-CFAs on $A$, the structures are the same as above, replacing $X_\ell$ with $A_k$ and $\sqrt{m_\ell}$ with $\sqrt{n_k}$.

\subsubsection*{When the questions and answers are classical}

In the case of classical questions and answers, $m_\ell = 1 = n_k$ for all $k, \ell$. This means that the matrix blocks are all size $1$, and the coefficients on the operations are all $1$, and so the two Frobenius structures (filled and empty) coincide to be the classical quantum set that copies the diagonal matrix units. In this case, \eqref{eq:polish_sync} actually \emph{is} the loop-sliding equation
\ctikzfig{loop_slide_sync}
of Lemma \ref{lem:loop}.

In general, the loop-sliding equation is a consequence of our definition of synchronicity. In the classical question and answer setting, they are equivalent, as loop-sliding says
$$\sum_{x\in X, a\in A} p(a,a|x,x) = |X|$$
or equivalently, $p(a,b|x,x) = 0$ for $a\neq b$. This is equivalent to
$$\delta_{ab}p(a,b|x,x)=p(a,b|x,x)$$
for all $a,b,x$, which is the commutative interpretation of our synchronicity definition.

This equivalence in the classical setting is to be expected, as both our definition and that of \cite{BKS21} have been shown to generalize typical synchronicity. However, the generalizations lead in different directions for quantum sets. Our definition generalizes the measure-and-encode step of checking synchronicity, replacing that map by sharing, its noncommutative counterpart. This leads to a synchronicity that observes quantum structure. The \cite{BKS21} definition appears to be equivalent to treating the quantum set as a large classical set, in the sense of superdense coding, and then applying the typical definition of synchronicity.

\section*{Acknowledgements}
    I acknowledge the support of the Natural Sciences and Engineering Research Council of Canada (NSERC). I would like to thank my PhD advisor Michael Brannan for discussing this work with me at length, and asking many fruitful questions. I would also like to thank Dominic Verdon for several conversations that improved this work. I'm grateful to nannies Anna and Victoria, as well as the teachers at Owl's Nest, for giving me the quiet space to think mathematically, and to my partner Max for braving with me the whirlwind that is parenting a toddler as early-career mathematicians.

\DeclareFieldFormat{doi}{DOI: \href{https://doi.org/#1}{\nolinkurl{#1}}}
\addcontentsline{toc}{section}{References}
\printbibliography

@article{LMR20,
title = {Nonlocal games and quantum permutation groups},
journal = {Journal of Functional Analysis},
volume = {279},
number = {5},
pages = {108592},
year = {2020},
issn = {0022-1236},
doi = {10.1016/j.jfa.2020.108592},
url = {https://www.sciencedirect.com/science/article/pii/S002212362030135X},
author = {Martino Lupini and Laura Man\v{c}inska and David E. Roberson}
}

@article{AMRSSV,
author = {Atserias, Albert and Man\v{c}inska, Laura and Roberson, David and \v{S}\'amal, Robert and Severini, Simone and Varvitsiotis, Antonios},
year = {2019},
month = {11},
pages = {},
title = {Quantum and non-signalling graph isomorphisms},
volume = {136},
journal = {Journal of Combinatorial Theory, Series B},
doi = {10.1016/j.jctb.2018.11.002}
}

@article{BCEHPSW,
   title={Bigalois Extensions and the Graph Isomorphism Game},
   volume={375},
   ISSN={1432-0916},
   url={http://dx.doi.org/10.1007/s00220-019-03563-9},
   DOI={10.1007/s00220-019-03563-9},
   number={3},
   journal={Communications in Mathematical Physics},
   publisher={Springer Science and Business Media LLC},
   author={Brannan, Michael and Chirvasitu, Alexandru and Eifler, Kari and Harris, Samuel and Paulsen, Vern and Su, Xiaoyu and Wasilewski, Mateusz},
   year={2019},
   month=sep, pages={1777–1809} }

@article{MR16,
   title={Quantum homomorphisms},
   volume={118},
   ISSN={0095-8956},
   url={http://dx.doi.org/10.1016/j.jctb.2015.12.009},
   DOI={10.1016/j.jctb.2015.12.009},
   journal={Journal of Combinatorial Theory, Series B},
   publisher={Elsevier BV},
   author={Man\v{c}inska, Laura and Roberson, David E.},
   year={2016},
   month=may, pages={228–267} }

@inproceedings{PR21,
  title={Bisynchronous games and factorizable maps},
  author={Vern I. Paulsen and Mizanur Rahaman},
  booktitle={Annales Henri Poincar{\'e}},
  volume={22},
  pages={593--614},
  year={2021},
  doi={10.1007/s00023-020-01003-2},
  organization={Springer}
}

@article{CHK14,
    author = {Coecke, Bob and Heunen, Chris and Kissinger, Aleks},
    year = {2013},
    month = {05},
    pages = {},
    title = {Categories of Quantum and Classical Channels},
    volume = {15},
    journal = {Quantum Information Processing},
    doi = {10.1007/s11128-014-0837-4}
}

@article{BHTT21,
    title = {Synchronicity for quantum non-local games},
    journal = {Journal of Functional Analysis},
    volume = {284},
    number = {2},
    pages = {109738},
    year = {2023},
    issn = {0022-1236},
    doi = {10.1016/j.jfa.2022.109738},
    url = {https://www.sciencedirect.com/science/article/pii/S0022123622003585},
    author = {Michael Brannan and Samuel J. Harris and Ivan G. Todorov and Lyudmila Turowska}
}

@article{BHTT23,
    title = {Quantum no-signalling bicorrelations},
    journal = {Advances in Mathematics},
    volume = {449},
    pages = {109732},
    year = {2024},
    issn = {0001-8708},
    doi = {10.1016/j.aim.2024.109732},
    url = {https://www.sciencedirect.com/science/article/pii/S0001870824002470},
    author = {Michael Brannan and Samuel J. Harris and Ivan G. Todorov and Lyudmila Turowska}
}

@article{MRV,
   title={A compositional approach to quantum functions},
   volume={59},
   ISSN={1089-7658},
   url={http://dx.doi.org/10.1063/1.5020566},
   DOI={10.1063/1.5020566},
   number={8},
   journal={Journal of Mathematical Physics},
   publisher={AIP Publishing},
   author={Musto, Benjamin and Reutter, David and Verdon, Dominic},
   year={2018},
   month=aug }

@article{Vic10,
   title={Categorical Formulation of Finite-Dimensional Quantum Algebras},
   volume={304},
   ISSN={1432-0916},
   url={http://dx.doi.org/10.1007/s00220-010-1138-0},
   DOI={10.1007/s00220-010-1138-0},
   number={3},
   journal={Communications in Mathematical Physics},
   publisher={Springer Science and Business Media LLC},
   author={Vicary, Jamie},
   year={2010},
   month=nov, pages={765–796} }

@article{PSSTW16,
   title={Estimating quantum chromatic numbers},
   volume={270},
   ISSN={0022-1236},
   url={http://dx.doi.org/10.1016/j.jfa.2016.01.010},
   DOI={10.1016/j.jfa.2016.01.010},
   number={6},
   journal={Journal of Functional Analysis},
   publisher={Elsevier BV},
   author={Paulsen, Vern I. and Severini, Simone and Stahlke, Daniel and Todorov, Ivan G. and Winter, Andreas},
   year={2016},
   month=mar, pages={2188–2222} }

@book{PQP,
  title={Picturing Quantum Processes},
  author={Coecke, B. and Kissinger, A.},
  isbn={9781107104228},
  lccn={2016035537},
  url={https://books.google.ca/books?id=I9gcDgAAQBAJ},
  year={2017},
  publisher={Cambridge University Press},
  doi={10.1017/9781316219317}
}

@book{catsQT,
  title={Categories for Quantum Theory: An Introduction},
  author={Heunen, C. and Vicary, J.},
  isbn={9780198739623},
  series={Oxford Graduate Texts in Mathematics Series},
  url={https://books.google.ca/books?id=PdG8DwAAQBAJ},
  year={2019},
  publisher={Oxford University Press}
}

@article{BKS21,
    author = {Bochniak, Arkadiusz and Kasprzak, Pawe\l{} and So\l{}tan, Piotr M},
    title = "{Quantum Correlations on Quantum Spaces}",
    journal = {International Mathematics Research Notices},
    volume = {2023},
    number = {14},
    pages = {12400-12440},
    year = {2023},
    month = {07},
    issn = {1073-7928},
    doi = {10.1093/imrn/rnac139},
    url = {https://doi.org/10.1093/imrn/rnac139},
    eprint = {https://academic.oup.com/imrn/article-pdf/2023/14/12400/50903565/rnac139.pdf},
}

@ARTICLE{DW16,
  author={Duan, Runyao and Winter, Andreas},
  journal={IEEE Transactions on Information Theory}, 
  title={No-Signalling-Assisted Zero-Error Capacity of Quantum Channels and an Information Theoretic Interpretation of the Lovász Number}, 
  year={2016},
  volume={62},
  number={2},
  pages={891-914},
  keywords={Correlation;Bipartite graph;Channel capacity;Quantum computing;Quantum entanglement;Mathematical model;Graph Theory;quantum information;zero-error information theory},
  doi={10.1109/TIT.2015.2507979}
}

@phdthesis{myThesis,
    author={Goldberg, Adina},
    title={Synchronous and quantum games:
Graphical and algebraic methods},
    school={University of Waterloo, Pure Mathematics},
    year={2025},
    note = {Available online at \url{https://hdl.handle.net/10012/21736}}
}

@article{fritz12,
  title={Tsirelson's problem and Kirchberg's conjecture},
  author={Fritz, Tobias},
  journal={Reviews in Mathematical Physics},
  volume={24},
  number={05},
  pages={1250012},
  year={2012},
  publisher={World Scientific},
  doi={10.1142/S0129055X12500122}
}

@article{PV16,
    author = {Palazuelos, Carlos and Vidick, Thomas},
    title = {Survey on nonlocal games and operator space theory},
    journal = {Journal of Mathematical Physics},
    volume = {57},
    number = {1},
    pages = {015220},
    year = {2016},
    month = {01},
    issn = {0022-2488},
    doi = {10.1063/1.4938052},
    url = {https://doi.org/10.1063/1.4938052},
    eprint = {https://pubs.aip.org/aip/jmp/article-pdf/doi/10.1063/1.4938052/16729413/015220\_1\_online.pdf},
}

@online{paulsen16,
  author        = {Vern Paulsen},
  title         = {Entanglement and Non-Locality},
  editor       = {Samuel J. Harris and Satish K. Pandey},
  year          = {2016},
  howpublished  = {University of Waterloo lecture notes for PMATH 990/QIC 890}, url = {https://www.math.uwaterloo.ca/~vpaulsen/EntanglementAndNonlocality_LectureNotes_7.pdf}
}

@article{RV15,
author = {Regev, Oded and Vidick, Thomas},
title = {Quantum XOR Games},
year = {2015},
issue_date = {September 2015},
publisher = {Association for Computing Machinery},
address = {New York, NY, USA},
volume = {7},
number = {4},
issn = {1942-3454},
url = {https://doi.org/10.1145/2799560},
doi = {10.1145/2799560},
abstract = {We introduce quantum XOR games, a model of two-player, one-round games that extends the model of XOR games by allowing the referee’s questions to the players to be quantum states. We give examples showing that quantum XOR games exhibit a wide range of behaviors that are known not to exist for standard XOR games, such as cases in which the use of entanglement leads to an arbitrarily large advantage over the use of no entanglement. By invoking two deep extensions of Grothendieck’s inequality, we present an efficient algorithm that gives a constant-factor approximation to the best performance that players can obtain in a given game, both in the case that they have no shared entanglement and that they share unlimited entanglement. As a byproduct of the algorithm, we prove some additional interesting properties of quantum XOR games, such as the fact that sharing a maximally entangled state of arbitrary dimension gives only a small advantage over having no entanglement at all.},
journal = {ACM Trans. Comput. Theory},
month = aug,
articleno = {15},
numpages = {43},
keywords = {entangled games, XOR games, Grothendieck inequality}
}

@article{BGH22,
    author = {Brannan, Michael and Ganesan, Priyanga and Harris, Samuel J.},
    title = {The quantum-to-classical graph homomorphism game},
    journal = {Journal of Mathematical Physics},
    volume = {63},
    number = {11},
    pages = {112204},
    year = {2022},
    month = {11},
    abstract = {Motivated by non-local games and quantum coloring problems, we introduce a graph homomorphism game between quantum graphs and classical graphs. This game is naturally cast as a “quantum–classical game,” that is, a non-local game of two players involving quantum questions and classical answers. This game generalizes the graph homomorphism game between classical graphs. We show that winning strategies in the various quantum models for the game is an analog of the notion of non-commutative graph homomorphisms due to Stahlke [IEEE Trans. Inf. Theory 62(1), 554–577 (2016)]. Moreover, we present a game algebra in this context that generalizes the game algebra for graph homomorphisms given by Helton et al. [New York J. Math. 25, 328–361 (2019)]. We also demonstrate explicit quantum colorings of all quantum complete graphs, yielding the surprising fact that the algebra of the four coloring game for a quantum graph is always non-trivial, extending a result of Helton et al. [New York J. Math. 25, 328–361 (2019)].},
    issn = {0022-2488},
    doi = {10.1063/5.0072288},
    url = {https://doi.org/10.1063/5.0072288},
    eprint = {https://pubs.aip.org/aip/jmp/article-pdf/doi/10.1063/5.0072288/16593548/112204\_1\_online.pdf},
}

@article{LTW08,
  title={Coherent state exchange in multi-prover quantum interactive proof systems},
  author={Debbie W. Leung and Ben Toner and John Watrous},
  journal={Chic. J. Theor. Comput. Sci.},
  year={2008},
  volume={2013},
  url={https://api.semanticscholar.org/CorpusID:1517639},
  doi={10.4086/cjtcs.2013.011}
}

@article{CJPP11,
  title={Rank-one quantum games},
  author={Tom Cooney and Marius Junge and Carlos Palazuelos and David P{\'e}rez-Garc{\'i}a},
  journal={computational complexity},
  year={2011},
  volume={24},
  pages={133-196},
  url={https://api.semanticscholar.org/CorpusID:6122906},
  doi={10.1007/s00037-014-0096-x}
}

@article{HT25,
title = {Homomorphisms of quantum hypergraphs},
journal = {Journal of Mathematical Analysis and Applications},
volume = {543},
number = {2, Part 1},
pages = {128907},
year = {2025},
issn = {0022-247X},
doi = {10.1016/j.jmaa.2024.128907},
url = {https://www.sciencedirect.com/science/article/pii/S0022247X24008291},
author = {Gage Hoefer and Ivan G. Todorov},
keywords = {Quantum no-signalling correlation, Quantum hypergraph, Homomorphism, TRO equivalence},
abstract = {We introduce quantum homomorphisms between quantum hypergraphs through the existence of perfect strategies for quantum non-local games, canonically associated with the quantum hypergraphs. We show that the relation of homomorphism of a given type satisfies natural analogues of the properties of a pre-order. We show that quantum hypergraph homomorphisms of local type are closely related, and in some cases identical, to the TRO equivalence of finite dimensionally acting operator spaces, canonically associated with the hypergraphs.}
}

@misc{HT22,
      title={Quantum hypergraph homomorphisms and non-local games}, 
      author={Gage Hoefer and Ivan G. Todorov},
      year={2022},
      eprint={2211.04851},
      archivePrefix={arXiv},
      primaryClass={math.OA},
      url={https://arxiv.org/abs/2211.04851}, 
}

@misc{CLTT23,
      title={Values of cooperative quantum games}, 
      author={Jason Crann and Rupert H. Levene and Ivan G. Todorov and Lyudmila Turowska},
      year={2023},
      eprint={2310.17735},
      archivePrefix={arXiv},
      primaryClass={quant-ph},
      url={https://arxiv.org/abs/2310.17735}, 
}

@article{Bus11,
  title={All entangled quantum states are nonlocal.},
  author={Francesco Buscemi},
  journal={Physical review letters},
  year={2011},
  volume={108 20},
  pages={200401},
  doi={10.1103/PhysRevLett.108.200401},
  url={https://api.semanticscholar.org/CorpusID:14393220}
}

@inproceedings{AC04,
  title={A categorical semantics of quantum protocols},
  author={Abramsky, Samson and Coecke, Bob},
  booktitle={Proceedings of the 19th Annual IEEE Symposium on Logic in Computer Science, 2004.},
  pages={415--425},
  year={2004},
  organization={IEEE},
  doi={10.1109/LICS.2004.1319636}
}

@article{Kor20,
    author = {Kornell, Andre},
    title = {Quantum sets},
    journal = {Journal of Mathematical Physics},
    volume = {61},
    number = {10},
    pages = {102202},
    year = {2020},
    month = {10},
    abstract = {A quantum set is defined to be simply a set of nonzero finite-dimensional Hilbert spaces. Together with binary relations, essentially the quantum relations of Weaver, quantum sets form a dagger compact category. Functions between quantum sets are certain binary relations that can be characterized in terms of this dagger compact structure, and the resulting category of quantum sets and functions generalizes the category of ordinary sets and functions in the manner of noncommutative mathematics. In particular, this category is dual to a subcategory of von Neumann algebras. The basic properties of quantum sets are presented thoroughly, with the noncommutative dictionary in mind, and with an eye to convenient application. As a motivating example, a notion of quantum graph coloring is derived within this framework, and it is shown to be equivalent to the notion that appears in the quantum information theory literature.},
    issn = {0022-2488},
    doi = {10.1063/1.5054128},
    url = {https://doi.org/10.1063/1.5054128},
    eprint = {https://pubs.aip.org/aip/jmp/article-pdf/doi/10.1063/1.5054128/16167920/102202\_1\_online.pdf},
}

@article{JMRW16,
  title={Extended non-local games and monogamy-of-entanglement games},
  author={Johnston, Nathaniel and Mittal, Rajat and Russo, Vincent and Watrous, John},
  journal={Proceedings of the Royal Society A: Mathematical, Physical and Engineering Sciences},
  volume={472},
  number={2189},
  pages={20160003},
  year={2016},
  publisher={The Royal Society Publishing},
  doi={10.1098/rspa.2016.0003}
}

@article{TFKW13,
  title={A monogamy-of-entanglement game with applications to device-independent quantum cryptography},
  author={Tomamichel, Marco and Fehr, Serge and Kaniewski, J{\textcommabelow{e}}drzej and Wehner, Stephanie},
  journal={New Journal of Physics},
  volume={15},
  number={10},
  pages={103002},
  year={2013},
  publisher={IOP Publishing},
  doi={10.1088/1367-2630/15/10/103002}
}

@article{TT24,
  title={Quantum no-signalling correlations and non-local games},
  author={Todorov, Ivan G and Turowska, Lyudmila},
  journal={Communications in Mathematical Physics},
  volume={405},
  number={6},
  pages={141},
  year={2024},
  publisher={Springer},
  doi={10.1007/s00220-024-05001-x}
}

@misc{Wea10,
      title={Quantum relations}, 
      author={Nik Weaver},
      year={2010},
      eprint={1005.0354},
      archivePrefix={arXiv},
      primaryClass={math.OA},
      url={https://arxiv.org/abs/1005.0354}, 
}

\newpage
\appendix
\section{Omitted proofs}
\label{sec:proofs}
\addtocontents{toc}{\setcounter{tocdepth}{-10}}

\subsection{CP maps are self-conjugate}\label{pf:CP_self_conj}
\begin{proof}[Proof of Proposition \ref{prop:CP_self_conj}]
Let $X, A$ be quantum sets and let $f:X \to A$ be completely positive. By definition, there is some $g:X\otimes A\to Q$ for some other quantum set $Q$ so that $f$ satisfies
\ctikzfig{CP\_defn}

But then we have
\ctikzfig{CP\_self\_conj}
\end{proof}

\subsection{Strategies realize correlations}\label{pf:corr_from_strat}
\begin{proof}[Proof of Remark \ref{rem:corr_from_strat}]

Let $\varphi$ be a combined strategy over a finite-dimensional $\cl{H}$ and let $\ket\psi \in \cl{H}$ be a normalized state. Let $P$ be the expectation of $\varphi$ with respect to $\ket\psi$, as in Definition \ref{defn:real}. We show that $P$ is a correlation.

$P$ inherits counitality directly from \eqref{eq:q_counital} for $\varphi$, using the fact that $\ket\psi$ is normalized. To see that $P$ is completely positive, we proceed diagrammatically, applying \eqref{eq:q_comult} and \eqref{eq:self_conj} to $\varphi$ and bending some wires for the second equality, and using the Frobenius equation for the third, to get
\ctikzfig{strat\_real\_corr}
Letting
\ctikzfig{sqrt\_strat}
we can see that $P$ satisfies Definition \ref{defn:CP}.

\end{proof}

\subsection{Nonsignalling marginals are channels}\label{pf:margs_channels}
\begin{proof}[Proof of Corollary \ref{cor:margs_channels}]

By definition of $P$ being nonsignalling, we have
\ctikzfig{P\_A}

We can see that $P_A$ is counital as follows:
\ctikzfig{P\_A\_counital}

To see complete positivity, let $f$ be a square root of $P$ as given in the definition of complete positivity for $P$. Then $g$ given as follows is a square root of $P_A$.
\ctikzfig{P\_A\_sqrt}

The approach is the same for $P_B$.
\end{proof}
\newpage
\subsection{Cauchy-Schwarz inequality for quantum functions}\label{pf:cs_ineq_qfunc}
\begin{proof}[Proof of Theorem \ref{thm:cs_ineq_qfunc}]
    \begin{enumerate}
        \item 
            We have
            \ctikzfig{pf\_cs\_ineq}

            where the first equality is obtained graphically as follows. The leftmost component on the left-hand side becomes the bottom component on the right-hand side, by pulling $E$ to the right of $F$ and letting the wires follow. The remaining component transforms by sliding both quantum functions $E$ and $F$ to the other side of the loop to get their respective daggers.
            
            The subsequent inequality is the Cauchy-Schwarz inequality \ref{lem:cs_ineq} with 
            \ctikzfig{cs\_ineq\_qfunc\_maps}

            But, given that $E$ and $F$ are quantum functions, we can further simplify this by computing
            \ctikzfig{equal\_qc\_pf\_same}
            using the properties of $E$ as a quantum function, the definition of the cup, and the fact that $X, A$ are special and $\ket \psi$ is normal.
            The same holds for $F$. Substituting these loops into the above inequality, we have our desired result.

            Now, let us examine the case of equality. Equality holds if and only if $g:= \alpha f$ for some $\alpha$ in $\mathbb{C}$. This is exactly the condition that
            \begin{align*}
            \tikzfig{E}     &= \alpha \left ( \tikzfig{F\_conj}     \right)_*
                            = \alpha \left ( \tikzfig{F\_trans} \right)^* \\
                            &= \alpha\ \tikzfig{F\_twisty} 
                            = \alpha \tikzfig{F}             
            \end{align*}
            Because $E$ is a quantum function, due to \eqref{eq:q_counital} we must have $\alpha =1$, resulting in equality if and only if
            \ctikzfig{equal\_qc}
           
        \item 
            Let $\cl H^*$ denote the linear dual of $\cl H$. Consider the map $E': \cl H \otimes \cl H^* \otimes X \to A \otimes \cl H \otimes \cl H^*$, given by 
            \ctikzfig{q\_func\_id}
            Observe that $E'$ is again a quantum function $X\to_{\cl H \otimes \cl H^*} A$, by noting that the added $\cl H^*$ wire does not interfere with the diagrammatic properties of $E$. Construct $F'$ in the same way from $F$. Observe that $(E',F')$ is again a synchronous strategy, by adding an $\cl H^*$ wire to the synchronicity equation for $(E,F)$, as pictured.
            \ctikzfig{pf\_qc\_id\_sync}
            Let $\ket \psi$ be the normalized cup state of $\cl H \otimes \cl H^*$ coming from the duality of $\cl H^*$ to $\cl H$. Then $(E',F')$ together with $\ket \psi$ generates a synchronous correlation, $P$. 

            We can now apply part (1) to get 
            \ctikzfig{pf\_cs\_ineq\_2}
            with equality if and only if
            \ctikzfig{equal\_qc\_id}
            or equivalently, by the snake equations,
            \ctikzfig{equal\_qc\_strat}
    \end{enumerate}
\end{proof}

\subsection{Structure of bistrategies}
\label{pf:bistrat_structure}

\begin{proof}[Proof of Proposition \ref{prop:bistrat_structure}]
     First note that because $\varphi$ is a bistrategy, it is both unital and counital. Applying all units and counits to the quantum set wires of $\varphi$, we can see that its input dimensions and output dimensions must be equal, that is $\dim A \dim B = \dim X \dim Y$, or graphically \tikzfig{circles\_eq}.
    \begin{itemize}
        \item
            If $\varphi$ is quantum commuting then we can write $\overline \varphi$ as follows:
            \ctikzfig{co\_qc\_is\_qc\_1}
            But then using the fact that $(E,F)$ is quantum commuting, we have
            \ctikzfig{co\_qc\_is\_qc\_2}
            so it looks like $\overline\varphi$ is a quantum commuting strategy for $\lambda^\dagger$ implemented by $(\overline E, \overline F)$, so long as $\overline E, \overline F$ are themselves quantum functions. We will first check if the operators commute. Indeed,
            \ctikzfig{co\_qc\_is\_qc\_3}

            Now, let's see that $\overline E$ is counital, as in \eqref{eq:q_counital}. By symmetry, the same argument applies for $\overline F$. We have
            \ctikzfig{E\_bar\_counital}
            with the steps justified as follows:
            \begin{enumerate}
                \item As shown at the beginning of this proof, because $\overline \varphi$ is a bistrategy, its input dimensions are equal to its output dimensions. We rewrite $\dim Y$ using specialness.
                \item Because $F$ is counital, $\overline F$ is unital.
                \item Counitality of $\overline \varphi$.
            \end{enumerate}
            Dividing by the output dimensions ($\dim A \dim B$), we get the counitality equation for $\overline E$, so long as $\dim X = \dim A$. Otherwise, we get counitality up to a factor of $\dim X / \dim A$. 

            Next, let us see that $\overline E$ (and $\overline F$ by symmetry) is comultiplicative, as in \eqref{eq:q_comult}.

            We have
            \ctikzfig{E\_bar\_comult}
            with the steps justified as follows:
            \begin{enumerate}
                \item Counitality of $\overline F$.
                \item Comultiplication of $Y$ is counital.
                \item Comultiplicativity of $\overline \varphi$, and contracting the $B$ wire.
                \item Counitality of $\overline F$, and specialness of $B$.
            \end{enumerate}
            Dividing by $\dim B$, we have that $\overline E$ is comultiplicative.

            Finally, realness \eqref{eq:self_conj} of $\overline \varphi$ follows from realness of $\varphi$. So $\overline \varphi$ is a quantum commuting strategy.
        \item
            If $\varphi$ is quantum tensor, implememented by some $(E,F)$, then it is quantum commuting with operators
            \ctikzfig{co\_qt\_is\_qt\_1}
            But then $\overline \varphi$ is quantum commuting and implemented by $(\overline{E'}, \overline{F'})$. These are given by
            \ctikzfig{co\_qt\_is\_qt\_2}
            and so $\overline\varphi$ is a quantum tensor strategy implemented by $(\overline E, \overline F)$.
        \item
            If $\varphi$ is local, then $\cl H = \mathbb{C}$ but also by the above, $\overline \varphi$ is quantum tensor and implemented by $\dagger$-cohomomorphisms $(E^\dagger, F^\dagger)$. So $\overline \varphi: A \otimes B \to X \otimes Y$ is local.
    \end{itemize}
\end{proof}

\subsection{The graph homomorphism game is well-defined}
\label{pf:graph_hom_game}

\begin{proof}[Proof of Proposition \ref{prop:graph_hom_game}]
    To see that $\lambda_{G\to H}$ is self-conjugate, note that it is defined by composing only self-conjugate maps.

    We will now see that $\lambda_{G\to H}$ satisfies the idempotency condition of Definition \ref{defn:game}. We have a sum of five terms, leading to twenty-five terms to compute. In order to make this easier, note that all of the terms are disconnected diagrams, so we may compute top and bottom halves separately. We may also take daggers of our computations.

    There are three different bottom-halves, so we begin by computing their coproducts, given in Table \ref{tbl:hom_coproducts}. The same holds by taking daggers for products of the top-halves of $\lambda_{G\to H}$.

    We use Table \ref{tbl:hom_comm_prod} to compute the twenty-five terms in the sum. Adding up the terms in the table, we get $\lambda_{G\to H}$, concluding the proof.
\end{proof}

\begin{table}[p]
    \caption{Coproducts of the bottom halves of $\lambda_{G\to H}$.}
    \label{tbl:hom_coproducts}
    \begin{center}
    \begin{tabular}{c|c c c}
        Coproducts      & \tikzfig{p\_G}    & \tikzfig{hump}    & \tikzfig{prongs}  \\
        \hline
        \tikzfig{p\_G}  & \tikzfig{p\_G}    & $0$               & \tikzfig{p\_G}    \\
        \tikzfig{hump}  & $0$               & \tikzfig{hump}    &  \tikzfig{hump}   \\
        \tikzfig{prongs}& \tikzfig{p\_G}    & \tikzfig{hump}    & \tikzfig{prongs}  \\
    \end{tabular}
    \end{center}
\end{table}

\renewcommand{\arraystretch}{4}
\begin{table}[p]
    \small
    \caption{Commuting products of the terms of $\lambda_{G\to H}$.}
    \label{tbl:hom_comm_prod}
    \begin{center}
    \begin{tabular}{c|c c c c c}
        \tikzfig{idem\_calculation} & \tikzfig{p\_H\_p\_G}      & \tikzfig{hump\_hump}      & \tikzfig{prongs\_prongs}      & $-$\tikzfig{prongs\_p\_G} & $-$\tikzfig{prongs\_hump}   \\
        \hline
        \tikzfig{p\_H\_p\_G}        & \tikzfig{p\_H\_p\_G}      & $0$                       &  \tikzfig{p\_H\_p\_G}         & $-$\tikzfig{p\_H\_p\_G}   & $0$                       \\
        \tikzfig{hump\_hump}        & $0$                       & \tikzfig{hump\_hump}      &  \tikzfig{hump\_hump}         & $0$                       & $-$\tikzfig{hump\_hump}     \\
        \tikzfig{prongs\_prongs}    & \tikzfig{p\_H\_p\_G}      & \tikzfig{hump\_hump}      &  \tikzfig{prongs\_prongs}     & $-$\tikzfig{prongs\_p\_G} & $-$\tikzfig{prongs\_hump}   \\
        $-$\tikzfig{prongs\_p\_G}   & $-$\tikzfig{p\_H\_p\_G}   & $0$                       &  $-$\tikzfig{prongs\_p\_G}    & \tikzfig{prongs\_p\_G}    & $0$                       \\
        $-$\tikzfig{prongs\_hump}   & $0$                       & $-$\tikzfig{hump\_hump}   & $-$\tikzfig{prongs\_hump}     & $0$                       & \tikzfig{prongs\_hump}     \\
    \end{tabular}
    \end{center}
\end{table}

\subsection{Shared operators give synchronous strategies}
\label{pf:E_copied_sync}

Let $E:X\to_{\cl H} A$ be a quantum function. Then the quantum strategy $E \otimes E_*$ realizes a synchronous correlation along with the normalized cup state of $\cl H \otimes \cl H^*$.

\begin{proof}
    We first show that the linear map given by composing $E$ with itself on the $\cl H$ wire is synchronous.
    We proceed diagramatically.
    \ctikzfig{E\_copied\_sync}

    Tracing out the Hilbert space wire on the above equation and then sliding the right copy of $E$ around the cap to get $E_*$ gives the (unnormalized) synchronicity condition for the correlation realized by $E \otimes E_*$ with the cup state.
\end{proof}

\subsection{The graph isomorphism game is well-defined}
\label{pf:graph_iso_defn}

We just need to check that $\lambda_{G\to H}$ and $\lambda_{H\to G}^\dagger$ commute. This follows from the fact that Table \ref{tbl:hom_comm_prod} is symmetric, as these are the building blocks of $\lambda_{G\to H}$ and $\lambda_{H\to G}^\dagger$. The commuting product $\lambda_{G\to H} \star \lambda_{H\to G}^\dagger$ can be computed in either order to give
\ctikzfig{qgraph\_iso\_game\_thick}

\subsection{Graph homomorphisms and associated games} \label{fig:graph_homs}
    We present the relationships between the quantum and classical graph homomorphism games, and quantum and classical homomorphisms of quantum and classical graphs. Our quantum graph homomorphism game adds a previously missing vertex to this diagram.
    \begin{figure}[h!]
    \centering
    \[\begin{tikzcd}
	&& {\textrm{hom game}} \\
	\\
	{\textrm{homs of graphs}} &&&& {\textrm{q-homs of graphs}} \\
	\\
	\\
	{\textrm{homs of q-graphs}} &&&& {\textrm{q-homs of q-graphs}} \\
	\\
	&& {\textbf{q-hom game}}
	\arrow["{\textrm{perfect det strategies}}"', curve={height=18pt}, dotted, from=1-3, to=3-1]
	\arrow["{\textrm{perfect q-correlations}}", curve={height=-18pt}, dotted, from=1-3, to=3-5]
	\arrow["{\textrm{restrict to classical resource}}"', curve={height=-12pt}, dotted, from=3-5, to=3-1]
	\arrow["{\textrm{restrict to classical graphs}}"{description}, curve={height=-12pt}, dotted, from=6-1, to=3-1]
	\arrow["{\textrm{restrict to classical graphs}}"{description}, curve={height=12pt}, dotted, from=6-5, to=3-5]
	\arrow["{\textrm{restrict to classical resource}}", curve={height=12pt}, dotted, from=6-5, to=6-1]
	\arrow["{\textrm{restrict to classical graphs}}"{description}, from=8-3, to=1-3]
	\arrow["{\textrm{perfect det strategies}}", curve={height=-18pt}, from=8-3, to=6-1]
	\arrow["{\textrm{(all?) perfect q-correlations}}"', curve={height=18pt}, from=8-3, to=6-5]
    \end{tikzcd}\]
    \end{figure}

\end{document}